\newcommand{\edge}[2]{#1 #2}
\newcommand{\NP}{\ensuremath{\mathcal{NP}}}
\def\O(#1){\ensuremath{\mathcal{O}(#1)}}
\journal{Theoretical Computer Science}
\newtheorem{theorem}{Theorem}
\newtheorem{lemma}{Lemma}
\newtheorem{corollary}{Corollary}
\newtheorem{definition}{Definition}
\def\squareforqed{\hbox{\rlap{$\sqcap$}$\sqcup$}}
\def\qed{\ifmmode\squareforqed\else{\unskip\nobreak\hfil
\penalty50\hskip1em\null\nobreak\hfil\squareforqed
\parfillskip=0pt\finalhyphendemerits=0\endgraf}\fi}
\newif\if@restonecol
\begin{document}

\begin{frontmatter}

\title{Fan-Crossing Free Graphs   and \\ Their Relationship to other Beyond-Planar Graphs\tnoteref{dfg}}
\tnotetext[dfg]{Supported by the Deutsche Forschungsgemeinschaft
(DFG),
  grant Br835/20-1}

\author{Franz J.\ Brandenburg}
\ead{brandenb@informatik.uni-passau.de}
\address{94030 Passau, Germany}

\begin{abstract}
A graph is \emph{fan-crossing free} if it has a drawing in the plane
so that each edge is crossed by independent edges, that is the
crossing edges have distinct vertices. On the other hand, it is
\emph{fan-crossing} if the crossing edges have a common vertex, that
is they form a fan. Both are prominent  examples for   beyond-planar
graphs. Further well-known beyond-planar classes  are the
$k$-planar, $k$-gap-planar, quasi-planar, and right angle crossing
graphs.

We use  the subdivision,  node-to-circle expansion and path-addition
operations to distinguish all these graph classes. In particular, we
show that the 2-subdivision and the node-to-circle expansion of any
graph is fan-crossing free, which does not hold for fan-crossing and
$k$-(gap)-planar graphs, respectively.
 Thereby, we obtain graphs that are
 fan-crossing free and neither fan-crossing nor $k$-(gap)-planar.

 Finally, we show that some graphs have a  unique
 fan-crossing free embedding, that there are thinned  maximal fan-crossing
free graphs, and  that the recognition problem for fan-crossing free
graphs is NP-complete.

\end{abstract}

\begin{keyword}
 topological graphs \sep graph drawing  \sep  beyond-planar graphs
\sep fan-crossing  \sep fan-crossing free \sep graph operations
\end{keyword}

\end{frontmatter}

\section{Introduction}

 We consider graphs $G$ 
 that are \emph{simple} both in a graph theoretic
and in a topological sense. Thus there are no multi-edges or loops,
adjacent edges do not cross, and two edges cross at most once in a
drawing. Graphs are often defined by particular properties of a
drawing. The planar graphs, in which  edge crossings are excluded,
are the most prominent example. There has been  recent interest in
the study of   \emph{beyond-planar graphs}
\cite{dlm-survey-beyond-19, ht-beyond-book-20,klm-bib-17}, which are
generally defined by drawings with specific restrictions on
crossings.
%

A graph is $k$-\emph{planar} if it has a drawing in the plane so
that each edge is   crossed by at most $k$ edges. It is
\emph{fan-crossing free} if the crossing edges are independent,
i.e., they  have distinct vertices,
 and \emph{fan-crossing} if the crossing edges have a common
vertex, i.e., they form a fan.
%
%
  A drawing of a graph is
$k$-\emph{quasi-planar} if $k$ edges do not mutually cross.
3-quasi-planar graphs are called quasi-planar, see
Fig.~\ref{fig:grids}. The aforementioned graphs can also be defined
by first order logic formulas \cite{b-FOL-18} and in terms of an
avoidance of (natural and radial) grids \cite{afps-grids-14}. A
drawing is $k$-\emph{gap-planar} \cite{bbc-1gap-18} if each crossing
is assigned to an edge involved in the crossing so that  at most $k$
crossings are assigned to each edge. Then every $2k$-planar drawing
is $k$-gap-planar \cite{bbc-1gap-18}. These properties are
topological. They hold for embeddings, which are equivalence classes
of topologically equivalent drawings.
 Right angle crossing  (RAC) is  a geometric property, in which the edges are drawn straight line
and may cross at a right angle \cite{del-dgrac-11}.
%

\begin{figure}[t]
  \centering
  \subfigure[]{
    \includegraphics[scale=0.25]{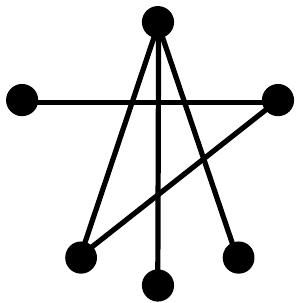}
    }
    \hspace{5mm}
    \subfigure[]{
        \includegraphics[scale=0.25]{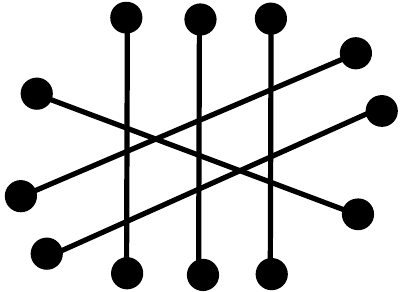}
} \hspace{5mm}
 \subfigure[]{
        \includegraphics[scale=0.25]{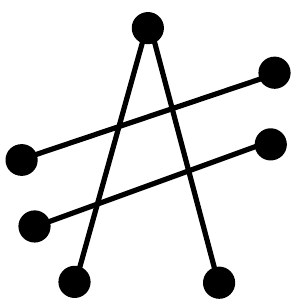}
        }
        \hspace{5mm}
        \subfigure[]{
        \includegraphics[scale=0.25]{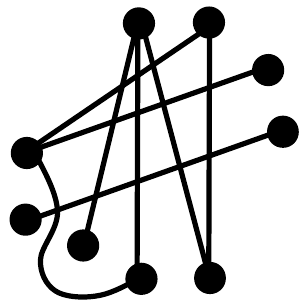}
  }
\caption{(a) A fan crossing, (b) crossings of independent edges (c)
a 2-planar crossing, and (d) a quasi-planar crossing
  }
  \label{fig:grids}
\end{figure}

 The classes of $k$-planar, fan-crossing free,
fan-crossing, quasi-planar, $k$-gap-planar, and right-angle crossing
graphs  
are denoted by $k$\textsc{PLANAR},
\textsc{FCF}, \textsc{FAN},
\textsc{QUASI},  $k$\textsc{GAP}, and \textsc{RAC},  
 respectively. Then 0\textsc{PLANAR}$=$ 0\textsc{GAP} is the class of
 planar graphs.
These and other classes have been studied with different intensity
and depth. In particular, the \emph{density}
\cite{dlm-survey-beyond-19, ht-beyond-book-20}, which is  the
maximal number of edges of $n$-vertex graphs, and the size of the
largest complete (bipartite) graph  \cite{abks-beyond-Kn-19} are
well explored, see Table~\ref{table1} and Fig.~\ref{fig:inclusion}.
 It has been shown that the recognition problem is
NP-complete for 1-planar \cite{GB-AGEFCE-07, km-mo1ih-13},
fan-crossing (fan-planar) \cite{bddmpst-fan-15}, 1-gap-planar
\cite{bbc-1gap-18}, and RAC graphs \cite{abs-RACNP-12}, whereas a
proof for the NP-completeness is still missing for $k$-planar,
$k$-gap-planar, fan-crossing free, and quasi-planar graphs if $k
\geq 2$ \cite{dlm-survey-beyond-19}.

Consider relationships among the aforementioned graph classes.
   Fan-crossing
and fan-crossing free are complementary on edges that are crossed at
least twice   \cite{b-FOL-18}. Both admit edges that are uncrossed
or are crossed once. Hence, every 1-planar graph is both
fan-crossing and fan-crossing free, but not conversely, since there
are graphs that are fan-crossing and fan-crossing free, but   not
1-planar \cite{b-fan+fcf-18}. There are fan-crossing graphs that are
not fan-crossing free, since the density of fan-crossing free graphs
is $4n-8$ \cite{cpkk-fan-15}, whereas there are $n$-vertex
fan-crossing graphs with $5n-10$ edges \cite{b-fan-20,ku-dfang-14}.
  Particular examples for fan-crossing and non-fan-crossing
free graphs are $K_7$ and $K_{p,q}$ with $p=3,4$ and $q \geq 7$
\cite{abks-beyond-Kn-19}. We prove the opposite direction, which was
open so far \cite{dlm-survey-beyond-19}. In particular, we show that
the 2-subdivision of $K_n$ for $n \geq 12$ and $d$-dimensional
cube-connected cycles $CCC^{d}$ for $d \geq 11$ are fan-crossing
free   but neither fan-crossing  nor 1-gap-planar. In addition, for
every $k \geq 0$, there are  graphs that are RAC,  and thus
fan-crossing free, but not $k$-gap-planar, and thus not $2k$-planar.

 Our tools,  are the  subdivision, node-to-circle expansion  and
path-addition operations. An $s$-\emph{subdivision} expands an edge
into a path of length   $s+1$. In other terms the edge has  $s$
bends.
In an $s$-subdivision of graph $G$, each edge of graph $G$ has a $j$-subdivision for $j \leq s$.
For subdivision of  $G$, let  $s$ be the number of edges.
 A \emph{node-to-circle expansion}
replaces each vertex $v$ of degree $d$ by a circle of length $d$, so
that each edge incident to $v$ is inherited by a vertex of the
circle. A \emph{path-addition} adds a (long) path between any two
vertices of  graph $G$. These operations distinguish the above graph
classes, so that some are closed and others are not. Even more, some
classes are \emph{universal} for $s$-subdivision ($s \leq 3$) and
node-to-circle expansion, respectively, that is the image of any
graph  is in the particular class.

  Many more classes of beyond-planar graphs have been studied, see
\cite{dlm-survey-beyond-19,ht-beyond-book-20,klm-bib-17}, some of
which are relevant for this paper. A 1-planar drawing is
IC-\emph{planar} \cite{a-cnircn-08, bbhnr-NIC-17, bdeklm-IC-16} if
each vertex is incident to  at most one crossed edge. It is
\emph{fan-planar} \cite{bcghk-rfpg-17, bddmpst-fan-15, ku-dfang-14}
if it is fan-crossing and excludes crossings of an edge from both
sides, called \emph{configuration II} in \cite{ku-dfang-14}, see
Fig.~\ref{fig:fanplanar}(c). Configuration II is a
 restriction, since there are fan-crossing graphs that are
not fan-planar \cite{b-fan-20}. However, it has no impact on the
density \cite{b-fan-20} and the results on fan-planar graphs proved
in \cite{abks-beyond-Kn-19, bcghk-rfpg-17,bddmpst-fan-15}.\\

Fan-crossing free graphs were introduced  by Cheong et
al.~\cite{cpkk-fan-15}. They focus on the  density of
($k$-)fan-crossing free graphs. Complete and complete bipartite
graphs were studied by Angelini et al.~\cite{abks-beyond-Kn-19}. The
state of the art is as follows, see also
\cite{dlm-survey-beyond-19}.

\begin{enumerate}
 \item Every $n$-vertex fan-crossing free graph has at most $4n-8$
 edges  and the bound is tight.
\item Every fan-crossing free drawing of a graph with $4n-8$ edges is 1-planar.
\item Every fan-crossing free graph with a drawing with straight line edges has at
most $4n-9$ edges and the bound is tight.
 \item 1-planar graphs and right angle crossing (RAC) graphs are
 fan-crossing free.
 \item $K_6$ is fan-crossing free, whereas $K_7$ is not.
 $K_{p,q}$ is fan-crossing free if and only if $p \leq 2$ or $p \leq
 4$ and $q \leq 6$.
\end{enumerate}

For fan-planar graphs, the following has been proved
\cite{abks-beyond-Kn-19, bcghk-rfpg-17, bddmpst-fan-15, b-fan-20,
ku-dfang-14}.  The shown results also hold for fan-crossing graphs,
since the restriction from configuration II is not used in the
proofs or it does not matter.

\begin{enumerate}
 \item Every $n$-vertex fan-crossing  graph has at most $5n-10$
 edges  and the bound is tight.
 \item Every fan-crossing graph is quasi-planar.
\item The fan-crossing graphs and the 2-planar and 1-gap-planar graphs, respectively,  are
incomparable.
\item The recognition problem for fan-crossing graphs is NP-complete.
\item $K_7$ is fan-crossing, whereas $K_8$ is not.
 $K_{p,q}$ is fan-crossing if and only if $\min\{p,q\} \leq 4$.
\end{enumerate}

Similar facts are known for 1-planar   \cite{klm-bib-17}, RAC
  \cite{del-dgrac-11, el-racg1p-13}, 1-gap-planar
\cite{bbc-1gap-18}, and quasi-planar graphs \cite{at-mneqpg-07,
aapps-qpg-97, abbddd-quasiplanar-20}, respectively, as surveyed in
\cite{dlm-survey-beyond-19}.\\

 In this work, we add some more facts on   fan-crossing free graphs,
 which  demonstrate the power of these graphs. Some of the results come
 as expected, for example the NP-hardness, which is  stated as an open problem  in
\cite{dlm-survey-beyond-19}. As our main contribution, we  study
three graph operations, which are used to distinguish fan-crossing
free graphs from other well-known classes of beyond-planar graphs
including fan-crossing graphs.
 In particular, we prove the following:

\begin{enumerate}
\item
The 2-subdivision and the node-to-circle expansion,
respectively, of any graph is fan-crossing free, so that the
fan-crossing free graphs are universal for these operations. The
quasi-planar graphs  are universal for 1-subdivision and
node-to-circle expansion.
The $k$-planar, $k$-gap-planar and   fan-crossing graphs are
 universal for $O(n^2)$-subdivision. The $k$-planar,
 $k$-gap-planar, and fan-crossing  graphs are not universal
 for node-to-circle expansion and for $f(n)$-subdivision if $f(n) \in o(n^2)$ and $f(n) \in o(n   / \log^2n)$
for fan-crossing graphs.
\item
The fan-crossing free, quasi-planar, $k$-gap-planar ($k \geq 1$) and RAC graphs are closed under
path-addition, whereas the fan-crossing graphs and the $j$-planar
graphs ($j \leq 2$) are not.
\item
For every $k \geq 0$ there are fan-crossing free (and even RAC)
graphs  that are not fan-crossing,   $k$-planar, and   $k$-gap
planar, respectively.
\item
The cliques $K_5$ and $K_6$   and some other graphs have a
  unique fan-crossing free embedding.
  \item There are thinned maximal   fan-crossing free graphs with only
 $3.5n-8.5$ edges.
 \item Recognizing fan-crossing free graphs is NP-complete.
\end{enumerate}

The term ``beyond-planar'' is commonly used for a collection of
graph classes $\mathcal{G}$ that extend the planar graphs and are
defined by specific properties of crossings in drawings
\cite{dlm-survey-beyond-19}. These classes have some properties in
common, such as
\begin{enumerate}
  \item [(i)] every planar graph is in $\mathcal{G}$
  \item [(ii)]the density of graphs in $\mathcal{G}$ is   linear
    (up to a poly-logarithmic factor)
  \item [(iii)] $\mathcal{G}$ is universal for subdivision.
\end{enumerate}

These properties are fulfilled by the aforementioned ``major''
beyond-planar graph classes and by almost all graph classes listed
in \cite{dlm-survey-beyond-19}. Graphs that are defined by
generalized visibility  representations, such as bar-$k$
\cite{ekllmw-b1vg-14} and shape visibility \cite{b-Tshape-18,
hsv-rstg-99, lm-Lvis-16} satisfy (i)-(iii), as well as
 graphs with bounded
thickness \cite{deh-gtcg-00, hsv-rstg-99}, bounded stack number (or
book thickness $k$) \cite{bk-btg-79, dw-sqt-05, y-epg4p-89}
 bounded queue number
\cite{dw-sqt-05, djmmuw-queue-20},  mixed stack queue graphs \cite{dw-sqt-05},
and multiple deque layouts \cite{abbbg-deque-18}. In all these
cases, the graphs have a visual representation, which may use an
edge coloring and a linear vertex ordering.

The  properties  exclude graphs with bounded genus and minor-closed
graph classes \cite{d-gt-00}, since they fail on (iii), whereas the
sets of all  graphs and of all non-planar graphs are excluded by
(ii). From the collection of graph classes listed in the
beyond-planar survey \cite{dlm-survey-beyond-19},  skewness-$k$ and
apex-$k$-graphs are excluded, which are such that the removal of at
most $k$ edges and vertices, respectively, makes a
planar graph. \\

\begin{table}[t]
\centering
\begin{tabular}{ l |       c |       c     |   c              |   c          | c
|   c}
             &               &      closed    & $s$-subdiv.       &    closed      &  n2c    &  closed\\
 graph class &   density   &     subdiv.   & univ.                 &   n2c         & univ.   & path-add.\\
  \hline
  planar       & $3n-6$         &     +   &  --                    &     +          & \, --  & --\\
  1-planar     & $4n-8$         &     +   &  $\Theta(n^2)$         &     +          & \, --  & --\\
   1-gap-planar & $5n-10$       &     +   &  $\Theta(n^2)$         &     +          & \, --  & +\\
  fan-crossing & $5n-10$       &      +   &  $\Omega(n/ \log^2n)$ &     ?          & \, --  & --\\
fan-crossing free & $4n-8$     &      +   &  $\leq 2$              &     +          & \, +   & +\\
quasi-planar & $6.5n-20$       &      +    &     1                 &     +          & \, +   & + \\
RAC          & $4n-8$          &      +    &     3                 &     +          & \, ?  & +\\
  \hline
\end{tabular}
 \caption{Properties of some classes of beyond-planar graphs, their
 density, closure under subdivision, node-to-circle  expansion (n2c),
 and path-addition, respectively, and universality for
 subdivision and node-to-circle   expansion.
 Here   ``+'' means yes, ``$-$'' means no, and ``$\leq s$'' says that
 $s$-subdivisions suffice and  $s-1$-subdivision is open. We conjecture ``$-$'' for ``?''..
 }
  \label{table1}
\end{table}

The closure properties of major classes of beyond-planar graphs are
summarized in Table~\ref{table1}. Inclusion relations   and
the containment of special graphs are displayed in
Fig.~\ref{fig:inclusion}.\\

\begin{figure}[t]
  \centering
    \includegraphics[scale=0.7]{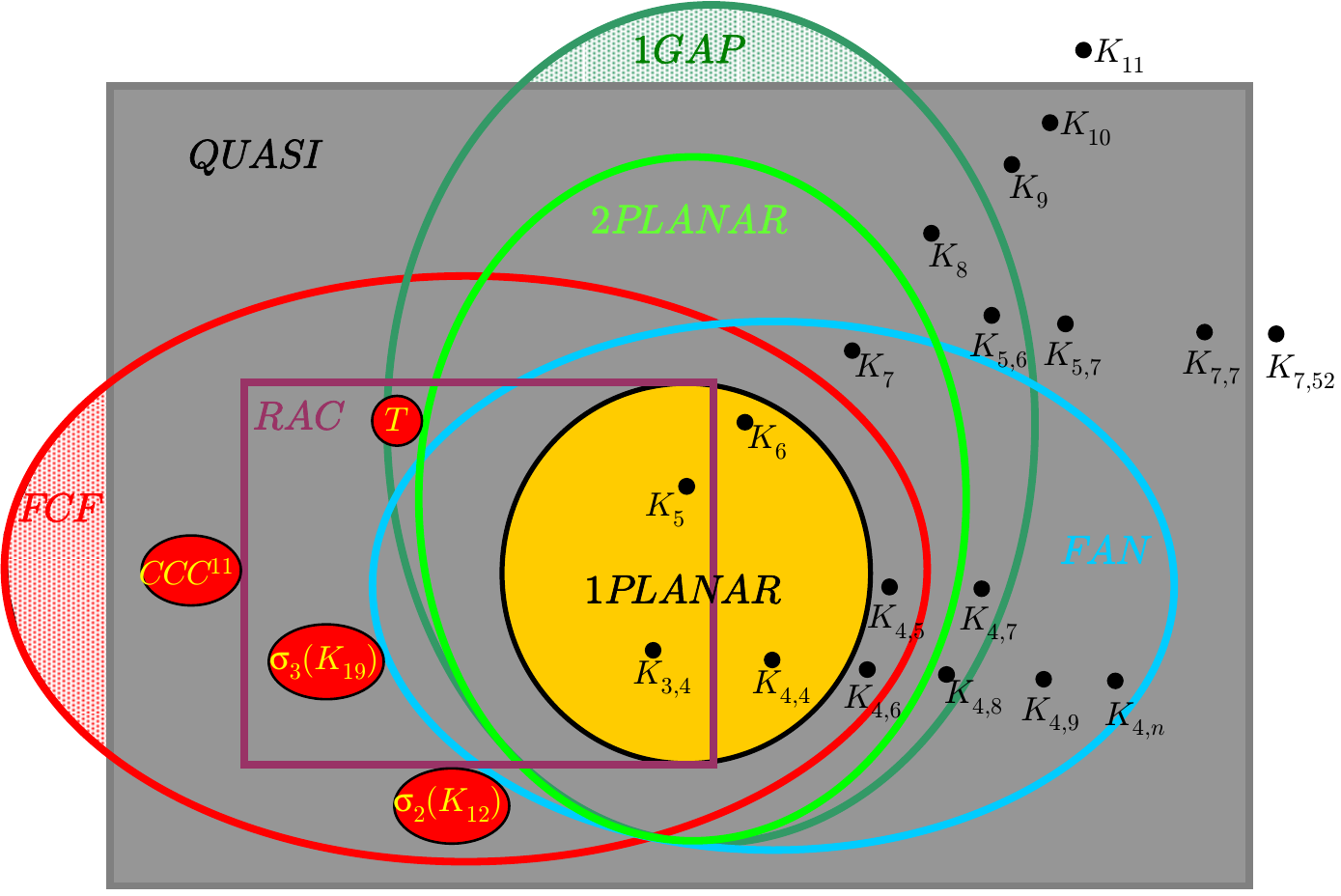}
  \caption{
  The inclusion diagram shows proper inclusions and
  incomparabilities among major beyond-planar graph classes,
  and the containment  of  complete (bipartite) graphs \cite{abks-beyond-Kn-19}.
  It is open, whether there are graphs in \textsf{FCF} and
  1\textsf{GAP}, respectively, that are not in \textsf{QUASI}.
  We contribute cube-connected cycles, in particular $CCC^{11}$,
  2- and 3-subdivisions of complete graphs, namely $\sigma_2(K_{12})$
  and $\sigma_3(K_{19})$, and the tiles-graph $T$
  (Thm.~\ref{thm:tiles}). A graph is not contained in a
  graph class if it is outside its boundary. It may be inside (unlikely) if it
  touches the boundary, such as  graph $T$ for 2\textsf{PLANAR}  and 1\textsf{GAP}.
}
 \label{fig:inclusion}
\end{figure}

The rest of the paper is structured as follows: In
Section~\ref{sect:operations}, we study three graph operations and
introduce   universality of a graph class for an operation. In
Section~\ref{sect:inclusion}, we establish  incomparabilities
between some classes of beyond-planar graphs that are unknown so
far. Finally, in Section~\ref{sect:properties}, we study properties
of fan-crossing free graphs and show that some graphs have a unique
fan-crossing free embedding, that there are thinned  maximal
fan-crossing free graphs, and that the recognition problem is
NP-complete.

\section{Graph Operations and Universality} \label{sect:operations}

Planar subgraphs and   planarization play an important role for the
study of  many classes of beyond-planar graphs
\cite{dlm-survey-beyond-19}. This may indicate that beyond-planar
graphs are close to planar graphs, i.e., they are ``nearly planar''
\cite{ekllmw-b1vg-14}. A major distinction comes from subdivisions.
Clearly, every graph $G$ has a subdivision   that   is 1-planar. On
the other hand, a graph is planar if and only if its subdivision is
planar.

\begin{definition}
A class of graphs $\mathcal{G}$ is \emph{universal} for a graph
operation $f$, or simply $f$-universal,  if for every graph $G$
there is a graph $G' \in \mathcal{G}$ such that $G' = f(G)$.
\end{definition}

Clearly, the removal of all edges transforms any graph into a
discrete graph, so that the set of discrete graphs is universal for
edge removal. Similarly, the set of complete graphs is universal for
edge insertion. A graph class $\mathcal{G}'$  is universal for $f'$
if $\mathcal{G} \subseteq \mathcal{G}$',  $f'$ extends $f$, and
$\mathcal{G}$ is universal for $f$.

Universality provides a new perspective on the set of   graphs.
Every  graph is   projected into a special class $\mathcal{G}$ by an
operation $f$, so that \textsf{GRAPHS} = $f^{-1}(\mathcal{G})$,
where \textsf{GRAPHS} denotes the set of   simple graphs. Thus all
graphs look like graphs from
$\mathcal{G}$ if they are seen through the lens of  $f$. \\

For our further studies, box-visibility representations of graphs
and the   number of crossings will be useful.
 In a \emph{box-visibility representation}, the
vertices of a graph are represented as  boxes, which are axis
aligned rectangles. Each edge is an orthogonal polyline with
horizontal and vertical segments between the borders of two boxes,
so that segments do not traverse other boxes. Edges incident to the
same box do not cross. There is a \emph{rectangle visibility
representation} \cite{hsv-rstg-99} if each edge  consists of a
 horizontal or a vertical segment, and a \emph{special box-visibility
representation} if each edge consists of a vertical and a horizontal
segment with a  bend to the right in between. It is well-known that
every graph has a special box-visibility representation. The idea
dates back to the early 1980th and Valiant's representation of
graphs of degree at most four \cite{v-ucVLSI-81}. Simply, place the
boxes for the vertices on the main diagonal and route each edge
$\edge{u}{v}$ by a vertical segment from the top of the  box of $u$
and a horizontal segment to the left side of the box of $v$. Later
studies have focussed on box-visibility representation with small
area and few bends \cite{bk-aesigd-97, bmt-3pm-00,pt-aeod-98}.
%
%
Box-visibility representations  have been used to show that every
graph has a 3-bend RAC drawing \cite{del-dgrac-11}. There is an
alternative 3-bend RAC drawing, in which the vertices are placed on
a horizontal line and the second and third segments are drawn with
slope $\pm 1$.\\

 Ajtai et al.~\cite{acns-cn-82} and, independently, Leighton \cite{l-VLSI-83} discovered
 that the \emph{crossing number} $cr(G)$  of any graph $G$ with $n$ vertices and $m > 4n$
 edges is at least $cm^3/n^2$ for some   $c>0$. This fact is known as \emph{Crossing Lemma}.
 The currently best bound is $c=1/29$
if $m \geq 6.95 n$  \cite{a-cn-19}. In consequence, complete
 graphs $K_n$ have $\Omega(n^4)$ many crossings. The Gay or Harary-Hill
 conjecture states  that $cr(K_n) = \frac{1}{4} \lfloor
 \frac{n}{2} \rfloor \lfloor  \frac{n-1}{2} \rfloor \lfloor  \frac{n-2}{2} \rfloor \lfloor
 \frac{n-3}{2} \rfloor$, which has been proven for $n \leq 12$
 \cite{pr-crK11-07}.
 Hence, any drawing of $K_{12}$ has at least 150 crossings.
 Similarly, there are upper and lower bounds on the
 crossing numbers of complete bipartite graphs \cite{k-cn-70},
 hypercubes and cube-connected cycles, where
 a lower bound is $\frac{1}{20} 4^d - (9d+1)2^{d-1}$ for the
 $d$-dimensional cube-connected cycle \cite{sv-cnCCC-93}.

\subsection{Node-to-Circle Expansion}

A \emph{node-to-circle expansion} substitutes each vertex $v$ of
degree $d$  by   a  circle $C(v)$ of length $d$, so that each vertex
has degree three and each edge incident to $v$ is inherited by a
vertex of the circle. If there is a rotation system with the cyclic
ordering of the edges incident to $v$, then this ordering is
preserved by the node-to-circle expansion. Thereby edges do not
cross if they are incident to vertices of a circle, since graphs are
simple. There are \emph{inner edges}  between consecutive vertices
of a circle and \emph{binding edges} between vertices of different
circles, which are one-to-one related to the original edges.
A  \emph{node-to-box expansion} is the special case, in which the
edges of each circle are uncrossed.
We denote the node-to-circle expansion of a graph $G$ by $\eta(G)$,
which  is a 3-regular (cubic) graph with $2m$ vertices and $3m$
edges   if $G$ has $m$ edges and no vertices of degree one or two.

 A node-to-circle expansion is a special split  operation, in
which each vertex is replaced by a subgraph $H$, so that the
vertices of $H$ inherit all adjacencies from the vertex. In our
case, $H$ is a circle. Graph $H$ is a discrete graph in the
$k$-split operation from \cite{EKLLLMMVWW-split-18}, where the
objective is to transform a graph into a planar one by as few
$k$-split operations as possible.

Node-to-circle expansions   have been used in VLSI theory to
transform a hypercube  into a cube-connected cycle \cite{l-paa-92}.
The $d$-dimensional hypercube $H^d$  consists of $2^d$ vertices
denoted by $d$-digit binary numbers. There is an edge if and only if
the Hamming distance between the binary numbers is one. The
$d$-dimensional cube-connected cycle $CCC^d$ is obtained from $H^d$
by beveling each corner, so that  $CCC^d$ is the node-to circle
expansion of $H^d$, see Fig.~\ref{fig:CCC4}.
 It has $d2^d$ vertices of degree three.
Hypercubes and cube-connected cycles have similar properties, such
as the crossing number \cite{sv-cnCCC-93}, diameter and  separation
width \cite{l-paa-92}.

The node-to-circle expansion operation preserves some graph
properties, such as 3-connectivity and  genus, and simplifies
others, such as  crossings. If $G$ has crossing number $k$, then
 $\eta(G)$ has crossing number at
most $k$. For example, $K_{10}$ has 60 crossings and
  is not 4-planar \cite{abks-beyond-Kn-19}, whereas
 $\eta(K_{10})$ has at most 45 crossings and is 4-planar,
as Fig.~\ref{fig:K10-n2c} shows. The complete bipartite graphs
$K_{4,n}$ are fan-crossing but are not $k$-planar for $k=1,2,3,4$
and $n=5,7,10,19$, respectively \cite{abks-beyond-Kn-19}. Moreover,
graph $K_{4,7}$ is not fan-crossing free. However, the
node-to-circle expansion of $K_{4,n}$ is 1-planar, as
Fig.~\ref{fig:K4n-node-to-circle} illustrates, and it has fewer
crossings than $K_{4,n}$.
 Clearly, the crossing numbers of $G$ and $\eta(G)$
coincide if the inner edges of $\eta(G)$ are uncrossed, i.e. in a
node-to-box expansion.

\begin{figure}[t]
  \centering
  \subfigure[]{
    \includegraphics[scale=0.7]{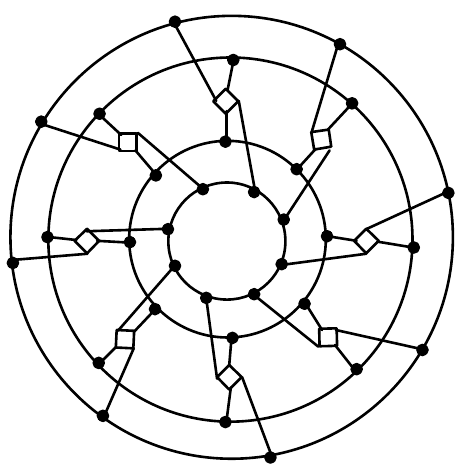}
    \label{fig:K4n-node-to-circle}
    }
    \hspace{2mm}
 \subfigure[]{
        \includegraphics[scale=0.5]{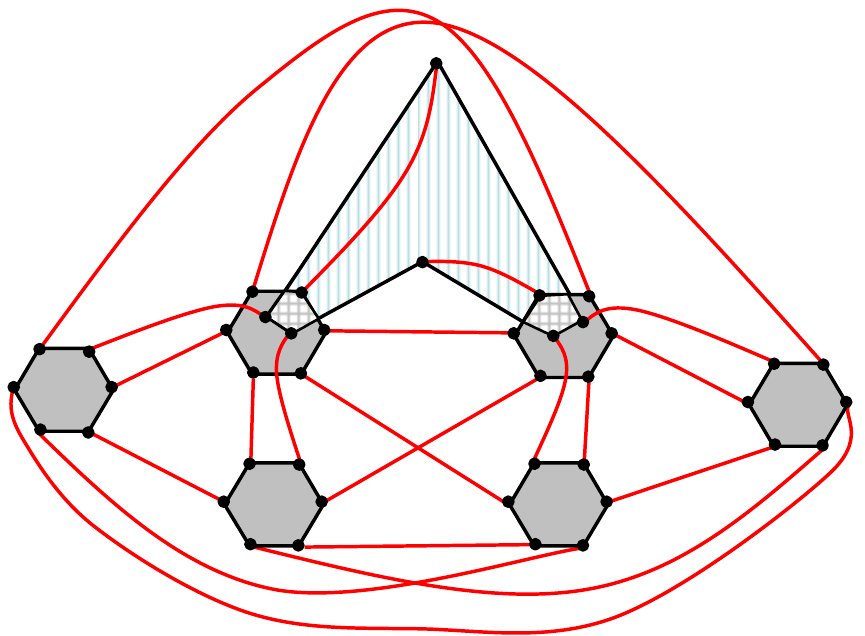}
\label{fig:K7-node-to-circle}
        }
\caption{A 1-planar drawing of the node-to-circle expansion of (a)
$K_{4,n}$ and (b) $K_7$.
In (b) 
the topmost circle (striped) intersects two other circles.
  }
  \label{fig:nodeexp}
\end{figure}

\begin{figure}[t]
  \centering
  \subfigure[]{
    \includegraphics[scale=0.2]{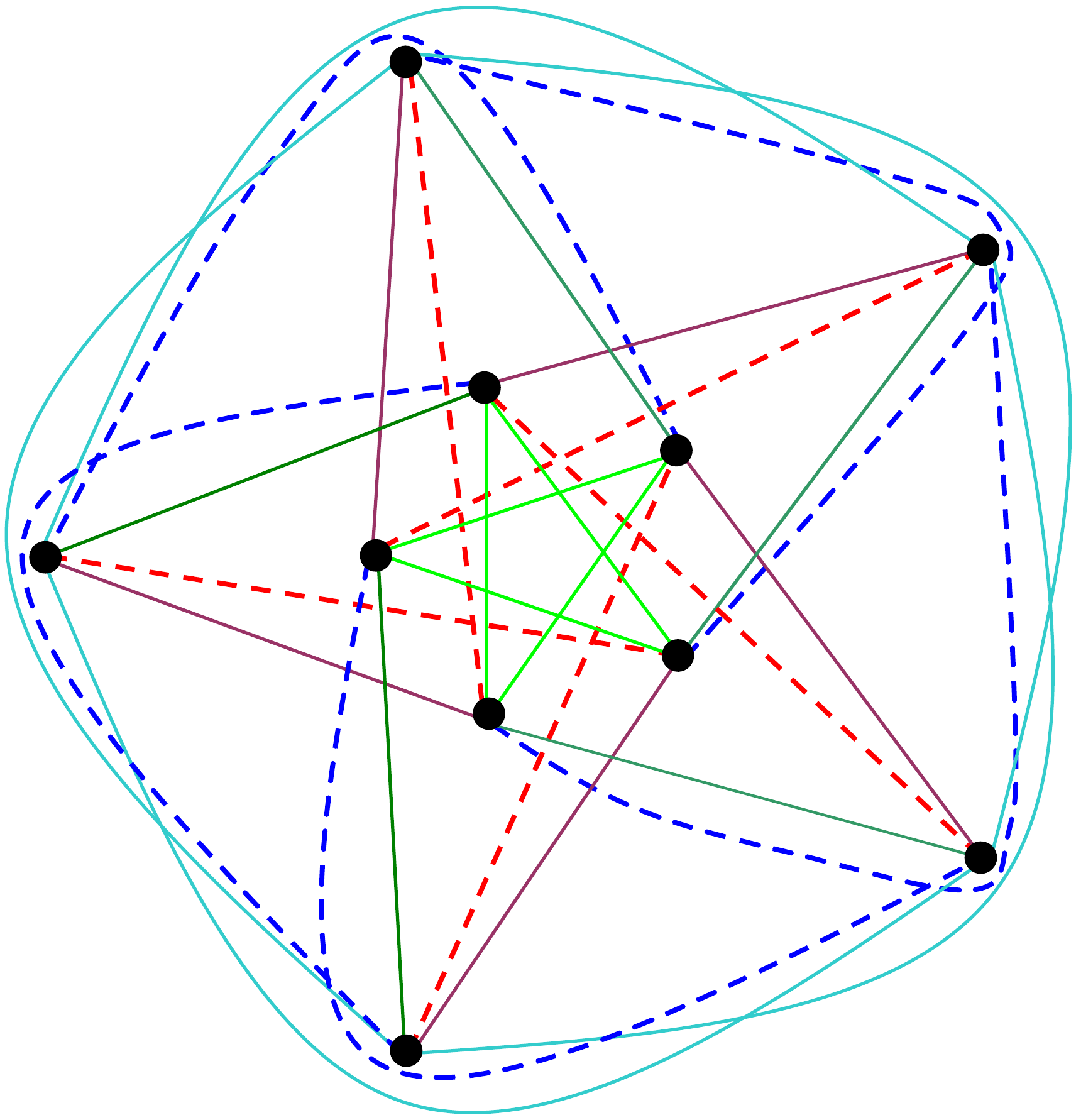}
    \label{fig:K10-g}
    }
    \hspace{4mm}
 \subfigure[]{
        \includegraphics[scale=0.2]{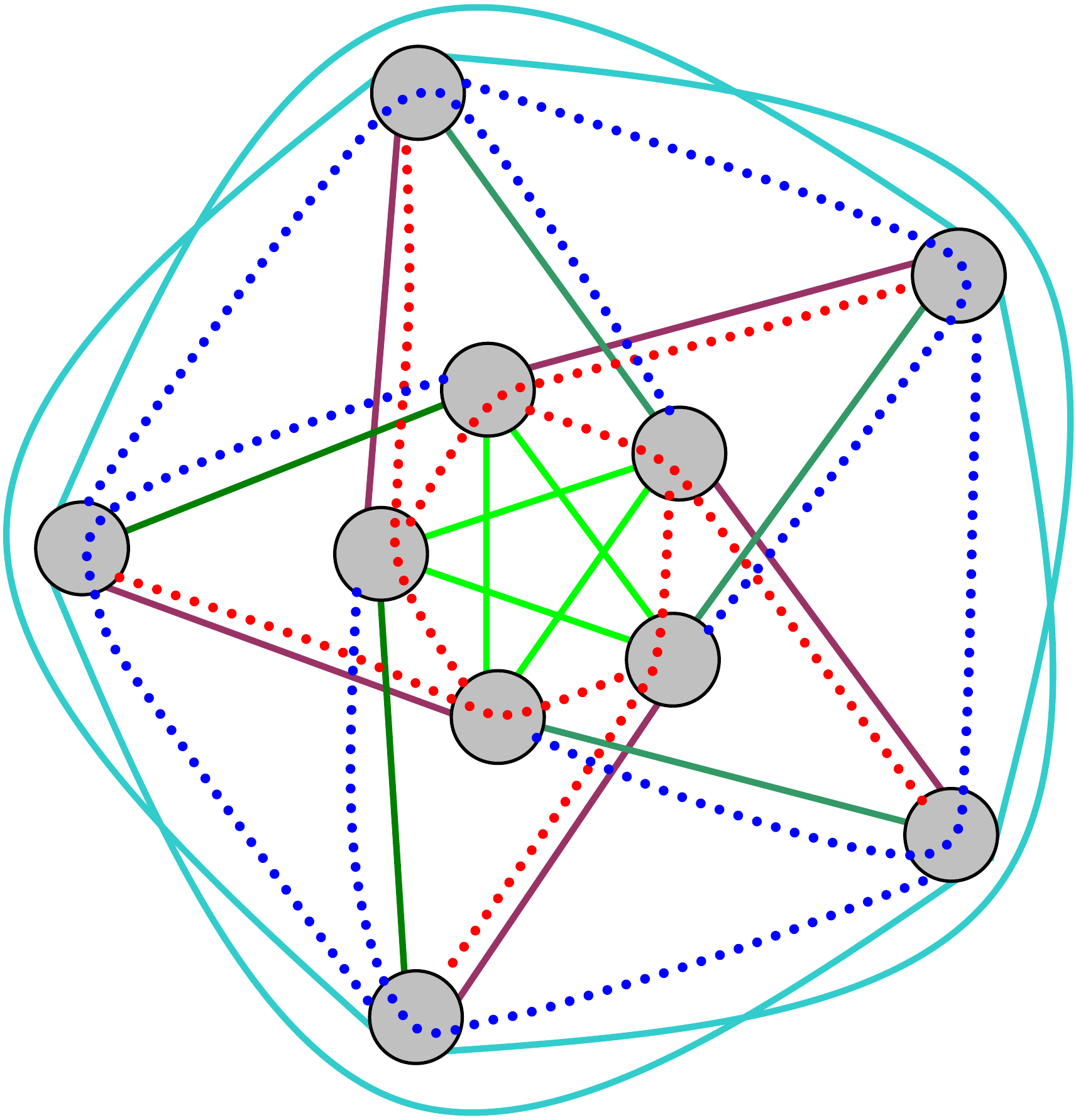}
\label{fig:K10-n2c}
        }
\caption{Quasi-planar drawings of (a) $K_{10}$ and (b)
$\eta(K_{10})$, where edges from $K_{10}$, drawn dashed and red and
blue, are drawn dotted and red and blue in $\eta(K_{10})$. These
 edges are crossed six times in   $K_{10}$, whereas the binding edges in $\eta(K_{10})$ are
crossed at most four times, including two crossings on the boundary
of the traversed circle, drawn as a big node. }
  \label{fig:K10}
\end{figure}

\begin{lemma} \label{lem:closure-node-to-circle expansion}
The node-to-circle expansion $\eta(G)$ of a graph $G$ is   planar if
and only if $G$ is planar. Similarly, $\eta(G)$ is fan-crossing free
($k$-planar, $k$-gap-planar, quasi-planar, RAC) if so is $G$.
\end{lemma}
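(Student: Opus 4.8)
The plan is to reduce everything to one local construction and then read off each property. Starting from any drawing $D$ of $G$, I would build a drawing $D'$ of $\eta(G)$ by fixing, for every vertex $v$ of degree $d$, a small disk $B_v$ of radius $\varepsilon$ about the point of $v$, with $\varepsilon$ chosen so small that the disks are pairwise disjoint and no edge of $G$ that misses $v$ meets $B_v$ (possible since $v$ is a point and the finitely many other edges avoid it). Inside $B_v$ I draw the circle $C(v)$ as a tiny convex polygon whose $d$ corners appear in the cyclic order in which the edges of $G$ leave $v$ in $D$ (the rotation system), with the polygon sides as inner edges; each binding edge then leaves $B_u$ at the corner that inherited its original edge, follows the route of that edge, and enters $B_v$ at the matching corner. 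Because $\varepsilon$ is small and the corners sit in the right angular order, the binding edges radiate out of each disk without meeting one another or the inner edges locally, while outside the disks they coincide with the edges of $G$. The upshot, which I would record as the key claim, is that the crossings of $D'$ correspond bijectively to those of $D$: each is a crossing of two binding edges whose originals cross in $D$, and every inner edge is crossing-free.

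Given this claim I would obtain each \emph{topological} property by inspection. For $k$-planarity, every binding edge inherits the (at most $k$) crossings of its original and every inner edge has none. For $k$-gap-planarity I would copy the crossing-to-edge assignment of $D$ onto the binding edges, leaving the inner edges unassigned. For quasi-planarity I would note that a set of mutually crossing edges of $D'$ must consist of binding edges and hence mirrors such a set in $D$. For fan-crossing freeness the argument is in fact unconditional: each corner of each circle carries exactly one binding edge, so no two binding edges share an endpoint, and therefore any edge of $D'$ is crossed only by pairwise independent edges; this is the universality statement proved later, and in particular it gives the claimed implication.

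For the planarity equivalence, the direction ``$G$ planar $\Rightarrow \eta(G)$ planar'' is just the construction applied to a crossing-free $D$. For the converse I would argue by minors: contracting all inner edges of each circle $C(v)$ collapses $C(v)$ to a single vertex and restores the binding edges to the edges of $G$, so $G$ is a minor of $\eta(G)$; since planarity is closed under taking minors, $\eta(G)$ planar forces $G$ planar, which closes the ``if and only if''.

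The step I expect to cost the most care is RAC, since there the edges must be straight and the crossings must meet at exactly $90^\circ$. Here I would sharpen the construction: in a RAC drawing of $G$ I would place the corner of $C(v)$ carrying the edge to a neighbour $w$ at $p_v+\varepsilon u$, where $u$ is the unit vector from $p_v$ towards $w$. Then each binding edge is a straight sub-segment lying on the line of its original edge, so two crossing binding edges meet at the same point and the same right angle as in $D$; the polygon sides acting as inner edges are straight and, being uncrossed, carry no angle requirement. Checking that no new crossings appear and that each original crossing still lies on both sub-segments (which holds once $\varepsilon$ is smaller than the distance from every crossing to the nearest vertex) then yields a RAC drawing of $\eta(G)$, and the lemma follows.
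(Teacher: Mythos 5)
Your proposal is correct and is essentially the paper's own proof: the paper likewise builds a drawing of $\eta(G)$ from a drawing of $G$ by placing a small box around each vertex, clipping each incident edge at the box boundary to obtain the circle vertices, and letting each binding edge inherit exactly the crossings of its original edge, after which every listed property is read off edge by edge (your observation that this makes the drawing fan-crossing free \emph{unconditionally} is exactly the universality statement of Theorem~\ref{thm:node-expansion-uni}). You merely supply two details the paper leaves implicit --- the converse planarity direction via contracting the circles and minor-closure, and the $\varepsilon$-trimming that keeps a straight-line RAC drawing intact --- so there is no gap.
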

\begin{proof}
Construct graph $\eta(G)$ from a drawing of $G$. Consider a small
box    around each vertex, so that
 two boxes do not intersect and a box is not crossed by a
non-incident edge.
  For each edge $e$ incident to
$v$ place a new vertex $v_e$ at the last intersection of $e$ and the
box,  clip $e$ at $v_e$ and remove $v$. Link two vertices by an edge
that are adjacent in the boundary of a  box, so that each box is
transformed into a circle with uncrossed edges.
 For each edge $e=\edge{u}{v}$ of $G$ there is an
edge $e'$ between the boxes or circles of $u$ and $v$, so that $e'$
inherits all crossings from $e$. Hence, $\eta(G)$ is fan-crossing
free ($k$-planar, $k$-gap-planar, quasi-planar, RAC) if so is $G$.
\end{proof}

\begin{corollary} \label{cor:closure-node-to-circle expansion}
The graph classes \textsf{FCF}, $k$\textsf{PLANAR}, $k$\textsf{GAP},
\textsf{QUASI}, and \textsf{RAC} are closed under node-to-circle
expansion ($k\geq 0$).
\end{corollary}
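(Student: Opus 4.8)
The plan is to observe that the corollary is merely a reformulation of Lemma~\ref{lem:closure-node-to-circle expansion} in the vocabulary of closure. By definition, a graph class $\mathcal{G}$ is \emph{closed} under the node-to-circle expansion $\eta$ exactly when $\eta(G) \in \mathcal{G}$ holds for every $G \in \mathcal{G}$. Lemma~\ref{lem:closure-node-to-circle expansion} asserts precisely this implication for each of the relevant properties: if $G$ is fan-crossing free ($k$-planar, $k$-gap-planar, quasi-planar, RAC), then so is $\eta(G)$. Hence each of the classes \textsf{FCF}, $k$\textsf{PLANAR}, $k$\textsf{GAP}, \textsf{QUASI}, and \textsf{RAC} is mapped into itself by $\eta$, which is the claim. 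The planar case ($\textsf{0PLANAR} = \textsf{0GAP}$) is subsumed by taking $k = 0$.

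There is essentially no obstacle here, since all the geometric and topological work --- building $\eta(G)$ from a drawing of $G$ by replacing each vertex with a small box, relocating the edge endpoints onto its boundary so that the box becomes a circle of uncrossed inner edges, and having each binding edge inherit all crossings of the corresponding original edge --- is already carried out in the proof of Lemma~\ref{lem:closure-node-to-circle expansion}. The only point that warrants a second glance is that closure requires only the one-directional implication $G \in \mathcal{G} \Rightarrow \eta(G) \in \mathcal{G}$, which is exactly the ``if so is $G$'' half of the lemma; the full equivalence stated there for the planar case is stronger than needed and may simply be ignored. I would therefore present the corollary as an immediate consequence, citing Lemma~\ref{lem:closure-node-to-circle expansion} and noting the $k=0$ specialization for planarity, rather than repeating the construction.
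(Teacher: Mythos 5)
Your proposal is correct and matches the paper exactly: the paper gives no separate proof for the corollary, treating it as an immediate consequence of Lemma~\ref{lem:closure-node-to-circle expansion}, which is precisely your reading. Your two observations --- that closure needs only the one-directional implication and that the planar case is the $k=0$ instance --- are accurate and require no further argument.
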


Angelini et al.~\cite{abks-ldgbp-18} observed that   every
fan-planar drawing of a degree-3 graph is   3-planar. This fact can
be improved. Here  configuration II comes into play and  edges are
rerouted as in \cite{b-fan-20}.

\begin{lemma} \label{lem:fan-2planar}
Graph $G$ is 2-planar and thus 1-gap-planar if it is a
fan-crossing graph of degree at most three.
 \end{lemma}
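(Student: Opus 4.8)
\emph{Proof idea.} The plan is to start from an arbitrary fan-crossing drawing $D$ of $G$ and to reroute edges until no edge is crossed more than twice. First I would exploit the defining property of a fan-crossing drawing: all edges that cross a fixed edge $e$ emanate from a common vertex $u=u(e)$ and hence form a fan. Since $\deg(u)\le 3$, at most three edges cross $e$, so $D$ is already $3$-planar; the whole content of the lemma is to get rid of the edges that are crossed exactly three times.

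So suppose $e=xy$ is crossed three times. Then the crossing edges are precisely the three edges $e_i=uv_i$ ($i=1,2,3$) incident to $u$, and since adjacent edges do not cross, $e_1,e_2,e_3$ meet $e$ in three points $q_1,q_2,q_3$ appearing in this order along $e$ and in the matching cyclic order around $u$; that is, the fan sweeps across the arc $e$ from one side. Note that the \emph{middle} edge $e_2$ cannot be rerouted off $e$: the initial portion of $e_2$ lies in the region bounded by $e_1$, the sub-arc of $e$ from $q_1$ to $q_3$, and $e_3$, and to leave this region other than through $e$ it would have to cross the adjacent edge $e_1$ or $e_3$, which is forbidden. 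Hence I would reroute an \emph{outermost} edge, say $e_1$ whose crossing $q_1$ is nearest the endpoint $x$: detach the crossing of $e_1$ with $e$ and reroute that portion of $e_1$ around $x$, so that afterwards only $e_2$ and $e_3$ cross $e$. This is exactly where \emph{configuration~II} enters, as announced: the rerouted edge may now cross some edge from the opposite side, a crossing that is permitted for fan-crossing graphs although it is not fan-planar, which is why the bound can be pushed below the $3$-planarity observed for fan-planar drawings by Angelini et al.~\cite{abks-ldgbp-18}. The detailed rerouting is the one used for fan-planar density in \cite{b-fan-20}.

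The hard part, and the step I would spend the most care on, is to verify that this rerouting keeps the drawing \emph{simple} and actually makes progress. Concretely I must check that after rerouting $e_1$: (i) no two adjacent edges cross and no two edges cross twice; (ii) no edge acquires a \emph{third} crossing; and (iii) a suitable potential strictly decreases, so that the process terminates. Here the degree bound is essential at both ends: $\deg(u)\le 3$ guarantees that $e$ is crossed only by $u$'s three edges, while $\deg(x)\le 3$ bounds by at most two the edges incident to $x$ that the detour around $x$ can meet, which is what keeps the crossing count from growing and lets me use, for instance, the total number of crossings (or the number of edges crossed at least three times) as a decreasing potential. Iterating the reduction over all triply crossed edges then yields a $2$-planar drawing of $G$.

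Finally, for the ``thus'' I would invoke the fact recalled in the introduction that every $2k$-planar drawing is $k$-gap-planar \cite{bbc-1gap-18}; applying it with $k=1$ turns the $2$-planar drawing into a $1$-gap-planar one, so $G$ is $1$-gap-planar.
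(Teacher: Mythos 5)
Your overall plan --- eliminate triply crossed edges one at a time by a local rerouting, then invoke \cite{bbc-1gap-18} for the ``thus'' clause --- matches the paper's at the top level, but your rerouting step differs from the paper's and breaks exactly at the point you yourself flag as the hard part. The paper does not detach one outermost fan edge and send it around the endpoint $x$ of $e$; it moves the apex $v$ of the fan close to the crossing point of $e$ with the \emph{middle} fan edge and reroutes all three edges incident to $v$ alongside $e$ (following \cite{b-fan-20}). Since $e$ is crossed by no edge other than $e_1,e_2,e_3$ (your degree-three observation, which the paper uses too), the new segments hugging $e$ cross nothing except possibly $e$ itself, once --- and this single residual crossing is where configuration~II actually enters: it is a property the \emph{original} drawing may have, namely fan edges crossing $e$ from both sides \cite{ku-dfang-14}, not, as you describe it, a crossing first created by the rerouting. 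With the paper's move, no edge gains a crossing, $e$ drops from three crossings to at most one, so every potential decreases and iteration trivially yields a 2-planar drawing.

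Your detour around $x$ has no such guarantee, and conditions (i)--(iii) that you correctly isolate are not verified --- indeed (iii) fails for both potentials you suggest. The detour removes one crossing ($e_1$ with $e$) but can add up to two new crossings on the edges incident to $x$, so the total number of crossings can \emph{increase} by one per step. An edge at $x$ that already had two crossings becomes triply crossed, and --- worse --- its three crossing edges need no longer form a fan, because $e_1$ is not incident to the common vertex of that edge's fan; the intermediate drawing is then no longer fan-crossing, so your own reduction step cannot be applied to the new triply crossed edge, and ``number of edges crossed at least three times'' is not shown to decrease either. There are two smaller slips as well: the claim that $q_1,q_2,q_3$ appear in matching order so that ``the fan sweeps across $e$ from one side'' presupposes one-sided crossings and is false under configuration~II, which also undermines your region argument for why the middle edge cannot be rerouted; and the detour could cross an edge $f$ at $x$ that $e_1$ already crosses, violating simplicity, which you list under (i) but never resolve. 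The final step (a 2-planar drawing is 1-gap-planar by \cite{bbc-1gap-18} with $k=1$) is correct and is exactly what the paper does.
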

 \begin{proof}
Consider a fan-crossing drawing of $G$. If edge $e$ is crossed by
three edges, then the crossing edges are incident to a vertex $v$
and $e$ is not crossed by any other edge. Now vertex $v$ can be
moved close to the crossing point of $e$ and the middle of the three
crossing edges. Similar to \cite{b-fan-20}, the edges incident to
$v$ are rerouted first along $e$. Then $e$ is uncrossed if it is
crossed from one side, as in the fan-planar  case
\cite{ku-dfang-14}, and it is crossed at most once, in general, see
Fig.~\ref{fig:fanplanar}. Hence, there is a 2-planar drawing, which
is 1-gap-planar \cite{bbc-1gap-18}.
%
\end{proof}

\begin{figure}[t]
  \centering
  \subfigure[]{
    \includegraphics[scale=0.9]{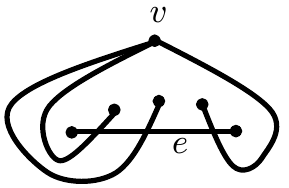}
    }
    \hspace{2mm}
 \subfigure[]{
        \includegraphics[scale=0.9]{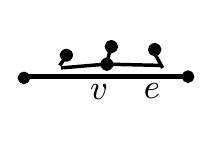}
   }
      \hspace{2mm}
 \subfigure[]{
        \includegraphics[scale=0.9]{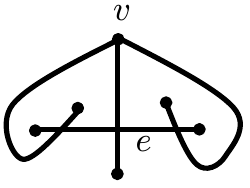}
   }
      \hspace{2mm}
 \subfigure[]{
        \includegraphics[scale=0.9]{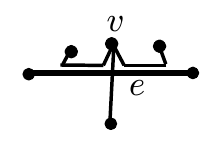}
   }
\caption{An edge $e$ that is crossed by three edges incident to
vertex $v$ in a 3-regular fan-crossing graph. Edge $e$ is crossed
from (a)   one  and (c)  both sides (configuration II).  A new
placement of $v$ and an edge rerouting towards a 2-planar graph. }
  \label{fig:fanplanar}
\end{figure}

In particular, the node-to-circle expansion $\eta(G)$ is 2-planar
and 1-gap-planar if it is fan-crossing.
It is unclear whether $\eta(G)$ is fan-crossing if $G$ is
fan-crossing. It is true for maximal complete (bipartite) fan-crossings graphs,
that is $K_7$ and $K_{4,n}$,  where the node-to circle
expansions are even 1-planar, as Fig.~\ref{fig:nodeexp} shows. These
cases are singular. Consider a $4\times 4$ grid graph with nine
inner quadrangular faces. Place $n \geq 5$ vertices in each
quadrangle and add edges so that there is $K_{4,n}$ including the
four corners of each quadrangle. We conjecture that the
node-to-circle of this graph is not fan-crossing.

\begin{theorem} \label{thm:node-expansion-uni}
(i) The   fan-crossing-free graphs and the quasi-planar graphs are
universal for node-to-circle expansion.

(ii) The fan-crossing, $k$-planar and  $k$-gap-planar graphs ($k
\geq 0$) are not universal for node-to-circle expansion.
\end{theorem}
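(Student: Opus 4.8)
For part (i) I would treat the two classes separately, and the fan-crossing free direction is almost immediate from the node-to-box construction used in the closure lemma, now applied to an \emph{arbitrary} graph. Starting from any simple drawing of $G$ (say a straight-line drawing with the vertices in general position), I surround each vertex by a tiny box, turn its boundary into the circle with uncrossed inner edges, and let each binding edge be the clipped original edge routed outside all boxes. The decisive point is that every vertex of $\eta(G)$ is incident to exactly one binding edge, since its degree is $1+2$; hence the binding edges form a perfect matching and are pairwise independent. As the inner edges are uncrossed, every crossing is a crossing of two binding edges, so any edge is crossed only by pairwise independent edges, and the drawing is fan-crossing free.

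For the quasi-planar case this node-to-box drawing fails, because three binding edges mutually cross whenever three edges of $G$ do. Instead I would aim for a drawing of $\eta(G)$ whose \emph{crossing graph} (edges of the drawing, adjacent when they cross) is bipartite, with the binding edges on one side and the inner edges on the other: such a drawing is automatically quasi-planar, since a bipartite graph has no triangle and hence no three mutually crossing edges. Concretely, I place the circles in pairwise disjoint vertical strips, assign each binding edge a distinct height and route it as a horizontal curve at that height that may thread through the interiors of the intervening circles, and draw each circle inside its own strip as a simple closed curve through its vertices (for instance an $x$-monotone path through the vertices in increasing-height order, closed by a single arc); universality lets me choose the rotation system freely to make this order convenient. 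Then binding edges, being horizontal at distinct heights, never cross one another, and inner edges, living in disjoint strips, never cross one another, so every crossing is between a binding edge and an inner edge, as desired. Note that it is precisely the threading of binding edges through circle interiors that circumvents the obstruction that $G$ itself need not be planar.

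For part (ii) I would use the cube-connected cycles $CCC^d=\eta(H^d)$ together with a crossing-number argument. By \cite{sv-cnCCC-93} one has $cr(CCC^d)\ge \frac{1}{20}4^d-(9d+1)2^{d-1}=\Omega(4^d)$, whereas $CCC^d$ is cubic with $\frac{3}{2}d2^d$ edges, so membership in any target class forces only $O_k(d2^d)$ crossings. Indeed, a $k$-planar drawing has at most $\frac{k}{2}\cdot\frac{3}{2}d2^d$ crossings and a $k$-gap-planar drawing at most $k\cdot\frac{3}{2}d2^d$ crossings, while a fan-crossing drawing of the cubic graph $CCC^d$ is $2$-planar by Lemma~\ref{lem:fan-2planar} and hence has at most $\frac{3}{2}d2^d$ crossings. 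Since $4^d$ eventually dominates $d2^d$, for every fixed $k$ and all sufficiently large $d$ (e.g.\ $d\ge 11$) the graph $CCC^d$ lies in none of \textsf{FAN}, $k$\textsf{PLANAR}, $k$\textsf{GAP}, so these classes are not universal for node-to-circle expansion.

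The fan-crossing free direction and the counting in part (ii) are routine; the two places that need real care are, first, the realizability of the quasi-planar drawing—one must verify that routing binding edges through circle interiors genuinely keeps \emph{both} the binding edges and the inner edges mutually non-crossing, so that the crossing graph is truly bipartite—and, second, the robustness of the lower bound in part~(ii): if the node-to-circle expansion is regarded as multivalued, one must argue that \emph{every} expansion of $H^d$ still has crossing number $\Omega(4^d)$, for which I would appeal to its bisection width $\Theta(2^d)$ rather than to the specific estimate for the standard $CCC^d$.
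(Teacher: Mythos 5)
Your proposal is correct and follows essentially the same route as the paper: the node-to-box/perfect-matching argument for fan-crossing freeness, a drawing with horizontal binding edges at distinct heights crossing only near-vertical circle edges in disjoint strips for quasi-planarity (the paper obtains exactly this bipartite crossing pattern via a special box-visibility representation with boxes on the diagonal), and for part~(ii) the crossing-number lower bound for $CCC^d = \eta(H^d)$ played against the edge count, with fan-crossing handled through Lemma~\ref{lem:fan-2planar}. Your closing remark that a multivalued expansion requires a bound valid for \emph{every} circle ordering, argued via bisection width, is a sound refinement of a point the paper passes over by identifying $\eta(H^d)$ with the standard $CCC^d$.
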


\begin{proof}
For a fan-crossing free drawing of $\eta(G)$, consider a  drawing of
$G$  with all vertices on a circle in the outer face. Replace each
vertex $v$ by a circle $C(v)$  that does not intersect any other
circle or an edge that is not incident to $v$, so that all inner
edges are uncrossed. If two edges of $G$ cross, then so do the
corresponding binding edges, which are independent, so that
$\eta(G)$ is fan-crossing free.

For quasi-planar graphs, consider a
special box-visibility representation of $G$, in which   the boxes
are arranged along the diagonal. Each edge consists of a vertical
and a horizontal segment with a bend to the right. Direct it upward
so that each box is incident to incoming edges on the left
 and of outgoing edges on top, see Fig.~\ref{fig:boxvisK6}.
 For each vertex $v$ construct a
circle $C(v)$ consisting of the bend points of the outgoing edges
and the points of the incoming edges at the left side of the box of
$v$, as shown in Fig.~\ref{fig:box2circle}. The ordering  of $C(v)$
coincides with the rotation system at the box of $v$. The inner
edges from the cycles are almost vertical. Two such edges do not
cross, since they are ordered left to right according to the left to
right ordering of the boxes. The binding edges are the horizontal
segments of the edges from the special box-visibility
representation. Such edges may cross inner edges. Nevertheless, there are
no three mutually crossing edges, so that $\eta(G)$ is quasi-planar.

For (ii) consider  hypercubes and cube-connected cycles as their
node-to-circle expansion. The crossing number of $CCC^d$ is at least
$\frac{1}{20} 4^d - (9d+1)2^{d-1}$ \cite{sv-cnCCC-93}, so that some
edges are crossed at least $c2^d/d$ times for some $c>0$, see
\cite{abks-ldgbp-18}. Hence, for every $k \geq 0$ there are
hypercubes so that their node-to-circle expansion
 is not $k$-(gap)-planar and not fan-crossing by
Lemma~\ref{lem:fan-2planar}.
 \end{proof}

 \begin{figure}[t]
  \centering
  \subfigure[]{
    \includegraphics[scale=0.37]{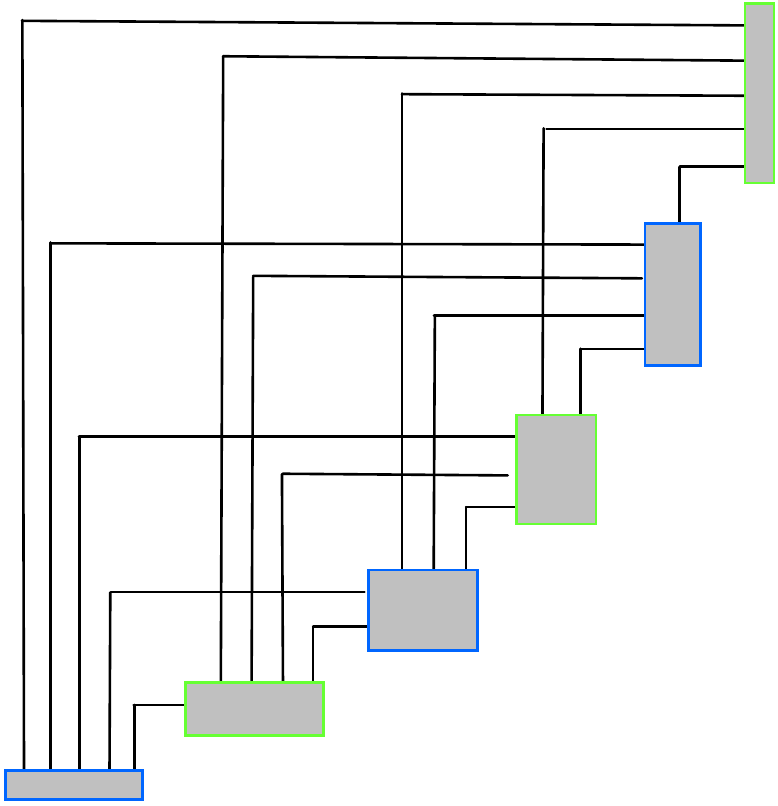}
    \label{fig:boxvisK6}
    }
    \hspace{5mm}
 \subfigure[]{
        \includegraphics[scale=0.4]{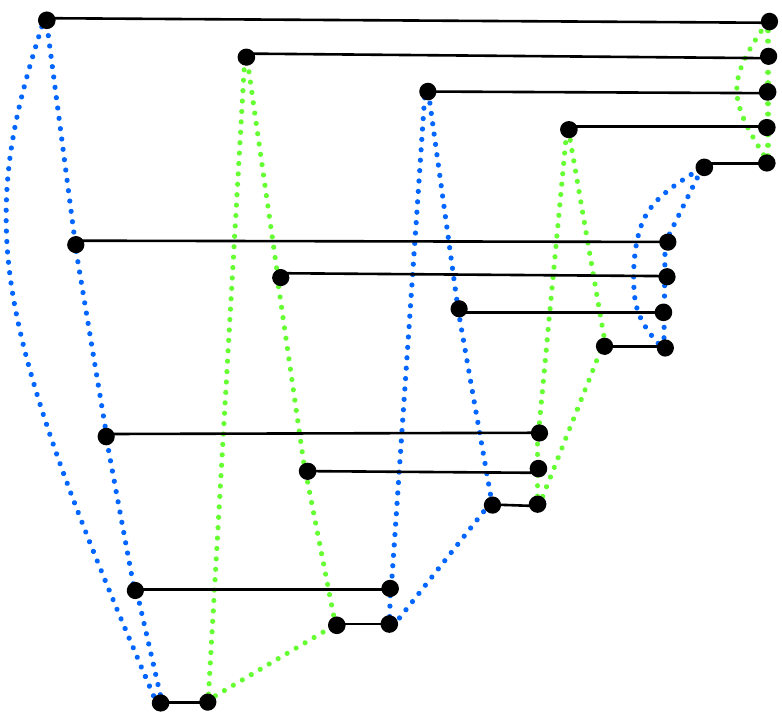}
\label{fig:box2circle}
        }
\caption{(a) A special box-visibility representation of $K_6$ and
(b) its node-to-circle expansion. Inner edges are drawn dotted.}
  \label{fig:boxvis}
\end{figure}

In consequence, the 11-dimensional cube-connected cycle $CCC^{11}$
is fan-crossing free and quasi-planar but not 1-gap-planar.
Moreover, it is not 2-planar \cite{bbc-1gap-18} and not
fan-crossing by Lemma~\ref{lem:fan-2planar}.

Note that the fan-crossing free and quasi-planar  drawings of
$\eta(G)$ are different. They are not necessarily quasi-planar and
fan-crossing free, respectively. It is open whether every graph $G$
has a drawing  of $\eta(G)$  that is simultaneously quasi-planar and
fan-crossing free, called grid-crossing in \cite{b-FOL-18}, or that
is RAC. A RAC drawing can be obtained by a \emph{generalized
version} of a node-to-cycle expansion  in which  a  circle $C(v)$ of
a degree-$d$ vertex  has $d$ vertices of degree three and  two
vertices of degree two. Then $C(v)$ can be drawn as a box with
vertical and two horizontal  segments. Moreover, every graph has a
generalized node-to-circle expansion that is outer fan-crossing
free, so that all vertices are in the outer face.

\subsection{Subdivision}

An $s$-subdivision of an edge replaces it by a path of length $s+1$.
In geometry and in particular in orthogonal \cite{efk-ogd-99} and
RAC drawings \cite{del-dgrac-11}, a subdivision is called a bend.
Each edge is replaced by a path of length at most $s+1$ in   an
$s$-subdivision $\sigma_s(G)$ of graph $G$, where $s$ is the number of edge
for a subdivison of $G$. An $s$-subdivision is
  \emph{uniform} if  each edge of graph $G$
is extended to a path  of length $s+1$. Then  $\sigma_s(G)$ is a
single graph. In the general version, $\sigma_s(G)$ is parameterized
by its set of edges and is a set of graphs.  Here these cases do not
 matter, since too long paths can be folded, so that some of its
edges remain uncrossed. Subdivisions of graphs are used in
Kuratowski's theorem, which states   that a planar graph does not
contain a subgraph that is a  subdivision of $K_5$ or
$K_{3,3}$, i.e., $K_5$ and $K_{3,3}$ are the topological minors  of
the planar graphs. Note that edge contractions, i.e., paths of
length zero, are generally used in the theory of graph minors
\cite{d-gt-00}.

Clearly, the ($s$-)subdivision of a graph $G$ is in a graph class
$\mathcal{G}$ if $G$ is   one of the aforementioned graph classes.
If edges of a fan with a common vertex $v$ are crossed, then their
 segments at  $v$ are crossed after a subdivision.

\begin{lemma} \label{lem:closure-subdivision}
The graph classes \textsf{PLANAR}, $k$\textsf{PLANAR},
$k$\textsf{GAP}, \textsf{FCF}, \textsf{FAN}, \textsf{QUASI}, and
\textsf{RAC} are closed under $s$-subdivision for every $s \geq 0$.
\end{lemma}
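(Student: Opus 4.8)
The plan is to realize each subdivision \emph{inside} a witnessing drawing rather than to reason about it abstractly. Fix a drawing $D$ of $G$ that certifies membership in the class under consideration. To build $\sigma_s(G)$ I place the $s$ subdivision vertices of each edge as points lying on the drawn curve of that edge, chosen to avoid every crossing point of $D$. Each original edge then becomes a path whose consecutive segments trace exactly the original curve, so no new crossing is created and every crossing of $D$ becomes a crossing between one segment of one edge and one segment of another. The crucial bookkeeping fact is that the crossings of an original edge are partitioned among its segments: each segment inherits a subset of the crossings of the whole edge, and an interior subdivision vertex, being of degree two, belongs only to the two consecutive segments of a single original edge.

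With this setup the ``counting'' and ``forbidden-pattern'' classes fall out uniformly. The planar case is immediate, since $D$ has no crossings. For $k$\textsf{PLANAR} and $k$\textsf{GAP}, each segment of an edge is crossed at most as often as the edge itself, hence at most $k$ times, and any valid gap assignment of $D$ restricts to the segments. For \textsf{QUASI} I argue contrapositively: three mutually crossing segments must come from three \emph{distinct} original edges, because segments of one edge never cross one another in a simple drawing, and those three edges would then mutually cross in $D$, contradicting quasi-planarity. For \textsf{FCF}, suppose a segment $e'$ were crossed by two segments sharing a vertex; since interior subdivision vertices are private to one path, such a shared vertex is either an original vertex, forcing a fan on the parent edge of $e'$ in $D$, or it forces a single edge to cross $e'$ twice, both impossible.

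The delicate cases are \textsf{FAN} and \textsf{RAC}, and the \textbf{main obstacle} is \textsf{FAN}. The difficulty is that a fan edge $f_i$ may cross $e$ far from the common vertex $v$; after a naive subdivision the crossing would sit on an interior segment of the path $P_{f_i}$, whose endpoints are private degree-two vertices, so two such segments from different fan edges would share no vertex and the fan structure would be destroyed. I would resolve this with the observation preceding the lemma: place the first subdivision vertex of each crossing edge just past its unique crossing with $e$, so that every crossing of the fan lies on the segment incident to $v$. All these segments then share $v$, hence any segment of $P_e$ they cross is crossed by a genuine fan centered at $v$, and surplus path length is absorbed by folding so that the length constraint of a uniform subdivision causes no trouble. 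For \textsf{RAC} I keep the drawing straight-line by locating each subdivision vertex on the straight segment representing its edge; every resulting segment is a collinear sub-segment, no segment passes through a non-incident vertex, and every crossing remains a right-angle crossing.
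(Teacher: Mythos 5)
Your handling of \textsf{PLANAR}, $k$\textsf{PLANAR}, $k$\textsf{GAP}, \textsf{FCF}, \textsf{QUASI}, and \textsf{RAC} is correct and coincides with the paper's route: the paper treats these cases as immediate, and its only substantive remark concerns \textsf{FAN}, namely that the crossings of a fan can be kept on the segments incident to the common vertex --- exactly your idea. The gap, and you correctly sensed this is the delicate case, is that your \textsf{FAN} prescription is not well defined. You place ``the first subdivision vertex of each crossing edge just past its unique crossing with $e$'', but this is a rule \emph{per crossed edge} $e$, while a single crossing edge $f$ with endpoints $x$ and $y$ in general crosses several edges $e_1, e_2, \ldots$. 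Since $f$ belongs to the fan crossing $e_j$, the apex of that fan is a vertex lying on $f$, hence is $x$ or $y$, and \emph{both} endpoints can occur as apexes for different $e_j$. If a crossing whose required apex is $y$ lies closer to $x$ along $f$ than a crossing whose required apex is $x$, then no position of $f$'s subdivision vertex satisfies both prescriptions. Such interleavings occur in legitimate fan-crossing drawings: let $f=(x,y)$ be crossed by $e_1$ and $e_2$ sharing a vertex $u$, let $e_1$ additionally be crossed by an edge through $x$ and $e_2$ by an edge through $y$, and let $e_2$ cross $f$ nearer to $x$ than $e_1$ does; this configuration is realizable and fan-crossing, yet it makes your rule contradictory.

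Two further points. First, even where your rule can be followed, you verify the fan condition only for the segments of the crossed edge $e$; you must also verify it for the segments of the crossers: the segments crossing a fixed segment of $f$ come from the edges $e_1, e_2, \ldots$ and must themselves share a vertex (the apex $u$ of the fan crossing $f$), which constrains where the $e_j$ are subdivided --- a second family of constraints your placement never checks. Second, the example above can be rescued, but only by a mechanism absent from your argument: put $f$'s subdivision vertex \emph{between} the two conflicting crossings and use the subdivision vertices of $e_1$ and $e_2$ themselves to isolate the offending crossings on segments that are crossed exactly once, where the fan condition holds vacuously. A complete proof must show that such a placement exists within every prescribed subdivision budget (e.g., the uniform $1$-subdivision of an edge with many crossings). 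In fairness, the paper's own one-sentence justification for \textsf{FAN} shares this defect; but as submitted, your \textsf{FAN} step fails on concrete fan-crossing drawings.
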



 A graph is
\emph{IC-planar} \cite{a-cnircn-08} if it admits a 1-planar drawing
so that each vertex is incident to at most one crossed edge.
Structural properties have been studied in \cite{bbhnr-NIC-17}.
Brandenburg et al.~\cite{bdeklm-IC-16} have shown that IC-planar
graphs admit a 1-planar straight-line drawing with right angle
crossings. Hence, IC-planar graphs are simultaneously 1-planar and
RAC.   Hence, a universality of the IC-planar graphs transfers to
all graph classes containing them.

\begin{theorem} \label{thm:universal}
\begin{enumerate}
  \item  [(i)] The IC-planar, $k$-planar, and $k$-gap-planar  ($k \geq 1$)
  graphs are universal
 for $f(n)$ subdivision if and only if $f(n) \in \Theta(n^2)$. The
 fan-crossing graphs are universal for $f(n)$-subdivision if $f(n) \in O(n^2)$.
 They are not $g(n)$-subdivision  universal if $g(n) \in o(n /
 \log^2 n)$.
  \item  [(ii)] The RAC graphs are universal for 3-subdivision and not for 2-subdivision.
  \item  [(iii)] The fan-crossing free graphs are universal for 2-subdivision.
  \item [(iv)] The quasi-planar graphs are universal for
  1-subdivision and not without subdivisions.
\end{enumerate}
\end{theorem}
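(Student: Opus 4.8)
The plan is to treat each item by exhibiting an explicit drawing for the universality (positive) direction and by a crossing-number count for the non-universality (negative) direction. The one fact driving every lower bound is that subdivision leaves the crossing number unchanged, $cr(\sigma_s(G))=cr(G)$, because contracting the degree-two subdivision vertices recovers $G$ and, conversely, any drawing of $G$ induces a drawing of $\sigma_s(G)$ with the same crossings. For the positive direction I would reuse two drawings already available in the excerpt: the convex drawing, with all vertices on a circle and edges drawn as chords, where each edge meets at most $m-1=O(n^2)$ others; and the special box-visibility representation, where each edge is one vertical followed by one horizontal segment.

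For part (i), the upper bound: starting from a convex drawing of $G$, I would place subdivision vertices along each edge so that crossed and uncrossed segments alternate and the first segment leaving every original vertex is uncrossed. As an edge carries $O(n^2)$ crossings, this uses $s=O(n^2)$ subdivision vertices; then every degree-two subdivision vertex is incident to at most one crossed segment and every original vertex to none, so $\sigma_s(G)$ is IC-planar, hence $k$-planar, $k$-gap-planar, and (being $1$-planar) fan-crossing. This single construction gives all the $O(n^2)$ upper bounds, including the fan-crossing one. For the lower bound I would count crossings in $\sigma_s(K_n)$, where $cr(\sigma_s(K_n))=cr(K_n)=\Theta(n^4)$ while $\sigma_s(K_n)$ has $M=(s+1)\binom{n}{2}=\Theta(sn^2)$ edges. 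A $k$-planar or $k$-gap-planar drawing has at most $kM$ crossings, so $kM=\Omega(n^4)$ forces $s=\Omega(n^2)$, matching the upper bound at $\Theta(n^2)$ (larger $s$ still works, since over-long paths can be folded to keep extra edges uncrossed). For fan-crossing I would switch to the cube-connected cycle $CCC^d$, which is $3$-regular with $N=d2^d$ vertices and $cr(CCC^d)=\Theta(N^2/\log^2 N)$ by \cite{sv-cnCCC-93}; its $s$-subdivision has maximum degree three, so if it were fan-crossing it would be $2$-planar by Lemma~\ref{lem:fan-2planar} and carry only $O(sN)$ crossings, and comparing with $cr=\Theta(N^2/\log^2 N)$ forces $s=\Omega(N/\log^2 N)$.

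For the remaining parts I would argue as follows. Part (ii): every graph has a $3$-bend RAC drawing via box-visibility (noted in the excerpt), and such a drawing is precisely a straight-line RAC drawing of $\sigma_3(G)$, giving $3$-subdivision universality; conversely a straight-line RAC drawing of $\sigma_2(G)$ would be a $2$-bend RAC drawing of $G$, and since such drawings force a linear number of edges, a dense graph like $K_n$ (with $\Theta(n^2)$ edges) has none, so $\sigma_2(K_n)$ is not RAC. Part (iii), the key result, I would prove directly: take any drawing of $G$, choose pairwise-disjoint discs around the vertices small enough to be crossing-free and to avoid every non-incident edge, and for each edge $uv$ place its two subdivision vertices where $uv$ leaves the discs of $u$ and $v$; then all crossings occur on the \emph{middle} segments, whose four endpoints are subdivision vertices private to their own edges, so any two crossing segments are independent and $\sigma_2(G)$ is fan-crossing free. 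Part (iv): placing the subdivision vertex of each edge at the bend of a special box-visibility representation, as in the quasi-planar construction of Theorem~\ref{thm:node-expansion-uni}, makes every segment horizontal or vertical, so among any three segments two are parallel and cannot cross and $\sigma_1(G)$ is quasi-planar; that universality fails without subdivision follows from the density bound $6.5n-20$ of \textsf{QUASI}, which $K_n$ exceeds for large $n$.

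The hard part will be the fan-crossing lower bound of part (i): unlike $k$-planarity, a single fan may place many crossings on one segment, so the count is effective only for a family whose crossing number is large relative to its edge count and degree. Using the bounded-degree $CCC^d$ together with Lemma~\ref{lem:fan-2planar} circumvents the degree issue, but it is its $\Theta(N^2/\log^2 N)$ crossing number that pins the threshold at $N/\log^2 N$ rather than $N^2$, leaving the genuine gap with the $O(n^2)$ upper bound. A secondary subtlety is the IC-planar upper bound: the IC condition restricts crossed edges \emph{per vertex}, so it is not enough to allow one crossing per segment --- crossed and uncrossed segments must alternate --- and checking that this still needs only $O(n^2)$ subdivisions is what makes the bound tight.
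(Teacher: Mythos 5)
Your proposal matches the paper's proof essentially step for step: the alternating-subdivision drawing yielding IC-planarity together with the $cr(K_n)=\Omega(n^4)$ crossing count for the $k$-(gap-)planar lower bound, cube-connected cycles combined with Lemma~\ref{lem:fan-2planar} for the fan-crossing threshold $\Omega(n/\log^2 n)$, the 3-bend/2-bend RAC results of Didimo et al.\ for (ii), the three-segment drawing with uncrossed end segments for (iii), and the special box-visibility representation with the bend as subdivision point plus the $6.5n-20$ density bound for (iv). The only deviations are cosmetic: you count crossings globally over the edges of $\sigma_s(K_n)$ where the paper averages per edge, and you quote a linear edge bound for 2-bend RAC drawings where the paper uses the weaker $O(n^{7/4})$ bound---either suffices against the $\Theta(n^2)$ edges of $K_n$.
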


\begin{proof}
Consider a   drawing of   graph $G$. Subdivide each edge into
segments, so that each segment  is crossed at most once and each
subdivision point is incident to at most one crossed segment.
The so obtained drawing  is  IC-planar.  Each edge of $G$ is
crossed at most by all other edges,   since $G$ is simple. Thus $O(n^2)$ subdivisions per edge
suffice. For the lower bound consider the crossing number of
complete graphs $K_n$, which is $\Omega(n^4)$ \cite{acns-cn-82,
l-VLSI-83}. Clearly, the number of crossings is unchanged by
subdivisions. Since $k$-planar and $k$-gap-planar graphs admit
$O(m)$ crossings, where $m$ is the number of edges, some edges have
$\Omega(n^2)$  crossings and need $\Omega(n^2)$ many subdivisions
for   $k$-(gap)-planar graphs. Clearly, $O(n^2)$ subdivisions
suffice for fan-crossing graphs.  For the lower bound consider
$d$-dimensional cube-connected cycles. By the lower bound on the
crossing number \cite{sv-cnCCC-93},   there are edges with at least
$\frac{1}{30} \frac{2^d}{d}-4$ many crossings, which is $c
\frac{n}{\log^2 n}$ for some $c>0$, since $n=d2^d$. By Lemma
\ref{lem:fan-2planar}, 3-regular fan-crossing graphs are 2-planar,
so that at least  $cn / \log^2 n$ many subdivisions are needed for
some edges.

  Didimo et al.~\cite{del-dgrac-11} have shown that
every graph has a 3-bend RAC drawing, whereas $n$-vertex graphs with
a 2-bend RAC drawing have   $O(n^{7/4})$  edges, which proves (ii).

For (iii), every graph can be drawn such that each edge consists of
three segments, the first and last of which are uncrossed and the
middle segments inherit all crossings. Hence, the drawing is
fan-crossing free and has two subdivisions.

Finally, for (iv), consider a special box-visibility representation
of $G$. Use each bend as a subdivision point and draw each vertex
$v$ as a point in its box.   The incident edges of $v$ are extended
from the boundary of the box to $v$ so that there is no crossing in
the interior of a box. Then only horizontal and (almost) vertical
segments of edges may cross,
  so that no three edges cross one
another. Since $n$-vertex quasi-planar graphs have at most $6.5n-20$
edges \cite{at-mneqpg-07}, one subdivision is necessary to represent
all graphs.
%
\end{proof}

\subsection{Path-Addition}

A \emph{path-addition} adds an internally vertex-disjoint path $P$
between any two vertices of a graph. Such paths are also known as
ears. They are used in ear decompositions of 2-connected graphs
\cite{gy-graphtheory-99} and in  subdivisions. We wish to use
path-additions so that they preserve a given class of graphs.
Therefore, the added paths are long. Their  length is at least $cn$
for some $c > 0$ and $n$-vertex graphs. This property distinguishes
our path-addition operation from  ear decompositions
\cite{gy-graphtheory-99}. In comparison with subdivision, a
path-addition adds a (long) path between any two
vertices, whereas a subdivision can do so if there is an edge. Hence,
path-additions can create a subdivision of $K_n$ from any non-empty graph. A
more general version of path-addition  was introduced in
\cite{ben-ab1v-17} and studied in \cite{ben-pa-16}. For a graph $G$
and a path $P$, let $G \oplus P$ denote the graph obtained by adding
the vertices and edges of $P$ to $G$, where the internal vertices of
$P$ are new and have degree two. A class of graphs $\mathcal{G}$ is
closed under path-addition if there is some function
$D_{\mathcal{G}}(n)$ so that $G \oplus P$ is in $\mathcal{G}$ if $G$
is an $n$-vertex graph in $\mathcal{G}$ and $P$ has length at least
$D_{\mathcal{G}}(n)$.

It is readily seen   that the graph classes \textsf{RAC},
\textsf{FCF}, $k$\textsf{GAP} for $k \geq 1$, and \textsf{QUASI} are
closed under path-addition if $D_{\mathcal{G}}(n)$ is a linear
function. In particular, $k$-gap-planar graphs are made for path-additions,
since the crossings created by an added paths are assigned to its edges
if the length of the path exceeds its number of crossings.
For a non-closure it must be shown that there are graphs so that the
addition of a path of any length violates the defining properties of
the graph class.

\begin{theorem} \label{thm:path-addition} 
The graph classes \textsf{RAC}, \textsf{FCF}, $k$\textsf{GAP} ($k
\geq 1$), and \textsf{QUASI} are closed under path-addition, whereas
  $k$\textsf{PLANAR} ($k \leq 2$) and \textsf{FAN}
are not. 
\end{theorem}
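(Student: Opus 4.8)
The plan is to treat the two directions separately: the positive \emph{closure} claims for \textsf{RAC}, \textsf{FCF}, $k$\textsf{GAP} and \textsf{QUASI}, and the negative \emph{non-closure} claims for $k$\textsf{PLANAR} ($k\le 2$) and \textsf{FAN}. For the closures the idea, already indicated in the paragraph preceding the statement, is that a sufficiently long added path behaves like a flexible thread whose crossings can be spread out so thinly that they never violate the defining property; here $D_{\mathcal G}(n)$ is a linear function, essentially the number $m = O(n)$ of edges of the (sparse) graph $G$. For the non-closures the plan is to exhibit, for each class, a single graph $G$ in the class together with two non-adjacent vertices $u,v$ whose connection is \emph{locally obstructed}: every drawing of $G\oplus P$ that lies in the class is forced to place a crossing that either overloads an already saturated edge (for $k$-planar) or extends a complete fan by an edge with the wrong apex (for fan-crossing).

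For the closures I would start from an optimal drawing of $G$ in the respective class and route $P$ from $u$ to $v$ as a simple curve crossing each edge of $G$ at most once; since $G$ has $O(n)$ edges this costs $O(n)$ crossings, and I take $P$ long enough to insert a degree-two vertex between any two consecutive crossings, so that each segment of $P$ is crossed at most once and the first and last segments (the only ones anchored at the real vertices $u,v$) are crossed not at all. For \textsf{FCF} this keeps every edge of $G$ crossed only by pairwise independent edges, because each interior segment of $P$ has two fresh degree-two endpoints and is therefore independent of all other edges; for $k$\textsf{GAP} I reassign every new crossing to the incident segment of $P$, leaving the old assignment of $G$ untouched and charging each segment at most one crossing; for \textsf{QUASI} the same thin routing avoids any triple of mutually crossing edges; and for \textsf{RAC} I use the freedom of the many bends of $P$ to make each crossing segment meet the crossed straight edge orthogonally. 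In every case a $P$ of linear length suffices, and closure under subdivision (Lemma~\ref{lem:closure-subdivision}) guarantees that still longer paths, which only refine these segments, remain admissible.

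For the non-closures the decisive point is that, because all these classes are closed under subdivision, a long path can always be \emph{smoothed}; hence merely adding an edge $uv$ that destroys the class is \emph{not} enough, since a clever drawing could use the extra path vertices to escape the obstruction. The obstruction must therefore be genuinely local at $u$ and $v$ and immune to subdivision. The construction I would use places $u$ (and symmetrically $v$) inside a rigid gadget whose incident edges are forced, in \emph{every} class drawing, to be already maximally crossed: for $k$\textsf{PLANAR} a saturated optimal gadget in which each edge separating $u$ from the outer region carries exactly $k$ crossings, and for \textsf{FAN} a gadget in which the edges enclosing $u$ are each crossed by a complete fan. The first edge of $P$ leaving $u$ must cross one such enclosing edge $e$; this gives $e$ a $(k{+}1)$-st crossing in the $k$-planar case, while in the fan-crossing case it adds to $e$ a crosser whose endpoints are $u$ and a fresh path vertex, neither of which is the apex of the existing fan on $e$, so the crossers of $e$ no longer form a fan. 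Either way $G\oplus P$ leaves the class, independently of the length of $P$.

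The step I expect to be the main obstacle is precisely this robustness of the local obstruction: I must prove that the enclosing edges of $u$ are saturated (resp. fan-complete) not only in one convenient drawing of $G$ but in \emph{every} class drawing of the whole graph $G\oplus P$, and that $u$ cannot escape its pocket through an uncrossed incident edge or an unbounded face. This forces the gadget around $u$ to have an essentially unique class-embedding, so that its crossing pattern is rigid; pinning down such rigidity, and checking that the long, freely routable path cannot exploit an alternative global embedding of $G$ to relieve the forced crossing, is where the real work lies. The density bounds ($4n-8$ and $5n-10$) only serve as a sanity check and, as a direct count shows, are by themselves too weak to forbid the addition.
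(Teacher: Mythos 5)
Your positive half is essentially the paper's argument: route the added path through the given drawing, subdividing so that each new segment is crossed at most once, reassign crossings to path segments for $k$\textsf{GAP}, bend to get right angles for \textsf{RAC}, and invoke a linear bound $D_{\mathcal G}(n)$ tied to the density. One small flaw there: your claim that $P$ can always be routed ``crossing each edge of $G$ at most once'' is not true in an arbitrary class drawing --- a single edge may spiral around $u$ (simplicity only restricts \emph{pairs} of edges), so every escaping curve crosses it several times. This is harmless but must be handled as the paper does: allow multiple crossings of the same edge $e$ of $G$, and insert enough subdivision vertices that no two segments of $P$ crossing $e$ share an endpoint (two consecutive segments crossing $e$ would form a fan on $e$, violating \textsf{FCF}); for $k$\textsf{GAP} the repeated crossings are assigned to the edges of the guide path, as in the paper. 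You also omit the disconnected case, which the paper treats via ``exits'' in the outer face.

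The genuine gap is in the non-closure half. You correctly identify what must be proved --- a local obstruction around $u$ and $v$ that is immune to subdivision and holds in \emph{every} class drawing of $G \oplus P$ --- but you supply no gadget and explicitly defer the rigidity proof, which is the entire content of this direction. Moreover, your proposed obstruction (edges that are \emph{saturated}, carrying exactly $k$ crossings, resp.\ a ``complete fan,'' in every drawing) is the wrong kind and likely unprovable as stated: nothing forces an adversarial embedding of $G \oplus P$ to spend all $k$ crossings on the enclosing edges, and for \textsf{FAN} there is no finite notion of a ``complete'' fan, so saturation gives no contradiction by itself (what would suffice is an edge forced to be crossed by at least two edges with a forced apex, which again needs rigidity). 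The paper avoids saturation entirely: it builds a triangulated $4\times 4$ grid whose edges are replaced by \emph{fat edges} --- copies of $K_7$ for \textsf{FAN} and 2\textsf{PLANAR}, $K_6$ for 1\textsf{PLANAR} --- and imports the known result of Binucci et al.\ that $K_7$ has an (almost) unique fan-planar embedding with two vertices on the outer face, which extends to fan-crossing because configuration~II plays no role. Consequently fat edges can never be crossed at all, every class embedding of the host graph is combinatorially the grid, and the interior vertices at positions $(2,2)$ and $(3,3)$ are separated by fat edges in every embedding, so no path of any length can join them inside the class. Without such a concrete uncrossable-gadget construction and its uniqueness-of-embedding lemma, your proposal establishes the closures but not the non-closures, i.e.\ only half the theorem.
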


\begin{proof}
For the positive closure results, let $D_{\mathcal{G}}(n)$ be twice
the density function.  Path $P$ is routed along a (shortest) path
$S$ in $G$.
 Path $P$ makes a bend and creates a further subdivision if it
crosses an edge incident to an internal vertex of $S$, so that the
crossing introduces a violation without the subdivision.  In a RAC
drawing,  the edges of  $P$ can be drawn  so that they
  cross an edge of $G$ at a right angle. If a graph
is disconnected and the first and last vertices of $P$  are in
different components, then there are exits, which are vertices or
crossing points in the outer face of a component, so that $P$ can be
routed along such exists. Each bend is charged to an edge  of the
given graph, which is charged at most twice. If  an
edge of $G$ crosses $P$ multiple times, then it crosses $S$ so that
these crossings are assigned to the edges of $S$.

For the negative results, consider a $4\times 4$ grid graph $G$ with
vertices $(x,y)$ for $1 \leq x,y \leq 4$. Triangulate  $G$ by edges
parallel to the diagonal. Then replace each edge by a ``fat edge''.
For fan-crossing (fan-planar) graphs, a fat edge $\edge{u}{v}$ is
$K_7$ with vertices $u$ and $v$ in the outer face of an embedding.
Call the resulting graph $H$. Binucci et al.~\cite{bddmpst-fan-15}
have shown that $K_7$ has an almost unique fan-planar  embedding
with two vertices in the outer face, which results from fragments of
crossed edges, see Fig.~\ref{fig:K7-fan-2planar}. The result  holds
for fan-crossing graphs, since configuration II \cite{ku-dfang-14}
is not used, and for 2-planar graphs. Fat edges cannot be crossed
without violation. In case of 1-planar graphs, a fat edge is $K_6$,
see Fig.~\ref{fig:K6},  and   an edge for planar graphs. In
consequence, any fan-crossing or 1- or 2-planar embedding of $H$ is
a grid with fat edges. For every fan-crossing ($j$-planar with
$j\leq 2$) drawing of $H$, there is a fat edge between the vertices
$u$ and $v$ at positions $(2,2)$ and $(3,3)$,   which cannot be
crossed without violation. Hence, a path $P$ cannot be added between
$u$ and $v$ so that $H \oplus P$ is fan-crossing ($j$-planar with $j
\leq 2$),    if  $G$ is fan-crossing ($j$-planar).
\end{proof}

 Hence, the path-addition
operation distinguishes the graph classes \textsf{RAC},
\textsf{FCF}, $k$\textsf{GAP} ($k \geq 1$), and \textsf{QUASI}
from \textsf{FAN},
$1$\textsf{PLANAR} and 2\textsf{PLANAR},
respectively. Since every 1-planar graph is fan-crossing free
\cite{cpkk-fan-15}, every 2-planar graph is 1-gap-planar
\cite{bbc-1gap-18}, and every fan-crossing or 2-planar graph \cite{abbddd-quasiplanar-20}
is quasi-planar,
it provides an alternative proof for the
properness of these inclusions.

\section{Relationships} \label{sect:inclusion}

The following proper inclusions are known:   Every $2k$-planar graph
is $k$-gap planar, which in turn is $2k$$+$$2$-quasi-planar
\cite{bbc-1gap-18}. Every RAC graph is fan-crossing free and
quasi-planar \cite{del-dgrac-11}. Every fan-crossing graph is
quasi-planar \cite{cpkk-fan-15}.  In addition, the following
incomparabilities are  known: (i) 1-planar and RAC
\cite{el-racg1p-13} (ii)  2-planar and fan-crossing (fan-planar)
   \cite{bddmpst-fan-15}, and (iii)   fan-crossing and 1-gap-planar, where
   the latter, stated as an open problem in \cite{bbc-1gap-18}, follows
   from $K_8$ and $K_{4, n}$ for $n \geq 9$ \cite{abks-beyond-Kn-19}.

We use the
 path-addition, subdivision, and
node-to-circle expansion operations to construct examples for more
incomparabilities and alternative proofs for proper inclusions.

\begin{figure}[t]
  \centering
    \includegraphics[scale=0.6]{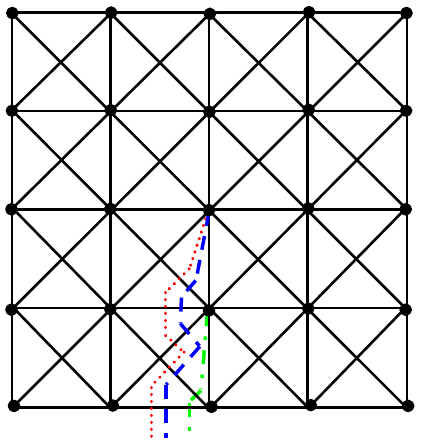}
\caption{An internally crossed $5 \times 5$ grid with three added
paths, drawn dotted and/or dashed (and colored in the online
version) in a RAC drawing. }
   \label{fig:tile}
\end{figure}

\begin{theorem} \label{thm:tiles}
There are RAC   graphs that are not fan-crossing.
\end{theorem}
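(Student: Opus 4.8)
The plan is to produce a single graph $T$, the tiles-graph of Fig.~\ref{fig:tile}, obtained from an internally crossed $5\times 5$ grid by adding three long paths, and to prove the two halves separately. For the positive half I would simply read off the displayed drawing: in Fig.~\ref{fig:tile} every vertex is a point, every grid edge and every path segment is a straight segment, and each crossing is a right-angle crossing. Hence that drawing witnesses that $T\in\textsf{RAC}$, and since every RAC graph is fan-crossing free \cite{del-dgrac-11}, $T$ is also fan-crossing free. So the entire content of the theorem lies in showing $T\notin\textsf{FAN}$. The reason a direct density argument cannot do this is that RAC graphs obey the stronger bound $4n-8$, so $T$ does not exceed the $5n-10$ bound of \textsf{FAN}; the separation must therefore be structural, and it mirrors the non-closure of \textsf{FAN} under path-addition established in Theorem~\ref{thm:path-addition}. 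The obstacle is that the $K_7$ barriers used there are \emph{not} RAC ($K_7$ has more than $4n-8$ edges), which is precisely why the internally crossed grid, rather than fat $K_7$-edges, is needed here.

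For the non-fan-crossing half I would argue by contradiction: assume $T$ admits a fan-crossing drawing and restrict attention to the grid skeleton. The key lemma to establish is a \emph{rigidity} statement, analogous to the almost-unique fan-planar embedding of $K_7$ in \cite{bddmpst-fan-15}: the internally crossed grid has an essentially unique fan-crossing embedding (up to reflection and rotation), in which certain interior grid edges must cross each other, and these forced crossings partition the plane into regions that separate the prescribed endpoints of the added paths. Granting this, I would route-and-count: each added path must leave the region containing its source and reach the region containing its target, so it must cross some interior grid edge $a$ that already carries a crossing with an independent grid edge $b$. Because an edge of a path shares no vertex with $a$ or $b$, the edge $a$ is then crossed by two independent edges, contradicting fan-crossing. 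The three paths are chosen so that, whichever reflection/rotation of the rigid embedding occurs, at least one of them is forced across such an already-crossed barrier edge.

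The main obstacle is exactly this rigidity-plus-separation lemma. Proving that the internal crossings of the grid are \emph{unavoidable} in every fan-crossing drawing (not merely in the displayed RAC drawing) requires showing that the grid, as an abstract graph, cannot be redrawn so as to dissolve those crossings while remaining fan-crossing, and that the endpoints of at least one added path remain topologically separated by a crossed edge in any admissible embedding. I expect this to be the most delicate step, handled by a careful case analysis over the possible fan-crossing embeddings of the rigid core, exactly as the fat-edge barrier analysis is carried out in the proof of Theorem~\ref{thm:path-addition}; the remaining counting argument, that a path crossing such a barrier yields an edge met by two independent edges, is then routine.
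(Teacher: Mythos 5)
Your RAC half is fine, but the non-fan-crossing half rests on a rigidity lemma that the paper neither proves nor believes to be within easy reach, and your proposed counterexample is not the graph the paper actually uses. The paper's graph $G$ is not a single tile with three added paths (Fig.~\ref{fig:tile} only illustrates the RAC drawing style): it consists of at least $650$ disjoint tiles together with at least $60$ paths of length $15$ between \emph{every} pair of vertices from different tiles. The reason for this massive redundancy is precisely that the tile does \emph{not} have an essentially unique fan-crossing embedding. The tile by itself is $1$-planar (cross the two diagonals in each cell), hence comfortably fan-crossing, and fan-crossing embeddings are flexible; the paper even remarks, for the related crossed nested triangle graph, that it ``admits many fan-crossing drawings.'' Moreover, immediately after the proof the paper states that a single $10\times 10$ tile with one added path is probably not fan-crossing, but explicitly leaves this as an unproven \emph{claim} --- that is, the simpler counterexample you propose is exactly the statement the author could not (or chose not to) establish, and your sketch defers its key step (``rigidity-plus-separation'') to an unspecified case analysis.

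The idea your proposal is missing is how the paper circumvents rigidity entirely. The only structural fact it extracts from the tile is robust under \emph{all} fan-crossing embeddings: since outer fan-planar graphs have at most $3n-5$ edges \cite{bddmpst-fan-15} and outer fan-crossing coincides with outer fan-planar (configuration II is impossible with all vertices on the outer face), the $72$-edge tile is not outer fan-crossing, so in every fan-crossing embedding each tile has a vertex (its ``center'') inside a boundary Jordan curve of at most $24$ edges. The contradiction is then obtained by counting, not by uniqueness: a pigeonhole argument on the planar dual extracts, from the $650$ tiles, five tiles at mutual distance at least two; the $\geq 60$ paths between their centers must each cross two boundaries of length $\leq 24$, forcing doubly crossed boundary edges, which in a fan-crossing embedding can only be crossed by first or last path edges (inner path edges are pairwise independent --- note that your assertion ``an edge of a path shares no vertex with $a$ or $b$'' fails exactly for first and last edges, a case the paper isolates via its ``red path'' notion); the red paths among five mutually distant centers yield a subgraph homeomorphic to $K_5$, forcing two paths to cross, and with $\geq 14$ parallel red paths of length $15$ some \emph{inner} path edge is crossed by two independent edges, violating fan-crossing. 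Without a proof of your rigidity lemma --- which is doubtful as stated --- your route-and-count step has no foundation, so the proposal as written has a genuine gap.
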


\begin{proof}
A \emph{tile} $T$ is an internally crossed   $5 \times 5$ grid
  with vertices $t_{p,q}$ for $1
\leq p,q \leq 5$ and edges $\edge{t_{p,q}}{t_{r,s}}$ if
$\max\{|p-r|, |q-s|\} = 1$. 
It has  25 vertices and 72 edges and diameter four, see
 Fig.~\ref{fig:tile}.

 Since outer fan-planar graphs have
a density of $3n-5$ \cite{bddmpst-fan-15}, a tile is not
  outer fan-planar. Note that every outer fan-crossing graph is outer
fan-planar, since an edge cannot be crossed from both sides if all
vertices are in the outer face. Thus there is no configuration II
\cite{ku-dfang-14}. Hence, in every fan-crossing embedding, each
tile $T$ has at least one vertex in its interior, called the
\emph{center} of $T$, and at most 24 edges in the boundary of its
outer face.  There are at least three vertex-disjoint paths between
the center and vertices in the boundary, since a tile is
3-connected.

Let graph $G$ consists of at least  650  tiles and  at least 60
paths between any two vertices from different tiles. Suppose there
is a fan-crossing embedding of $G$. Each tile has a Jordan curve for
its boundary, so that the center is in the interior of the Jordan
curve. We say that two tiles cross, nest, and are disjoint if their
boundaries cross, nest or are disjoint, respectively. They nest if
one tile is in the interior of the
boundary of the other tile. 
Define
the distance    between 
two tiles by the minimum number of boundaries that must be crossed
by a path $P$ between their centers   if $P$ is added to $G$ in a
fan-crossing embedding of $G \oplus P$. In particular, the distance
is at least two if the tiles are disjoint, since an added path must
cross the boundary of each of them.

Suppose there are (at least) five tiles $T_i$   with   mutual
distance at least two. Then consider $p \geq 60$ paths of length 15
between their centers. 
%
Each path must cross at least two boundaries. Since each boundary
consists of at   most 24 edges, at least one of its edges is crossed
twice if there are at least 25 paths. In a fan-crossing embedding,
this edge is crossed by the first (or last) edges of the paths,
since all other edges of the paths are independent. We call  $P$ a
\emph{red path} if the first and last edges cross an edge of a
boundary. Suppose there is a red path between the centers  of tiles
$T_i$ and $T_j$ for $1 \leq i, j \leq 5$. Then there is a subgraph
homeomorphic to $K_5$, so that at least two paths must cross. Since
the paths have length 15 and their first and last edges are crossed
by edges of boundaries, at least one inner edge of a path is crossed
twice by independent edges from two paths if there are at least 14
parallel red paths. Then there is a violation of a fan-crossing
embedding.

Since the centers of the tiles are unknown, there are $p \geq 60$
paths of length 15 between any two vertices from different tiles.
Graph $G$ is not fan-crossing if there are at least five tiles at
mutual distance two. Since the boundary of   a tile has length 24,
at least $60-23$ paths from the center cross the boundary in fans
for size at least two, and at  least $60-46$ paths between two
centers are red.
 Each red path has 13 edges between
the boundaries, so that there is an independent crossing if there
are at least 14 red paths between the centers of five tiles.

Graph    $G$ is a RAC graph. Therefore, draw each tile as a RAC
graph as shown in Fig.~\ref{fig:tile}, so that the edges between
grid points are parallel to the axis. Place the tiles from left to
right so that they are disjoint and their bottom row is on the
x-axis. There is a bundle of paths between any two vertices from
different tiles.
 Route each added path $P$ from a
vertex   to the outer face of the tile, so that internally there are
right angle crossings. This needs paths of length at most six, so
that the sixth edge is parallel to the $y$-axis. The middle edges
between the seventh and tenth vertex of each path are horizontal or
vertical and are routed outside the tiles, so that edges of two
paths do not overlap. If they cross, they cross at a right angle. If
there are several paths between two vertices, then they are
piecewise in parallel, except for the first and last edges.

It remains to show that there is  a graph  consisting of at least
five tiles at distance at least two. Consider a fan-crossing
embedding  $\mathcal{E}(G)$ of $k$ tiles, so that only the outer
boundary and the center in its interior  are taken into account. In
other words, there are $k$ Jordan curves and $k$ points, so that a
point is in the interior of a particular Jordan curve. The boundary
of a tile can be crossed at most 24 times by the boundaries of other
tiles, since it has length at most 24. In fact,  three or four edges
from both boundaries are involved in a crossing. Consider the planar
dual of $\mathcal{E}(G)$.
If center $c$ of tile $T$ is in face $f$, then assign $T$ to $f$.
Face $f$ has at most 25 adjacent faces, that are accounted to $f$,
since the boundary of a tile must be crossed for a new face and it
can be crossed   at most 24 times. In addition, there is the outer
face or the face from a  nesting tile enclosing $T$. Faces from
tiles in its interior are accounted to these tiles. Moreover, at
most five tiles can be assigned to a face, since a tile has diameter
four and any path from the center to a vertex in the boundary of $T$
can cross at most four  boundaries.
%
%
Suppose there are at least 650 tiles. If several tiles are assigned
to a face, then keep only one them. Each face has at most 25
neighbors with an assigned tile at distance one, so that there are
26 candidates. Hence, at least five tiles remain,   so that the
faces, to which they are assigned, have distance at least two.
%
\end{proof}

There should be simpler counterexamples, for example, a
  $10 \times 10$ tile with an added path  between the vertices at positions $(4,4)$ to $(7,7)$.
 We claim that this graph   is not
fan-crossing, since in any fan-crossing embedding, the end vertices
of the path
  are separated by a circle of crossed edges,
so that an internal edge of the path must cross such
an edge.\\

\begin{theorem} \label{thm:subdiv-appl}
If the crossing number exceeds the number $m$ of edges of graph $G$
so that   $cr(G) > 3 k m$, then the 3-subdivision $\sigma_3(G)$ is a
RAC graph, and thus fan-crossing free and quasi-planar, whereas
$\sigma_3(G)$ is not $k$-gap-planar and not $2k$-planar.
\end{theorem}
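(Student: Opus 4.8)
The claim has two independent halves, and I would treat them separately. The positive half --- that $\sigma_3(G)$ is RAC, and hence fan-crossing free and quasi-planar --- requires no fresh argument: Theorem~\ref{thm:universal}(ii) already asserts that the RAC graphs are universal for $3$-subdivision, so $\sigma_3(G)$ is a RAC graph for every $G$, and since every RAC graph is both fan-crossing free and quasi-planar, these two memberships follow at once. I would simply invoke these facts.

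For the negative half the plan is to compare the crossing number, which subdivision leaves unchanged, against the maximum number of crossings a $k$-gap-planar drawing can sustain. The step I would secure first is the inequality $cr(\sigma_3(G)) \ge cr(G)$: from any drawing of $\sigma_3(G)$, contract each subdivision path back to a single curve; crossings between two segments of the same original edge are discarded (only lowering the count), while every other crossing persists as a crossing between two edges of $G$, so the drawing of $\sigma_3(G)$ has at least $cr(G)$ crossings. Next I would use the defining property of $k$-gap-planarity: each crossing is charged to one of its two edges and no edge is charged more than $k$ times, so a $k$-gap-planar drawing of a graph with $M$ edges carries at most $kM$ crossings. If $\sigma_3(G)$ were $k$-gap-planar we would therefore obtain $cr(G) = cr(\sigma_3(G)) \le kM$, contradicting the hypothesis once $cr(G)$ exceeds $k$ times the edge count of $\sigma_3(G)$. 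The $2k$-planar assertion then comes for free, since every $2k$-planar drawing is $k$-gap-planar \cite{bbc-1gap-18}, so the failure of $k$-gap-planarity forces the failure of $2k$-planarity, and no separate computation is needed there.

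The main obstacle I anticipate is pinning down the exact constant in front of $km$. A $3$-subdivision replaces each of the $m$ edges by a path of four segments, so the plain edge count of $\sigma_3(G)$ is $4m$ and the direct bound above reads $cr(G) \le 4km$; matching the stated threshold $3km$ requires a sharper accounting, presumably the observation that the two segments incident to the original vertices need not be charged any crossing, leaving only $3m$ crossing-bearing segments. I would therefore scrutinise this refinement with care, keeping in mind that the non-membership claim must hold for \emph{every} drawing, not merely a crossing-minimal one, so any reduction in the effective edge count has to be justified uniformly over all $k$-gap-planar drawings rather than only over an optimal drawing of $\sigma_3(G)$.
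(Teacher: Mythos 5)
Your proposal follows the paper's route exactly: the positive half is dispatched by Theorem~\ref{thm:universal}(ii) together with the inclusions of \textsf{RAC} in \textsf{FCF} and \textsf{QUASI}, and the negative half compares the subdivision-invariant crossing number against the crossing budget $k\cdot|E|$ of a $k$-gap-planar drawing, after which the $2k$-planar claim follows from the fact that every $2k$-planar drawing is $k$-gap-planar \cite{bbc-1gap-18}. The only divergence is precisely the point you flagged, and your suspicion is well-founded: the paper's proof simply asserts that $\sigma_3(G)$ ``has at most $3m$ edges'' and hence at most $3km$ crossings, but by the paper's own definition a $3$-subdivision replaces each edge by a path of length four, so $\sigma_3(G)$ has $4m$ edges and the honest budget is $4km$, exactly as you computed. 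The refinement you contemplate (leaving the end segments uncharged) is indeed not available, since non-membership must be refuted over \emph{all} $k$-gap-planar drawings and nothing prevents an adversarial drawing from assigning crossings to end segments; so the paper's stated threshold $3km$ rests on a miscount, and the theorem as proved --- by the paper's argument or by yours --- is secure only under the hypothesis $cr(G) > 4km$ (which costs nothing qualitatively, e.g.\ in the subsequent corollary one merely takes a slightly larger complete graph). Your explicit verification that $cr(\sigma_3(G)) \geq cr(G)$ by contracting the subdivision paths is a detail the paper leaves implicit, and it is carried out correctly.
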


\begin{proof}
Graph $\sigma_3(G)$ is a RAC graph by
Theorem~\ref{thm:universal}(ii). It has at most  $3m$ edges, so that
there are at most $3km$ crossings in a $k$-gap-planar drawing of
$\sigma_3(G)$. Since the crossing number is preserved by
subdivisions, $\sigma_3(G)$ is not $k$-gap-planar if $cr(G) > 3km$.
Then it is not $2k$-planar by \cite{bbc-1gap-18}.
\end{proof}

Bae et al.~\cite{bbc-1gap-18}  have shown that the number of edges
in $n$-vertex $k$-planar graphs is at most $\max\{5.58\sqrt k,
17.17\} n$ and at most $5n-10$ for $k=1$. The Crossing Lemma and
$cr(K_{12})=150$ \cite{pr-crK11-07} provide simple examples of non
1-gap-planar graphs. Our example  is an alternative to the RAC and
non 1-planar graph by Eades and Liotta \cite{el-racg1p-13}, which
has 85 vertices and is constructed from $K_5$  by an addition of
many short paths similar to the graph in Theorem~\ref{thm:tiles}.

\begin{corollary}
The 3-subdivision of $K_{19}$ is  RAC and not 1-gap-planar
(2-planar). The 2-subdivision of $K_{12}$ is fan-crossing free and
not 1-gap-planar (2-planar)
\end{corollary}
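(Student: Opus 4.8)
The plan is to read off both statements as instances of the machinery already in place, so that the only genuine work is verifying two crossing-number inequalities. The first sentence I would obtain as a direct application of Theorem~\ref{thm:subdiv-appl} with $k=1$ and $G = K_{19}$; the second I would split into its fan-crossing-free part, which is immediate from the universality in Theorem~\ref{thm:universal}(iii), and its gap-planarity part, which I would derive by repeating the crossing-count argument from the proof of Theorem~\ref{thm:subdiv-appl} with the $3$-subdivision replaced by a $2$-subdivision.

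For $K_{19}$ I would first record $m = \binom{19}{2} = 171$, so that the hypothesis of Theorem~\ref{thm:subdiv-appl} for $k=1$ reads $cr(K_{19}) > 3 \cdot 171 = 513$. The Crossing Lemma with $c = 1/29$ gives only $cr(K_{19}) \geq 171^3/(29 \cdot 19^2) \approx 478$, which is just short of $513$; I would therefore use the sharper elementary counting bound $cr(K_n) \geq \binom{n}{5}/(n-4)$, obtained by summing $cr(K_5) = 1$ over all $\binom{n}{5}$ five-element vertex sets and noting that each crossing lies in exactly $n-4$ of them. For $n = 19$ this gives $cr(K_{19}) \geq \lceil 11628/15 \rceil = 776 > 513$, and Theorem~\ref{thm:subdiv-appl} then yields at once that $\sigma_3(K_{19})$ is RAC, hence fan-crossing free and quasi-planar, and neither $1$-gap-planar nor $2$-planar.

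For $K_{12}$ the fan-crossing-free assertion is immediate from Theorem~\ref{thm:universal}(iii), since the $2$-subdivision of every graph is fan-crossing free. For the negative part I would mimic the proof of Theorem~\ref{thm:subdiv-appl}: the crossing number is preserved under subdivision, so $cr(\sigma_2(K_{12})) = cr(K_{12}) = 150$ (the Harary--Hill value, proven for $n \leq 12$ \cite{pr-crK11-07}), whereas in a $1$-gap-planar drawing the number of crossings is bounded by the edge count that may carry assigned crossings. Counting exactly as in Theorem~\ref{thm:subdiv-appl}, where the $3$-subdivision contributes the bound $3km$, the $2$-subdivision contributes at most $2 \cdot 1 \cdot 66 = 132$ assignable crossings, which is strictly less than $150$; hence $\sigma_2(K_{12})$ is not $1$-gap-planar, and therefore not $2$-planar by \cite{bbc-1gap-18}.

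The main obstacle lies entirely in the crossing-number bookkeeping rather than in any new drawing construction. On the one hand, $cr(K_{19})$ must be bounded below without the Harary--Hill conjecture, which is still open for $n = 19$; this is why I route the estimate through the $\binom{n}{5}/(n-4)$ counting bound, whose comfortable gap ($776$ against $513$) makes the step robust. On the other hand, the $K_{12}$ case is delicate precisely because $cr(K_{12}) = 150$ sits only just above the crossing budget, so the accounting of how many sub-edges of the $2$-subdivision can actually carry assigned crossings must be done carefully and matched to the convention used in Theorem~\ref{thm:subdiv-appl}; this tightness is the one place in the argument that has to be checked rather than merely cited.
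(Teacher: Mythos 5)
Your $K_{19}$ half is sound, and in fact it repairs a real numerical weakness: as you observe, the Crossing Lemma with $c=1/29$ gives only $cr(K_{19})\gtrsim 478$, short of the threshold in Theorem~\ref{thm:subdiv-appl}, while your elementary counting bound $cr(K_n)\geq\binom{n}{5}/(n-4)$ yields $cr(K_{19})\geq 776$. This is comfortably robust: note that by the paper's own definition an $s$-subdivision replaces each edge by a path of length $s+1$, so $\sigma_3(K_{19})$ actually has $4\cdot 171=684$ edges (not $3\cdot 171$, as the proof of Theorem~\ref{thm:subdiv-appl} miscounts), and your $776>684$ still clears the corrected crossing budget. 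So the first sentence of the corollary is fully established by your argument, indeed more solidly than by the paper's cited ingredients.

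The $K_{12}$ half, however, has a genuine gap, precisely at the spot you flagged as ``the one place that has to be checked'': your count of $2\cdot 1\cdot 66=132$ assignable crossings is wrong. A $2$-subdivision replaces each edge by a path of length three, so $\sigma_2(K_{12})$ has $3\cdot 66=198$ edges, and in a $1$-gap-planar drawing each crossing is assigned to one of its two edges with at most one crossing per edge, giving a budget of up to $198$ crossings. Since $cr(\sigma_2(K_{12}))=cr(K_{12})=150<198$, the counting argument yields no contradiction, and the same slack defeats the direct $2$-planar count ($\leq 198$ crossings as well). Your figure $132$ comes from transplanting the paper's convention ``$s$-subdivision contributes $skm$,'' which is itself off by one ($\sigma_s(G)$ has $(s+1)m$ edges); for $K_{19}$ this error is absorbed by your generous lower bound, but for $K_{12}$ there is no rescue, because even the exact Harary--Hill value $150$ sits \emph{below} the true budget $198$. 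Hence non-$1$-gap-planarity of $\sigma_2(K_{12})$ cannot be obtained by this crossing-count method at all (this defect is inherited from the paper, whose corollary implicitly relies on the same miscount); only the fan-crossing-free assertion, via Theorem~\ref{thm:universal}(iii), stands as you state it, and the negative claim for $\sigma_2(K_{12})$ would require a genuinely different argument.
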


Angelini et al.~\cite{abks-ldgbp-18} observed that cube-connected
cycles are not $k$-planar. Their arguments apply to $k$-gap-planar
  and   fan-crossing graphs, so that we obtain   alternative
  proofs for quasi-planar graphs that are neither $k$-gap-planar
  \cite{bbc-1gap-18}  nor fan-crossing
  \cite{abks-beyond-Kn-19, at-mneqpg-07, bddmpst-fan-15}.

\begin{theorem} \label{thm:CCC-not-FAN-GAP}
For every $k$, there are  cube-connected cycles  that are
fan-crossing free and  quasi-planar, respectively, and are not
 $k$-planar, $k$-gap planar, and fan-crossing, respectively
\end{theorem}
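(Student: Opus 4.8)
The plan is to obtain both halves of the statement directly from the node-to-circle results already established, since $CCC^d$ is exactly the node-to-circle expansion $\eta(H^d)$ of the $d$-dimensional hypercube. For the positive half, Theorem~\ref{thm:node-expansion-uni}(i) asserts that the fan-crossing free graphs and the quasi-planar graphs are universal for node-to-circle expansion; applied to $G = H^d$ this immediately gives that $CCC^d = \eta(H^d)$ is fan-crossing free and (via a second, different drawing) quasi-planar, for every $d$.

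For the negative half, I would run a crossing-counting argument anchored by the lower bound $cr(CCC^d) \geq \tfrac{1}{20}4^d - (9d+1)2^{d-1}$ from \cite{sv-cnCCC-93}. Since $CCC^d$ is $3$-regular on $n = d2^d$ vertices it has $m = \tfrac{3}{2}d2^d$ edges, so its crossing number grows like $\Theta(4^d)$ while $m = \Theta(d2^d)$; the quotient $4^d/(d2^d) = 2^d/d$ tends to infinity, so $cr(CCC^d)$ eventually dominates any fixed multiple of $m$.

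I would then convert this growth into the three forbidden properties. In a $k$-planar drawing every crossing is charged to the two edges meeting at it, giving $cr \leq km/2$; in a $k$-gap-planar drawing each crossing is charged to a single incident edge, giving $cr \leq km$; and by Lemma~\ref{lem:fan-2planar} a fan-crossing graph of maximum degree three is $2$-planar, which would force $cr \leq m$. In each case the upper bound is $O(d2^d)$ for fixed $k$, so the lower bound on $cr(CCC^d)$ is violated once $d$ is large enough. Choosing a single threshold $d$ for which $\tfrac{1}{20}4^d - (9d+1)2^{d-1}$ exceeds $\max\{km, m\}$ makes $CCC^d$ simultaneously not $k$-planar, not $k$-gap-planar, and not fan-crossing, which completes the argument.

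I do not expect a genuine obstacle here: the theorem is essentially the specialization of Theorem~\ref{thm:node-expansion-uni} to cube-connected cycles, and the only real content is the asymptotic comparison $2^d/d \to \infty$, which is routine. The remark following Theorem~\ref{thm:node-expansion-uni} already records $CCC^{11}$ as the explicit witness for $k=1$, so the present argument merely makes the quantifier over all $k$ explicit.
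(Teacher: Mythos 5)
Your proposal is correct and follows essentially the same route as the paper: the positive half invokes Theorem~\ref{thm:node-expansion-uni} for the fan-crossing free and quasi-planar drawings of $\eta(H^d)=CCC^d$, and the negative half compares the $\Theta(4^d)$ crossing-number lower bound of \cite{sv-cnCCC-93} against the $O(km)$ crossing budget of $k$-planar and $k$-gap-planar drawings, handling fan-crossing via Lemma~\ref{lem:fan-2planar}. You merely spell out the counting ($m=\tfrac{3}{2}d2^d$, $cr\leq km/2$ resp.\ $km$) that the paper compresses into the condition $k \leq \tfrac{1}{30}\tfrac{2^d}{d}-4$.
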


\begin{proof}
Cube-connected cycles are fan-crossing free and quasi-planar by
Theorem~\ref{thm:node-expansion-uni}. They are not $k$-(gap)-planar
if  the crossing number \cite{sv-cnCCC-93} exceeds $k$-times the
number of edges, which holds for $d$-dimensional cube-connected
cycles and $k \leq \frac{1}{30}\frac{2^d}{d}-4$. For fan-crossing
graphs use Lemma \ref{lem:fan-2planar}.
\end{proof}

\begin{corollary} \label{cor:counterexFAN}
$CCC^{11}$ is fan-crossing free and quasi-planar but neither
fan-crossing  nor 3-gap-planar nor 6-planar.
\end{corollary}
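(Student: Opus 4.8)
The plan is to obtain the corollary by specializing Theorem~\ref{thm:CCC-not-FAN-GAP} to $d = 11$, which reduces to a direct arithmetic comparison of the crossing number of $CCC^{11}$ against the densities permitted by the relevant drawing classes. The two positive assertions are immediate: $CCC^{11} = \eta(H^{11})$ is the node-to-circle expansion of the $11$-dimensional hypercube, and hence it is fan-crossing free and quasi-planar by Theorem~\ref{thm:node-expansion-uni}(i).

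For the negative assertions I would first record the parameters. As $CCC^{11}$ is $3$-regular with $d\,2^d = 11\cdot 2^{11} = 22528$ vertices, it has $m = \tfrac32\cdot 11\cdot 2^{11} = 3\cdot 11\cdot 2^{10} = 33792$ edges, and the Scheinerman--Vavasis lower bound \cite{sv-cnCCC-93} gives
\[
cr(CCC^{11}) \;\geq\; \tfrac{1}{20}4^{11} - (9\cdot 11+1)\,2^{10} \;=\; 209715.2 - 102400 \;=\; 107315.2 .
\]
I then compare this against the maximum number of crossings each class tolerates. A $k$-gap-planar drawing has at most $km$ crossings (each of the $m$ edges is charged at most $k$), so $cr(G) > km$ forces $G$ outside $k$\textsf{GAP}. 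With $k=3$ we have $3m = 101376 < 107315 \leq cr(CCC^{11})$, so $CCC^{11}$ is not $3$-gap-planar; since every $6$-planar drawing is $3$-gap-planar \cite{bbc-1gap-18}, it is also not $6$-planar. Finally, $CCC^{11}$ has maximum degree three, so by Lemma~\ref{lem:fan-2planar} any fan-crossing drawing would be $2$-planar and hence $1$-gap-planar; but $1$-gap-planarity needs $cr \leq m = 33792$, contradicting $cr(CCC^{11}) \geq 107315$. Thus $CCC^{11}$ is not fan-crossing.

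The argument is essentially mechanical once the crossing-number bound is in hand, so there is no genuine obstacle, only a bookkeeping point to watch. The rounded estimate $k \leq \tfrac{1}{30}\tfrac{2^d}{d}-4$ recorded in the proof of Theorem~\ref{thm:CCC-not-FAN-GAP} is too lossy at $d=11$ to reach $k=3$ (it certifies only $k \leq 2$), so the corollary must invoke the sharper direct comparison $cr(CCC^{11}) > 3m$ rather than substituting $d=11$ into that rounded expression. The binding requirement among the three claims is exactly this inequality $cr > 3m$ (it yields both non-$3$-gap-planarity and non-$6$-planarity), while the fan-crossing claim needs only the weaker $cr > m$; confirming that $d=11$ makes $cr > 3m$ hold is the single computation that actually carries the proof.
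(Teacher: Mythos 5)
Your proof is correct and follows essentially the same route as the paper's: the positive claims come from Theorem~\ref{thm:node-expansion-uni}, non-3-gap-planarity (hence non-6-planarity via the $2k$-planar $\subseteq$ $k$-gap-planar inclusion of \cite{bbc-1gap-18}) from comparing the crossing-number lower bound of \cite{sv-cnCCC-93} with $3m$, and non-fan-crossing from Lemma~\ref{lem:fan-2planar} together with $cr > m$. Your explicit arithmetic ($cr(CCC^{11}) \geq 107315.2$ against $m = 33792$, a ratio of about $3.18$) substantiates the paper's terse claim of a factor $c > 3.1$, and your observation that the rounded bound $\frac{1}{30}\frac{2^d}{d}-4$ from the proof of Theorem~\ref{thm:CCC-not-FAN-GAP} certifies only $k \leq 2$ at $d=11$ correctly identifies why the corollary rests on the direct comparison rather than on that formula.
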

\begin{proof}
The lower bound for the crossing number of
$CCC^{11}$~\cite{sv-cnCCC-93} exceeds  the number of edges of
$CCC^{11}$
 by a factor $c>3.1$. Hence, it is not 3-gap-planar and thus not 6-planar
\cite{bbc-1gap-18} and not fan-crossing by Lemma
\ref{lem:fan-2planar}.
\end{proof}

Note that the cube $H^3$  is planar so that $CCC^3$ is planar.
 Figure~\ref{fig:CCC4} shows a 1-planar drawing of $H^4$,
 which can be transformed into a RAC drawing,  and an  IC-planar drawing $CCC^4$, so
that these graphs are 1-planar and RAC \cite{bdeklm-IC-16}. Every cube-connected cycle
is 1-bend RAC \cite{acddfs-pRAC-11}, that is $\sigma_1(CCC^d)$ is RAC. We can
summarize:

\begin{figure}[t]
  \centering
    \includegraphics[scale=0.6]{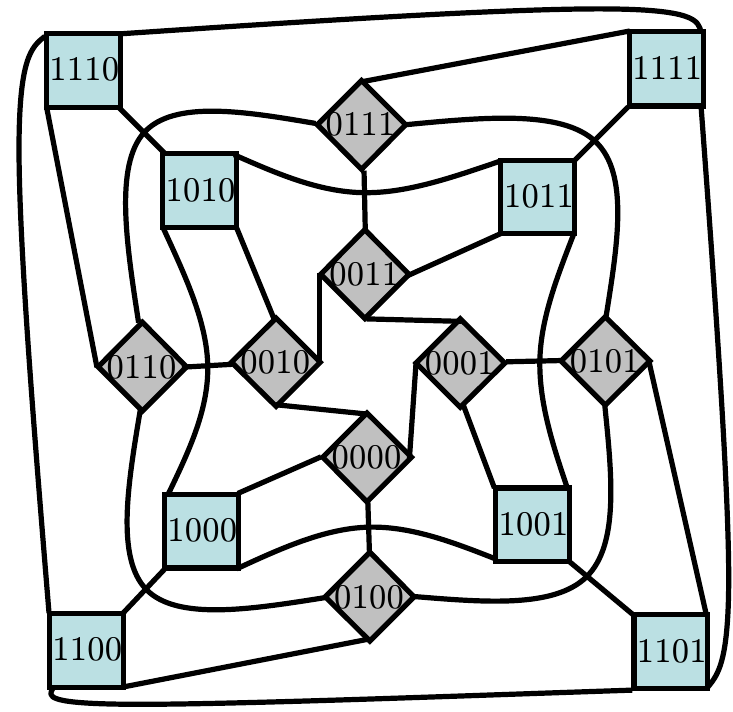}
  \caption{1-planar drawings of $H^4$ and $CCC^4$.
}
 \label{fig:CCC4}
\end{figure}

\begin{corollary} \label{cor:RAC-FAN}
Any two graph classes from   (a)-(c)  are incomparable: (a)
fan-crossing free and RAC, (b) fan-crossing, and (c) 1-gap-planar
and 2-planar.
\end{corollary}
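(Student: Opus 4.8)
The plan is to establish incomparability for every pair of classes taken from two \emph{different} items, noting that within an item the two classes are related by proper inclusion ($\textsf{RAC}\subseteq\textsf{FCF}$ and $2\textsf{PLANAR}\subseteq 1\textsf{GAP}$) rather than being incomparable. These two inclusions are the workhorse of the argument: a witness graph in $X\setminus\textsf{FCF}$ automatically lies in $X\setminus\textsf{RAC}$, and a witness in $\textsf{RAC}\setminus X$ automatically lies in $\textsf{FCF}\setminus X$; likewise a graph outside $3\textsf{GAP}$ is outside $1\textsf{GAP}$, and a graph outside $6\textsf{PLANAR}$ is outside $2\textsf{PLANAR}$. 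Hence for item~(a) against items~(b) and~(c) it suffices to separate \textsf{RAC} from above, using RAC witnesses, and \textsf{FCF} from below, using graphs that are too dense (or otherwise fail) to be fan-crossing free.

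For (a) versus (b): Theorem~\ref{thm:tiles} exhibits a RAC graph, hence fan-crossing free, that is not fan-crossing, giving $\textsf{RAC}\setminus\textsf{FAN}\neq\emptyset$ and $\textsf{FCF}\setminus\textsf{FAN}\neq\emptyset$ simultaneously. Conversely $K_7$ is fan-crossing but not fan-crossing free ($K_6$ being the largest complete fan-crossing free graph), so $K_7\in\textsf{FAN}\setminus\textsf{FCF}\subseteq\textsf{FAN}\setminus\textsf{RAC}$. This settles both pairs $(\textsf{FCF},\textsf{FAN})$ and $(\textsf{RAC},\textsf{FAN})$.

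For (a) versus (c): the corollary following Theorem~\ref{thm:subdiv-appl} provides $\sigma_3(K_{19})\in\textsf{RAC}$ that is neither $1$-gap-planar nor $2$-planar, covering $\textsf{RAC}\setminus 1\textsf{GAP}$ and $\textsf{RAC}\setminus 2\textsf{PLANAR}$ (and hence the same for \textsf{FCF}); alternatively $CCC^{11}$ from Corollary~\ref{cor:counterexFAN} is fan-crossing free, not $3$-gap-planar and not $6$-planar, and therefore lies in $\textsf{FCF}\setminus 1\textsf{GAP}$ and $\textsf{FCF}\setminus 2\textsf{PLANAR}$. For the reverse direction I would argue by density: the tight bound $5n-10$ for both $2$-planar and $1$-gap-planar graphs exceeds the maximum $4n-8$ edges of a fan-crossing free (a fortiori RAC) graph whenever $n>2$, so a $2$-planar graph meeting this bound---which exists for infinitely many $n$---has more than $4n-8$ edges and is therefore not fan-crossing free. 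This yields $2\textsf{PLANAR}\setminus\textsf{FCF}\neq\emptyset$ and $1\textsf{GAP}\setminus\textsf{FCF}\neq\emptyset$, together with the corresponding statements for \textsf{RAC}.

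For (b) versus (c): these are precisely the incomparabilities recalled at the start of Section~\ref{sect:inclusion}, namely fan-crossing versus $2$-planar \cite{bddmpst-fan-15} and fan-crossing versus $1$-gap-planar via $K_8$ and $K_{4,n}$ with $n\geq 9$ \cite{bbc-1gap-18,abks-beyond-Kn-19}, each furnishing witnesses in both directions. The only real obstacle is the bookkeeping in the ``reverse'' directions of item~(a): one must confirm that the $5n-10$ density bounds are actually attained by graphs large enough to beat $4n-8$, after which no fresh construction is needed beyond those already proved above.
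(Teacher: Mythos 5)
Your proof is correct and is essentially the paper's own (implicit) argument: the corollary is stated without a separate proof precisely because it aggregates Theorem~\ref{thm:tiles}, the corollary to Theorem~\ref{thm:subdiv-appl}, Corollary~\ref{cor:counterexFAN}, the inclusions $\textsf{RAC}\subseteq\textsf{FCF}$ and $2\textsf{PLANAR}\subseteq 1\textsf{GAP}$, and the known separations recalled at the start of Section~\ref{sect:inclusion} --- exactly the pieces you assemble. The one place you deviate is the direction $2\textsf{PLANAR}\setminus\textsf{FCF}$ (and hence $1\textsf{GAP}\setminus\textsf{FCF}$): you invoke tightness of the $5n-10$ density bound for $2$-planar graphs, a fact the paper never states, whereas the paper's ready-made witness is $K_7$, which is $2$-planar and fan-crossing (Fig.~\ref{fig:K7-fan-2planar}) but not fan-crossing free, hence not RAC, since $21 > 4\cdot 7-8$. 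Your density route is also sound --- the optimal fan-crossing graphs with $5n-10$ edges cited in the introduction are in fact $2$-planar --- but $K_7$ settles that direction without appealing to anything outside the paper.
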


Figure~\ref{fig:inclusion} displays the relationships   between
major beyond-planar graph classes. The shown inclusions were known,
as well as the incomparability between fan-crossing and 2-planar
resp. 1-gap planar graphs. We have added  incomparabilities for  the
fan-crossing-free and RAC graphs. Still open is the relationship
between the quasi-planar graphs and the fan-crossing free resp.
1-gap-planar graphs. We conjecture an incomparability.

\section{Properties of Fan-Crossing Free Graphs} \label{sect:properties}

It is well-known that the 4-clique $K_4$ admits two embeddings, as a
planar tetrahedron and 1-planar with a pair of crossing edges. The
5-clique $K_5$ has five embeddings \cite{hm-dcgmnc-92}, as shown in
Fig.~\ref{fig:allK5}. Only the so-called T-embedding
  in Fig.~\ref{fig:allK5}(a)  is fan-crossing free and even
1-planar. The embedding in Fig.~\ref{fig:allK5}(e) has an edge which
is crossed by the edges of  triangle. It is 1-gap-planar and
quasi-planar and not fan-crossing and not 2-planar. In fan-crossing
embeddings it can be transformed into
the Q-embedding of Fig.~\ref{fig:allK5}(c)~\cite{b-fan-20}.\\

\subsection{Unique Embeddings} \label{sect:uniqueembeddings}
Consider a fan-crossing free embedding of $K_6$, which is obtained
by placing the next vertex into one of the faces of the T-embedding
of $K_5$. There are two possibilities up to symmetry: in a face with
or without a crossing point. Only the latter results in a
fan-crossing free embedding. The obtained embedding is unique, since
the edges must be routed as shown in Fig.~\ref{fig:K6}. Otherwise,
an edge is crossed by at least two edges of a fan.  The embedding
can be drawn with two or three vertices in the outer face. The
7-clique is not fan-crossing free, since it has too many edges, but
fan-crossing (fan-planar) \cite{bddmpst-fan-15}, see
Fig.~\ref{fig:K7-fan-2planar}. We summarize:

\begin{lemma} \label{lem:uniqueK5K6}
The cliques $K_5$ and $K_6$ are fan-crossing free   and each has a
unique fan-crossing free embedding.
\end{lemma}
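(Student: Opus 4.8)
The plan is to establish two separate claims: first, that $K_5$ and $K_6$ are fan-crossing free (which the preceding discussion has essentially already argued by exhibiting the T-embedding of $K_5$ and the explicit embedding of $K_6$), and second, the uniqueness of the fan-crossing free embedding up to homeomorphism. Since membership in \textsf{FCF} follows directly from producing the embeddings described before the statement, I would treat that part as immediate and concentrate the proof on uniqueness.

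For $K_5$, I would first invoke the known fact \cite{hm-dcgmnc-92} that $K_5$ has exactly five embeddings, and then simply check which of them are fan-crossing free. As the text notes, only the T-embedding of Fig.~\ref{fig:allK5}(a) avoids having an edge crossed by two edges sharing a vertex; in every other embedding some edge is crossed by a fan. The T-embedding is in fact $1$-planar, and since every $1$-planar graph is fan-crossing free \cite{cpkk-fan-15}, this settles both membership and uniqueness for $K_5$ in one stroke. The only real verification here is a finite case inspection over the five listed embeddings.

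For $K_6$ the argument is a combinatorial reduction: any fan-crossing free embedding of $K_6$ restricts to a fan-crossing free embedding of $K_5$, which by the previous paragraph must be the T-embedding. I would then argue that the sixth vertex $v$ lies in one of the faces of the T-embedding, and up to the symmetries of that embedding there are only two choices --- a face incident to a crossing point and a face not incident to one. Placing $v$ in a face containing a crossing point forces one of the five edges from $v$ to cross an already-crossed edge, creating a fan crossing; hence $v$ must lie in a crossing-free face. Once $v$ is fixed there, I would show that the routing of the five edges $\edge{v}{w}$ is forced (as in Fig.~\ref{fig:K6}): any alternative routing would make some edge be crossed by two edges of a common fan. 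This yields the unique embedding.

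The main obstacle I anticipate is the forcing step for $K_6$: showing that, given the T-embedding of $K_5$ and the vertex $v$ placed in the correct face, there is no \emph{other} non-homeomorphic way to route the incident edges without introducing a fan crossing. This requires a careful face-by-face analysis of where each edge $\edge{v}{w}$ can enter and which existing edges it is allowed to cross, using that each edge may be crossed only by mutually independent edges. I would handle this by tracking, for each target vertex $w$, the set of faces it is visible from without crossing a fan, and arguing the reachability pattern leaves exactly one topological routing. The symmetry reduction at the start is what keeps this case analysis finite and manageable rather than an unwieldy enumeration.
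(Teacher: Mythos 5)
Your proposal follows essentially the same route as the paper: for $K_5$ it invokes the enumeration of the five embeddings from \cite{hm-dcgmnc-92} and checks that only the T-embedding of Fig.~\ref{fig:allK5}(a) is fan-crossing free, and for $K_6$ it places the sixth vertex in a face of that T-embedding, rules out (up to symmetry) the face incident to a crossing point, and argues the routing of the five new edges is forced as in Fig.~\ref{fig:K6}, any deviation producing an edge crossed by two edges of a fan. This is exactly the paper's argument, at essentially the same level of detail, so your plan is correct and matches the paper's approach.
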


\begin{figure}[t]
   \centering
   \subfigure[ ]{
     \includegraphics[scale=0.4]{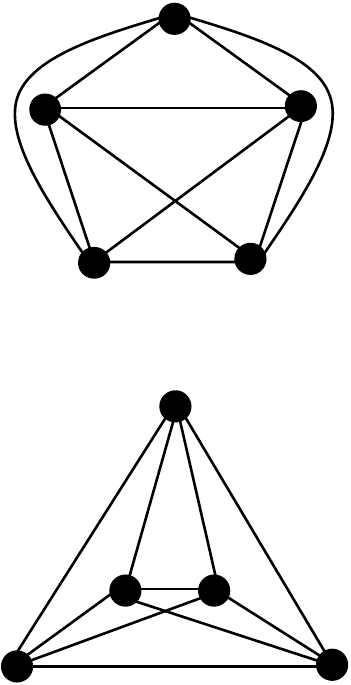}
   }
   \quad
      \subfigure[ ]{
     \includegraphics[scale=0.4]{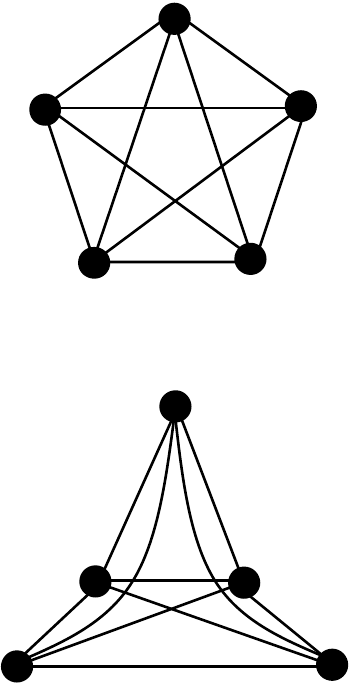}
   }
   \quad
      \subfigure[ ]{
     \includegraphics[scale=0.4]{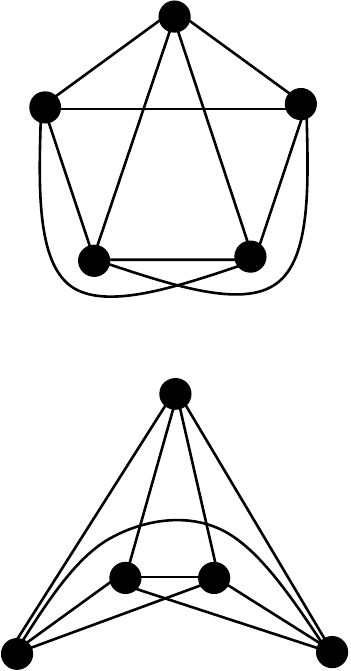}
   }
   \quad
      \subfigure[ ]{
     \includegraphics[scale=0.4]{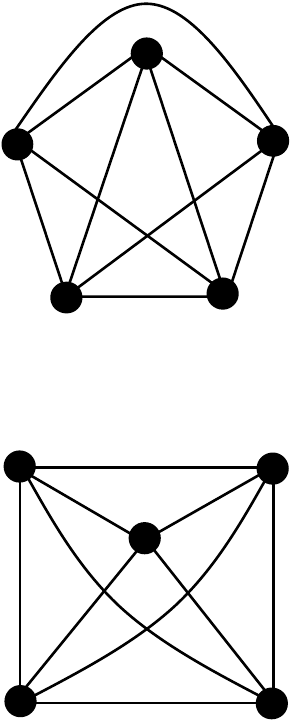}
   }
\quad
   \subfigure[ ]{
     \includegraphics[scale=0.4]{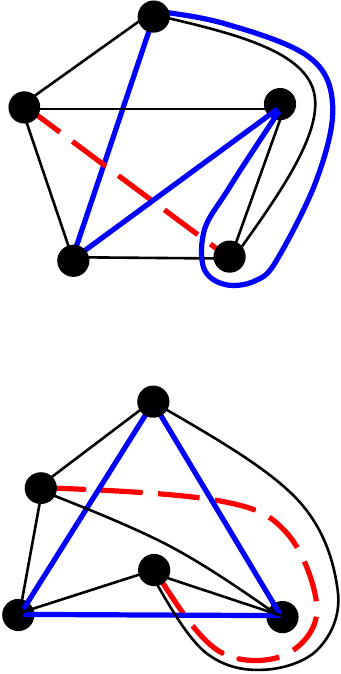}
   }
   \caption{All non-isomorphic embeddings of $K_5$ \cite{hm-dcgmnc-92}, each with two drawings.
   Only the T-embedding (a) is
    1-planar and fan-crossing free.
}
   \label{fig:allK5}
\end{figure}

\begin{figure}[t]
  \centering
   \subfigure[] {
     \includegraphics[scale=0.3]{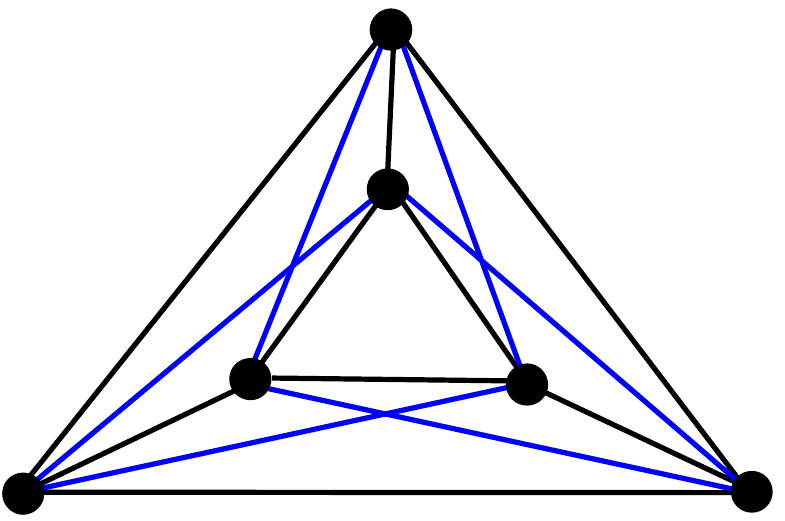}
    \label{fig:K6-triangle}
  }
  \hspace{5mm}
   \subfigure[] {
     \includegraphics[scale=0.3]{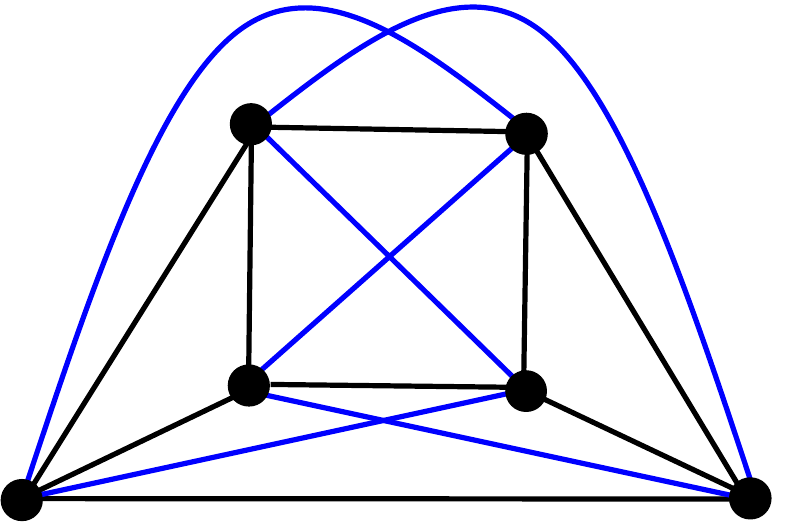}
    \label{fig:K6-Wconf}
  }
  \hspace{15mm}
\subfigure[] {
     \includegraphics[scale=0.4]{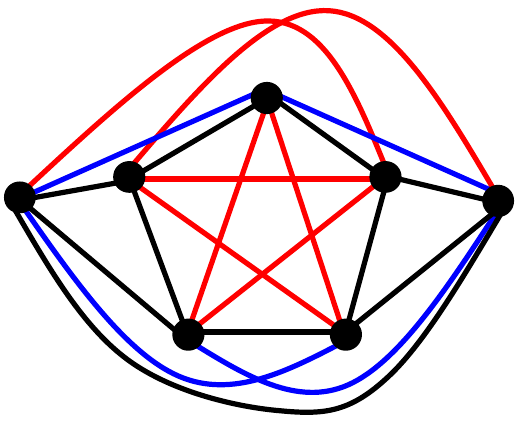}
    \label{fig:K7-fan-2planar}
  }
  \caption{
  The   fan-crossing free embedding of $K_6$ drawn (a) with a triangle and
  (b) with only two vertices in the outer face. (c) A 2-planar, fan-crossing embedding of
  $K_7$. Edges are colored black, blue, and red if they are
  uncrossed and crossed once and twice, respectively.
  }
  \label{fig:K6}
\end{figure}

Cheong et al.~\cite{cpkk-fan-15} have shown that every fan-crossing
free embedding of an $n$-vertex graph with $4n-8$ edges is 1-planar.
Such graphs are called extreme or \emph{optimal 1-planar}
\cite{bsw-1og-84}. The embedding consists of a 3-connected planar
subgraph,  in which each face is a quadrangle with a pair of
crossing edges. Such graphs exist for $n=8$ and for all $n \geq 10$
\cite{bsw-1og-84}. They have a unique 1-planar embedding, except for
  extended wheel graphs $XW_{2k}$ for $k \geq 3$, which have two
embeddings, where the poles are exchanged \cite{s-rm1pg-10}.  An
\emph{extended wheel graph} $XW_{2k}$ consists of two poles $p$ and $q$ and
a circle of length $2k$. There is an edge between each pole and each
vertex of the circle, whereas there is no edge $\edge{p}{q}$. In
addition, there is an edge between a vertex of the circle and the
vertex after next (in cyclic order). Each of these edges is crossed
by an edge incident to a pole. Note that optimal 1-planar graphs can
be recognized in linear time \cite{b-ro1plt-18}.

\begin{corollary} \label{cor:unique-optimal}
Every $n$-vertex fan-crossing free graph with $4n-8$ edges has a
unique embedding, except for the extended wheel graphs $XW_{2k}$,
which have two embeddings.
\end{corollary}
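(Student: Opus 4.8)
The plan is to reduce the statement to the known classification of $1$-planar embeddings of optimal $1$-planar graphs, so that the only work is to match up the two notions of embedding. First I would invoke the result of Cheong et al.~\cite{cpkk-fan-15} (recalled as item~2 of the state of the art above): every fan-crossing free embedding of an $n$-vertex graph with $4n-8$ edges is in fact $1$-planar. Since $G$ attains the maximal density $4n-8$, this means that any fan-crossing free embedding of $G$ is an \emph{optimal} $1$-planar embedding in the sense of~\cite{bsw-1og-84}.

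Next I would establish the converse inclusion at the level of embeddings, namely that every $1$-planar embedding of $G$ is fan-crossing free. This is immediate from the definitions: in a $1$-planar embedding each edge is crossed at most once, so no edge is crossed by two edges at all, and in particular not by two independent edges. Combining the two directions, the fan-crossing free embeddings of $G$ coincide exactly with its $1$-planar embeddings; passing between the two descriptions neither creates nor destroys an embedding.

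With this identification in hand, the corollary follows directly from Suzuki's classification~\cite{s-rm1pg-10}, recalled just above the statement: an optimal $1$-planar graph has a unique $1$-planar embedding, with the sole exception of the extended wheel graphs $XW_{2k}$ ($k \geq 3$), whose two $1$-planar embeddings differ only by exchanging the poles $p$ and $q$. Hence $G$ has a unique fan-crossing free embedding, unless $G = XW_{2k}$ for some $k \geq 3$, in which case it has exactly two.

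There is no genuinely difficult step here; the content of the corollary is inherited wholesale from~\cite{cpkk-fan-15} and~\cite{s-rm1pg-10}. The only point that deserves care is the equivalence of the two embedding notions on this extremal class, and that is settled by the two one-line inclusions above: the nontrivial direction is supplied by~\cite{cpkk-fan-15}, while the trivial direction uses only that a $1$-planar drawing crosses each edge at most once.
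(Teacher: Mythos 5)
Your proposal is correct and takes essentially the same route as the paper, which derives the corollary immediately from the result of Cheong et al.~\cite{cpkk-fan-15} that every fan-crossing free embedding of a graph with $4n-8$ edges is 1-planar, combined with Suzuki's classification~\cite{s-rm1pg-10} of the 1-planar embeddings of optimal 1-planar graphs. Your explicit converse step---that every 1-planar embedding is trivially fan-crossing free, which is what guarantees $XW_{2k}$ has exactly two (rather than at most two) fan-crossing free embeddings---is left implicit in the paper, and making it explicit is a sound bit of extra care rather than a deviation.
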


The \emph{crossed nested triangle graph} $\Delta_k$ consists of $k$
nested triangles $T_1,\ldots, T_k$ with vertices $a_i, b_i, c_i$ for
$i=1,\ldots,k$, so that each quadrangle between two successive sides
of the triangles has a pair of crossing edges and is $K_4$, as shown
in Fig.~\ref{fig:nested-hermit}. Triangle $T_i$ is at \emph{level}
$i$, where $T_1$ is in the outer face.  The subgraph induced by two
consecutive triangles is $K_6$, which is an \emph{inner} $K_6$ if it
is  induced by $T_i, T_{i+1}$ for $i=2,\ldots, k-2$. Graph
$\Delta_k$ has $3k$ vertices and $12k-9$ edges.  It is 1-planar and
admits a straight-line drawing, but it is not a RAC graph, since it
has $K_6$ subgraphs.

\begin{figure}[t]
  \centering
     \includegraphics[scale=0.5]{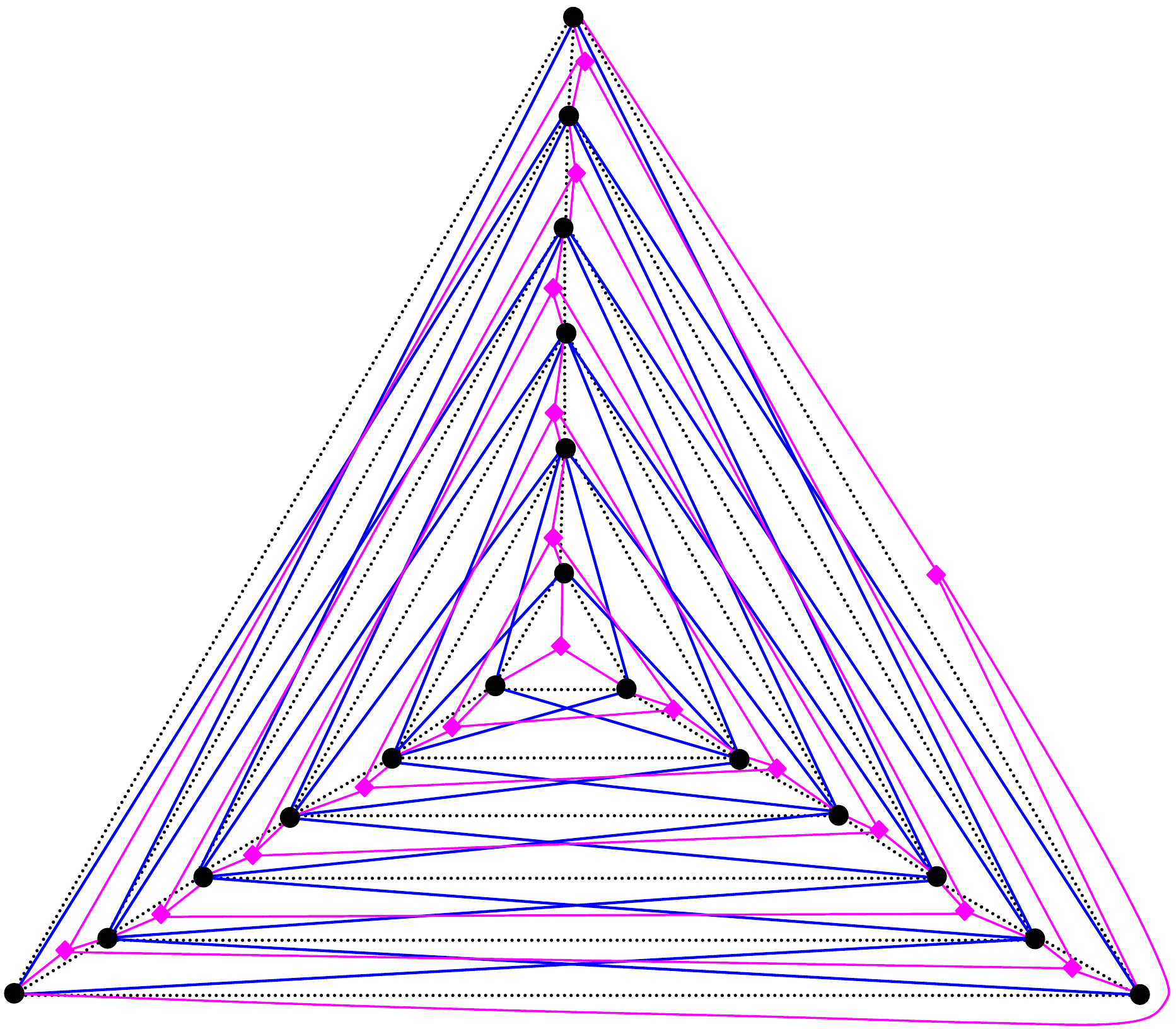}
  \caption{
  A fan-crossing free embedding of a nested triangle graph,
  augmented by  ``hermits'', drawn by  pink squares, for the proof of
  Theorem~\ref{thm:sparse-fcf}. The drawing is not quasi-planar. A re-routing of the pink
  edges makes it quasi-planar.
  }
  \label{fig:nested-hermit}
\end{figure}

\begin{lemma} \label{lem:uniqueXDelta-graph-fcf}
For every $k \geq 1$, the crossed nested triangle graph $\Delta_k$
has a unique fan-crossing free embedding.
\end{lemma}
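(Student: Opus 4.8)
The plan is to prove that the crossed nested triangle graph $\Delta_k$ has a unique fan-crossing free embedding by induction on the level structure, reducing the problem to the rigidity of the $K_6$ embedding established in Lemma~\ref{lem:uniqueK5K6}. First I would observe that each pair of consecutive triangles $T_i, T_{i+1}$ induces a copy of $K_6$, so by Lemma~\ref{lem:uniqueK5K6} this $K_6$ has a unique fan-crossing free embedding. The key is then to argue that these local $K_6$ embeddings must be glued together in a nested, non-crossing fashion. Since each quadrangle between successive triangle sides is a crossed $K_4$, the four crossing pairs force a rigid ``nesting'' of the triangles: $T_{i+1}$ must lie in the bounded region determined by $T_i$ and its inner crossing structure.

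The main argument would run as follows. In any fan-crossing free embedding, consider the outermost triangle $T_1$. Because $\Delta_k$ is 1-planar (as stated in the excerpt) and every fan-crossing free embedding of an optimal-type structure is constrained, each triangle $T_i$ forms a Jordan curve separating the plane. I would show that consecutive triangles cannot cross: if a side of $T_{i+1}$ crossed a side of $T_i$ in the wrong way, then some edge would be crossed by two independent edges coming from the $K_4$ quadrangle structure, violating the fan-crossing free property. Thus the triangles must be genuinely nested, $T_1 \supset T_2 \supset \cdots \supset T_k$, and within each annular region between $T_i$ and $T_{i+1}$ the six vertices together with their connecting edges realize the unique $K_6$ embedding of Fig.~\ref{fig:K6}. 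Proceeding inductively, once the embedding of $T_1, \ldots, T_i$ is fixed up to the placement of $T_i$, the position of $T_{i+1}$ is forced by the unique $K_6$ embedding on $T_i \cup T_{i+1}$.

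The hard part will be ruling out exotic global embeddings that are locally consistent with each $K_6$ but globally re-arranged---for instance, an embedding in which some inner triangle ``escapes'' to the outer face or in which the cyclic rotation at a shared vertex differs between the two $K_6$ copies that meet at that vertex. To handle this, I would exploit the fact that each vertex $a_i, b_i, c_i$ for an inner triangle belongs to \emph{two} consecutive $K_6$ subgraphs simultaneously, so its rotation system is overdetermined: the unique embedding of the lower $K_6$ and the unique embedding of the upper $K_6$ must agree on the common edges, and I would verify that this forces a single consistent rotation at each shared vertex. Since the rotation system at every vertex is thereby uniquely determined, and since a fan-crossing free embedding is an equivalence class determined by its rotation system together with the crossing pattern, the global embedding of $\Delta_k$ is unique. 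The base case $k=1$ is a crossed triangle with its inner structure (essentially a constrained small graph whose embedding is immediate), and the case $k=2$ is exactly $K_6$, handled by Lemma~\ref{lem:uniqueK5K6}.
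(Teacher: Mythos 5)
Your overall strategy---reduce to the unique fan-crossing free embedding of the $K_6$ induced by two consecutive triangles and then glue---is the same as the paper's, but two points in your execution break down. First, you invert the defining condition: you claim a crossing of a side of $T_i$ would make "some edge \ldots crossed by two independent edges \ldots violating the fan-crossing free property." In a fan-crossing free drawing, crossings by independent edges are exactly what is \emph{permitted}; the forbidden pattern is an edge crossed by two edges sharing a vertex (a fan). So the violation you invoke to show consecutive triangles cannot cross is not a violation at all, and this is the central step of your nesting argument.

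Second, your fallback via rotation systems cannot close the real gap. The unique fan-crossing free embedding of $K_6$ admits \emph{two} drawings in the plane: one with a triangle in the outer face and one with only two vertices in the outer face (the W-configuration of Fig.~\ref{fig:K6-Wconf}). These two drawings have the \emph{same} rotation system---they differ only in which face of the $K_6$ hosts the rest of the graph---so "overdetermined rotations at shared vertices" cannot rule out an inner $K_6$ appearing in the W-configuration, which is precisely the exotic arrangement that must be excluded. The paper's proof handles this directly: if an inner $K_6$ were drawn in the W-configuration, some edge would be crossed by at least two edges of a fan whose common vertex lies in that $K_6$ and whose other endpoints are on the previous or next level, contradicting fan-crossing freeness; hence all triangles are nested, with only the choice of outer face (three vertices, or two vertices and a crossing point) remaining free. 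A minor further slip: the base case $\Delta_1$ is just a plain triangle (3 vertices, 3 edges), not "a crossed triangle with inner structure."
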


\begin{proof}
For $k=1$ there is a triangle, and $K_6$ for $k=2$ has a unique
fan-crossing free embedding by Lemma~\ref{lem:uniqueK5K6}. For $k
\geq 3$, the restriction of $\Delta_k$ to two consecutive triangles
is $K_6$. Each such $K_6$ has a unique fan-crossing free embedding,
so that it is drawn with a triangle in its outer face. Otherwise,
there is a   W-configuration, as shown in Fig.~\ref{fig:K6-Wconf},
which enforces a crossing of an edge by at least two edges of a fan
with the common vertex in $K_6$ and the other vertices on the
previous or next level.
 Hence, all triangles are drawn as nested triangles. Clearly,
 one can choose the outer face of the drawing
 with three vertices or with two vertices and a crossing point.
\end{proof}

Note that the crossed nested triangle graph $\Delta_k$ admits many
fan-crossing drawings, since a $K_6$ has many fan-crossing
embeddings with different embeddings of its $K_5$ subgraphs.\\

Next we consider a traversal of $K_5$ or $K_6$ by a path in a
fan-crossing free embedding, which causes a delay  by  at least
three bends. This is used in the NP-hardness proof of Theorem
\ref{thm:NP-fcf}.

\begin{lemma} \label{lem:K5fantraverse}
In every fan-crossing free embedding, if   path $P$ traverses
 $K_5$ $(K_6)$, so that $P$ and the clique have disjoint sets of
 vertices
  and there are different edges of  $K_5$ $(K_6)$
 crossed first and last by $P$,  then at least four (five) edges of
$P$ cross edges of $K_5$ $(K_6)$.
\end{lemma}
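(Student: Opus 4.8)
The plan is to exploit the uniqueness of the fan-crossing free embedding of the clique (Lemma~\ref{lem:uniqueK5K6}), which is $1$-planar, so that in the clique's own drawing every edge is crossed at most once and the planarization $\widehat{K}$ (each crossing replaced by a dummy vertex) has only triangular faces. I first record the decisive consequence of fan-crossing freeness: if one edge $g$ of $P$ crossed two clique edges $e,e'$, then $e$ and $e'$ would have to be independent (otherwise $g$ is crossed by a fan), and the sub-arc of $g$ between the two crossings lies in a single face of $\widehat{K}$, so $e$ and $e'$ bound a common face. In a triangular face the only independent pair of bounding edges is the pair meeting at a dummy vertex; hence $e,e'$ must be a mutually crossing pair of the clique. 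Since a third such crossing would force some clique edge to be crossed twice, which is impossible in the $1$-planar clique drawing, every edge of $P$ crosses at most two clique edges, and exactly two only across a crossing pair.

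Next I turn the traversal into a separation. The endpoints of $P$ lie outside the clique (as $P$ traverses it from the outer face), so I close $P$ into a simple closed curve $\gamma$ by an arc routed through the outer face that meets no clique edge; then the crossings of $\gamma$ with the clique are exactly those of $P$. Because the first and last clique edges crossed by $P$ are distinct, $\gamma$ partitions the clique's vertices into two nonempty classes $I$ and $O$, so every edge of the cut $(I,O)$ is crossed an odd, hence positive, number of times. As $K_n$ is $(n-1)$-edge-connected (here $n=5$ or $6$), this cut has at least $n-1$ edges, each contributing at least one crossing of $P$.

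Finally I count the distinct crossing edges of $P$. By the first paragraph, a single edge of $P$ can cover two cut edges only by double-crossing a mutually crossing pair, and $K_5$ (resp.\ $K_6$) has only one (resp.\ three) such pairs. Casing on $|I|$: when $|I|=1$ the cut edges all share the unique inner vertex and thus form a fan containing no independent pair, so no edge of $P$ covers two of them and at least $n-1$ edges of $P$ cross; when $|I|\ge 2$ the cut has at least $2(n-2)$ edges, and subtracting the at most one (resp.\ three) possible double-coverings still leaves at least $2(n-2)-1\ge 4$ (resp.\ $2(n-2)-3=5$) crossing edges of $P$. In every case at least $n-1$ edges of $P$ cross, giving the bound $4$ for $K_5$ and $5$ for $K_6$.

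The main obstacle is the embedding-specific geometry underlying the first paragraph: I must verify from the explicit unique embeddings (Fig.~\ref{fig:allK5}(a) and Fig.~\ref{fig:K6}) that every face of $\widehat{K}$ is a triangle whose only independent co-facial pair is a crossing pair. This is immediate for $K_5$, but needs checking for $K_6$, which has one edge fewer than an optimal $1$-planar graph and so could a priori carry a larger face that admits a non-crossing independent co-facial pair. The second delicate point is the separation step: I must argue carefully that distinct first and last crossed edges really force both $I$ and $O$ to be nonempty, ruling out a path that merely pokes in and out without enclosing any clique vertex. This is precisely where the traversal hypothesis is essential.
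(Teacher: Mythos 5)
Your route is genuinely different from the paper's. The paper argues directly on the unique T-embedding: it fixes a concrete traversal configuration (endpoints $u_0,u_t$ in the outer face, all interior vertices of $P$ in an inner triangle, first and last crossed edges \emph{adjacent}, sharing a vertex $v_1$) and asserts that each further face transition costs a crossing because ``a simultaneous crossing of two or more edges induces a fan-crossing.'' Your first paragraph is in fact \emph{more} careful than that assertion: by Euler's formula the planarizations are triangulations ($K_5$ has one crossing, $K_6$ has three, giving $12$ resp.\ $21$ planarized edges and $8$ resp.\ $14$ faces, all triangles), no face has two dummy corners (an edge would be crossed twice, contradicting 1-planarity), and so the only independent co-facial pair is a crossing pair --- which means a single edge of $P$ \emph{can} legally cross two clique edges at a dummy corner, the very case the paper's blanket claim overlooks. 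Your final counting (fan cut for $|I|=1$, cut size $2(n-2)$ minus the number of crossing pairs for $|I|\geq 2$) is also sound, given a nonempty separation.

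The genuine gap is exactly the separation step you flagged but did not close, and as you stated it the deduction is false: distinct first and last crossed edges do \emph{not} force both $I$ and $O$ to be nonempty. Concretely, in the T-embedding of $K_5$ let the triangle be $v_1v_2v_3$ with $v_4$ inside and $v_5$ outside, so that $v_4v_5$ crosses $v_1v_2$ in a point $d$; the four faces at $d$ are $v_1v_4d$, $v_2v_4d$, $v_2v_5d$, $v_1v_5d$. Choose $v_1v_5d$ as the outer face and take $P=(u_0,u_1,u_2,u_3)$ with $u_0,u_3$ in the outer face, $u_1,u_2$ in the antipodal face $v_2v_4d$, where $u_0u_1$ crosses $v_1v_2$ and then $v_4v_5$ close to $d$, $u_2u_3$ crosses $v_1v_2$ and then $v_4v_5$ on the other two segments at $d$, and $u_1u_2$ is uncrossed. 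This is fan-crossing free ($v_1v_2$ is crossed by $v_4v_5$, $u_0u_1$, $u_2u_3$, pairwise independent, and likewise $v_4v_5$; each path edge is crossed only by the independent crossing pair), the first and last crossed edges are the distinct edges $v_1v_2$ and $v_4v_5$, yet only \emph{two} edges of $P$ cross $K_5$: every clique edge is crossed an even number of times and your closed curve $\gamma$ encloses no clique vertex. So the nonemptiness of $I$ must be extracted from a stronger, geometric reading of ``traverses'' --- essentially the configuration the paper's proof fixes, which forces the face containing the interior of $P$ to be at positive dual distance from the outer face --- and not from the parity of having distinct first and last crossed edges. (A smaller technical point: $\gamma$ need not be simple, since independent edges of $P$ may cross each other, so the two-sides argument should be replaced by a $\mathbb{Z}_2$ winding-parity argument; this is patchable.) Once a nonempty separation is actually secured, your cut-plus-crossing-pair counting does deliver the bounds $4$ and $5$.
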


\begin{proof}
Let $P = (u_0, \ldots, u_t)$ be a path such that $u_0$ and $u_t$ are
in the outer face and $u_1, \ldots, u_{t-1}$ are in an inner
triangle of the $T$-embedding of   $K_5$. Let the first edge $(u_0,
u_1)$ of $P$ cross edge $(v_1, v_2)$ of $K_5$ and let the last edge
$(u_{t-1}, u_t)$ of $P$ cross $(v_1, v_3)$ in a fan-crossing free
embedding of a graph $G$ that includes  $P$ and   $K_5$. Since $K_5$
has a unique fan-crossing free embedding,  $P$ must cross two more
edges of $K_5$, since a simultaneous crossing of two or more edges
induces a fan-crossing. Hence, $P$ needs at least four edges for a
traversal of $K_5$. The case of $K_6$ is similar.
%
\end{proof}

\subsection{Extremal Graphs} \label{sect:extremal}

A graph $G$ is \emph{maximal} for a class of graphs if no edge can
be added without violation, so that $G+e$ is not in the class. The
\emph{density} is the maximum number of edges of all maximal
$n$-vertex graphs in a class. The \emph{sparsity} is the minimum.
Density and sparsity coincide for planar graphs, whereas they differ
for some classes of beyond-planar graphs. Brandenburg et
al.~\cite{begghr-odm1p-13} have shown that the sparsity of 1-planar
graphs is at most $\frac{45}{17}n - \frac{84}{17}$.  Such graphs are
obtained by so called \emph{hermits}, which are vertices of degree
two that cannot be linked to other vertices without violation.
Similar results have been obtained for 2-planar
\cite{begghr-odm1p-13}, IC, NIC  and outer 1-planar graphs
\cite{abbghnr-o1p-16, bbhnr-NIC-17}.

\begin{theorem} \label{thm:sparse-fcf}
There are maximal fan-crossing free graphs with $m = 7/2n - 17/2$
edges for every $n=6k+1$ with $k \geq 3$.
\end{theorem}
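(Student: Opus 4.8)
The plan is to build the graph from a rigid skeleton whose fan-crossing free embedding is completely forced, and then to saturate every remaining region with trapped low-degree vertices (hermits), so that no further edge can be inserted while keeping the drawing below the optimal density $4n-8$ of Cheong et al. For the skeleton I would take the crossed nested triangle graph $\Delta_k$, which by Lemma~\ref{lem:uniqueXDelta-graph-fcf} has a \emph{unique} fan-crossing free embedding: up to the choice of the outer face, all $k$ triangles are nested and each band between two consecutive triangles is an inner $K_6$ drawn with three crossing quadrangles. This uniqueness is the feature that makes the argument work, because it pins down exactly which faces a prospective new edge could be routed through.

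First I would fix this unique embedding of $\Delta_k$, as in Fig.~\ref{fig:nested-hermit}, and insert a constant number of hermits into the crossing regions of each band, joining each hermit only to the skeleton vertices that bound its face (rerouting the hermit edges, as the figure indicates, so that the whole drawing remains fan-crossing free). Since $\Delta_k$ has $3k$ vertices and $12k-9$ edges, and each band receives a fixed number of hermits of small (two or three) degree, a direct count of skeleton vertices, hermit vertices, skeleton edges and hermit edges gives $n=6k+1$ and $m=\tfrac{7}{2}n-\tfrac{17}{2}$; the constants in the gadget are chosen precisely to hit this value. The hypothesis $k\ge 3$ guarantees that there is at least one inner band in which the hermit gadget has its intended local structure.

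The substance of the proof is maximality: I must show that no edge $e$ can be added to the resulting graph $H$ without destroying fan-crossing-freeness. Because the skeleton embedding is unique, every fan-crossing free drawing of $H+e$ restricts to this embedding on $\Delta_k$, so it suffices to examine, face by face, the regions through which $e$ could pass. For an edge between two skeleton vertices I would argue as in the traversal Lemma~\ref{lem:K5fantraverse} together with Lemma~\ref{lem:uniqueK5K6}: consecutive triangles already induce a complete $K_6$ band, so any such $e$ joins two non-consecutive triangles and must cross a band, which forces it to cross two edges of a fan centred at a triangle vertex, a forbidden configuration. For an edge incident to a hermit $v$, the edges already present at $v$ together with the crossing edges bounding the face of $v$ leave no exit: any new edge leaving that face must cross a pair of adjacent, hence non-independent, edges, again violating fan-crossing-freeness.

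The main obstacle will be this maximality case analysis, namely enumerating the finitely many face types of the unique embedding (the inner triangular faces of the crossings, the central triangle, the outer face, and the hermit faces) and verifying in each that a new edge is forced into a fan crossing. The delicate point is the interaction between a hermit's own edges and the crossing edges of the surrounding $K_6$: the placement and rerouting must be arranged so that the hermits are genuinely trapped while the total drawing stays fan-crossing free, and one must check that hermits cannot be linked to one another either. Once every face type is shown to forbid an addition, maximality follows, and together with the edge count this proves the theorem.
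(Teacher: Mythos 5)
Your skeleton and overall strategy coincide with the paper's: both take the crossed nested triangle graph $\Delta_k$, invoke its unique fan-crossing free embedding (Lemma~\ref{lem:uniqueXDelta-graph-fcf}, resting on Lemma~\ref{lem:uniqueK5K6}), and saturate it with hermits. But your construction omits the one feature that makes maximality true. In the paper, the three hermits $h_i, h'_i, h''_i$ of level $i$ are attached along the edges $\edge{a_i}{a_{i+1}}$, $\edge{b_i}{b_{i+1}}$, $\edge{c_i}{c_{i+1}}$, each joined to the two endpoints of its edge, \emph{and in addition the three hermits of each level are joined to one another} by a triangle $\edge{h_i}{h'_i}$, $\edge{h'_i}{h''_i}$, $\edge{h_i}{h''_i}$ --- the pink edges of Fig.~\ref{fig:nested-hermit}. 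These hermit--hermit edges cross several skeleton edges, yet the drawing remains fan-crossing free because every edge is crossed only by independent edges. This directly refutes your central trapping claim that ``any new edge leaving that face must cross a pair of adjacent, hence non-independent, edges'': fan-crossing freeness does not confine a low-degree vertex to its face, since an escaping edge may cross one boundary edge and continue across further \emph{independent} edges, and each crossed skeleton edge then has two independent crossers. Consequently, if you join each hermit ``only to the skeleton vertices that bound its face,'' the resulting graph is \emph{not} maximal --- precisely the hermit--hermit edges that the paper includes can still be added --- and your face-by-face maximality analysis would fail at the hermit faces. The paper's actual argument has to work harder here: it shows that each hermit can only slide between the two faces to either side of its edge, that it cannot be linked to any skeleton vertex or to any hermit \emph{other than} its two level-mates without forcing a fan crossing, and that the same holds for the two polar hermits; the level triangles are exactly what exhausts the remaining addable edges.

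The edge count also does not come for free in your variant. The paper adds $3(k-1)+2$ hermits and $9(k-1)+6$ hermit-incident edges (six skeleton-attachment edges plus three hermit--hermit edges per level, and three edges for each polar hermit), yielding $21k-12$ edges on $6k-1$ vertices, i.e.\ $m=\tfrac{7}{2}n-\tfrac{17}{2}$. With the same hermits but no hermit--hermit triangles one lands at $18k-9 = 3n-6$, the planar density, missing the target by exactly the $3(k-1)$ pink edges. One can make the arithmetic work with degree-three hermits joined only to the skeleton, but that does not repair maximality: such hermits are, if anything, easier to link to one another through independent crossings. So the gap is genuine and lies in the maximality argument, not merely in bookkeeping.
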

\begin{proof}
Consider a crossed nested triangle graph $\Delta_k$ for $k \geq 3$.
 Attach three hermits $h_i, h'_i$ and $h''_i$ along the edges
$\edge{a_i}{a_{i+1}}, \edge{b_i}{b_{i+1}}$ and $\edge{c_i}{c_{i+1}}$
of each triangle  $T_i = (a_i, b_i, c_i)$ for $i=1,\ldots,k-1$ and
connect each hermit with the  vertices of the edge to which it is
attached. Also link the hermits by edges $\edge{h_i}{h'_i},
\edge{h'_i}{h''_i}$ and $\edge{h_i}{h''_i}$. Add a hermit in the
outer face and link it to $a_1, b_1, c_1$, and similarly for the
inner face, see Fig.~\ref{fig:nested-hermit}.

Graph $H_k$ is fan-crossing free, as shown by
Fig.~\ref{fig:nested-hermit}, where the embedding is not
quasi-planar. We claim that it is   maximal fan-crossing free.
Therefore, observe that $\Delta_k$ is maximal fan-crossing free and
1-planar and has a unique fan-crossing free embedding.
Consider   hermit $h_i$ that is attached to edge
$\edge{a_i}{a_{i+1}}$.  Then $h_i$ can be placed into the faces $f$
and $f'$ to either side of $\edge{a_i}{a_{i+1}}$. It cannot be
placed into another face without creating a fan-crossing by two
edges. Also $h_i$ cannot be linked to another vertex of $\Delta_k$
without creating a crossing by a fan of at least two edges. The case
of $h'_i$ and $h''_i$ is similar. Hermit $h_i$ is linked to $h'_i$
and $h''_i$ by an edge, but not to any other hermit if the embedding
is fan-crossing free. Then at least two edges incident to a vertex
of $\Delta_k$ must be crossed. If three hermits $h_i, h'_i, h''_i$
are simultaneously placed into another face that is next to the one
in which they were placed, then the edges between them create a
crossing by a fan of two edges, as illustrated in
Fig.~\ref{fig:hermitF1}. Similarly, the hermit in the inner face can
be placed in a neighboring face, as illustrated in
Fig.~\ref{fig:hermitF2}, but it cannot be linked to another hermit
or another vertex without violation. Hence, the fan-crossing free
embedding of $H_k$ is unique, as claimed.

Graph $\Delta_k$ has $3k$ vertices and $12k-9$ edges. We have added
$3(k-1)+2$ hermits and $9(k-1)+6$ edges incident to the hermits.
Hence,  graph $H_k$ has  $6k-1$ vertices and $21k-12$ edges, so that
$m = 7/2n - 17/2$.
\end{proof}

\begin{figure}[t]
  \centering
   \subfigure[] {
    \includegraphics[scale=0.35]{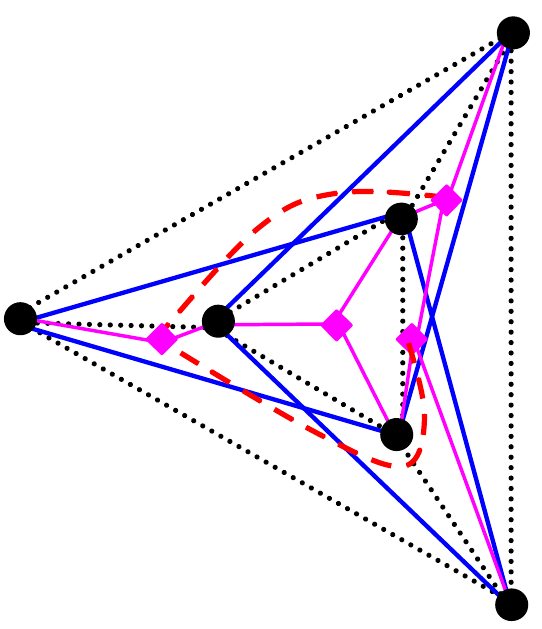}
    \label{fig:hermitF1}
  }
  \hspace{10mm}
   \subfigure[] {
    \includegraphics[scale=0.35]{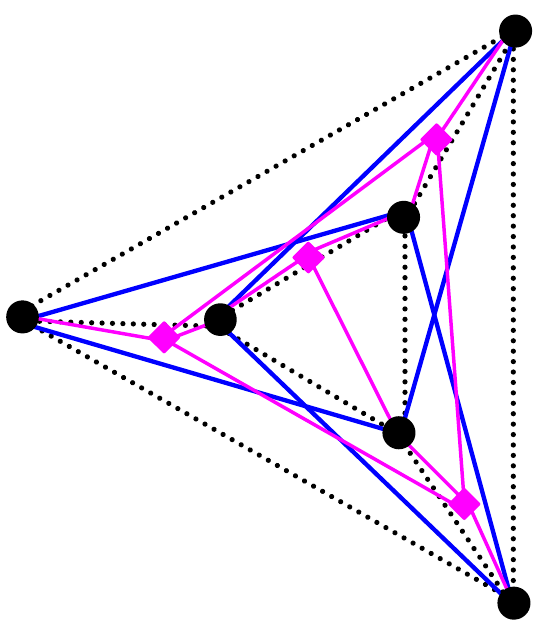}
    \label{fig:hermitF2}
  }
  \caption{Illustration to the proof of
  Theorem~\ref{thm:sparse-fcf} if hermits are moved to other
  faces.
  }
  \label{fig:nested-fcf-F}
\end{figure}

\subsection{The Complexity of Fan-Crossing Free Recognition}
\label{sect:NP}

The recognition of   beyond-planar graphs is  mostly NP-hard, see
\cite{dlm-survey-beyond-19}.
%
%
 It has been expected that   recognizing fan-crossing free graphs
is NP-complete, too, which has been stated as an open problem in
\cite{dlm-survey-beyond-19} and is proved next.

\begin{theorem} \label{thm:NP-fcf}
The recognition of fan-crossing free graphs is NP-complete.
\end{theorem}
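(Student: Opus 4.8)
The plan is to establish NP-completeness in the standard two parts. Membership in $\NP$ is easy: a certificate is a combinatorial embedding (a rotation system together with the cyclic list of crossings along each edge). Given such a witness one verifies in polynomial time that it is a valid drawing (each pair of edges crosses at most once, adjacent edges do not cross) and that every crossed edge is crossed only by pairwise independent edges, i.e., the crossing partners of any edge have distinct endpoints. Since a simple drawing has $\O(m^2)$ crossings, the witness has polynomial size and the check is polynomial.

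The substance is $\NP$-hardness, and the plan is to reduce from a known $\NP$-hard problem by forcing rigid structure using the unique-embedding gadgets developed earlier in the paper. The natural source is the $1$-planarity recognition problem, which is $\NP$-complete \cite{GB-AGEFCE-07, km-mo1ih-13}; alternatively one can reduce from a constraint-satisfaction or orientation problem the way fan-planarity hardness was shown \cite{bcghk-rfpg-17}. The key leverage is that $K_5$ and $K_6$ have \emph{unique} fan-crossing free embeddings (Lemma~\ref{lem:uniqueK5K6}), that $\Delta_k$ has a unique fan-crossing free embedding (Lemma~\ref{lem:uniqueXDelta-graph-fcf}), and above all the traversal bound of Lemma~\ref{lem:K5fantraverse}: a path that crosses through $K_5$ (respectively $K_6$) must spend at least four (five) edges inside. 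The idea is to build clique gadgets that behave as rigid ``walls'' whose interiors are essentially impenetrable: a connector path routed between two gadgets is cheap if it goes around a wall but incurs a forced penalty of at least three extra bends whenever it is forced to pierce one. By wiring the gadgets so that a satisfying assignment of the source instance corresponds exactly to a global routing that avoids all such penalties, and a non-satisfying assignment forces at least one path to pierce a wall, one makes the existence of a fan-crossing free embedding equivalent to satisfiability.

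Concretely I would proceed as follows. First I would fix the clique gadgets and invoke Lemma~\ref{lem:uniqueK5K6} and Corollary~\ref{cor:unique-optimal} to argue that in any fan-crossing free embedding each gadget is drawn in its unique way, so its boundary is a fixed Jordan curve separating an inaccessible interior from the rest of the plane. Second, I would attach to each gadget the variable and clause machinery, using added paths (as in Theorem~\ref{thm:tiles}) whose lengths are chosen so that a path is forced to route along a prescribed corridor: short enough that a legal routing exists only when the corridors are free, and such that a crossing of a wall edge by two independent path edges is unavoidable otherwise, violating the fan-crossing free condition. Third, I would prove the two directions of correctness: from a satisfying assignment, exhibit an explicit fan-crossing free embedding in which every connector stays in its intended corridor and all crossings are by independent edges; conversely, from any fan-crossing free embedding, read off an assignment and use Lemma~\ref{lem:K5fantraverse} together with a counting argument on the bounded boundary length of each gadget (each wall has only boundedly many edges, each crossable at most twice before a fan-crossing appears) to show that if the assignment were not satisfying, some corridor would be overloaded and a forced independent double-crossing would arise.

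The main obstacle will be the \emph{rigidity} argument: guaranteeing that the gadgets really do force a unique global combinatorial structure so that the only freedom left encodes the Boolean choices, and that no ``cheating'' routing—exploiting crossings on gadget boundaries, re-embedding a gadget after adding the connector paths, or folding overlong paths as remarked after the definition of subdivision—circumvents the intended penalties. Handling this robustly requires the crossing-count bookkeeping per boundary edge (at most one crossing per edge before two independent crossers force a fan-crossing violation) exactly in the spirit of the tile argument in Theorem~\ref{thm:tiles}, and carefully choosing path lengths and the number of parallel connectors so that the pigeonhole step forces an independent double-crossing precisely when a wall is pierced. I expect this combinatorial accounting, rather than the high-level reduction, to be where the real work lies.
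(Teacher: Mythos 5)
Your NP-membership argument and your choice of tools (the unique fan-crossing free embeddings of $K_5$ and $K_6$ from Lemma~\ref{lem:uniqueK5K6} and the traversal cost of Lemma~\ref{lem:K5fantraverse}, used to build rigid clique ``walls'' pierced only at a forced penalty) are exactly in the spirit of the paper's proof, which adapts the Grigoriev--Bodlaender construction. But there are two genuine gaps. First, you never actually fix a reduction: you waver between reducing from $1$-planarity recognition and from a CSP/satisfiability-style problem, and you defer the entire gadget construction and both correctness directions to future ``combinatorial accounting.'' The paper instead reduces from $3$-PARTITION, reusing the concrete transmitter/collector/splitter architecture of \cite{GB-AGEFCE-07} with fat edges realized as $K_5$'s and satellites at distance two; the membership count (at most one satellite per sector triangle, exactly $B$ per region) is what encodes $3$-PARTITION, and no variable/clause machinery is needed. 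A reduction from $1$-planarity in particular is not obviously workable, since you would need a transformation $G \mapsto G'$ with $G$ $1$-planar iff $G'$ fan-crossing free, and you give no such map.

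Second, and more decisively, your central pigeonhole mechanism is logically inverted. You argue that overloading a corridor forces ``a crossing of a wall edge by two independent path edges \ldots violating the fan-crossing free condition'' and that each wall edge is ``crossable at most twice before a fan-crossing appears.'' In a fan-crossing \emph{free} drawing, crossings by pairwise independent edges are precisely what is \emph{allowed}; an edge may be crossed arbitrarily often as long as the crossers have distinct vertices. The forbidden pattern is a crossing by two \emph{adjacent} edges, i.e.\ a fan. You have imported the counting logic of Theorem~\ref{thm:tiles}, which is a hardness-of-\textsc{FAN} argument where independent double-crossings are the violation --- the complementary class. The paper's construction sidesteps this: the critical crossing edges are all incident to the common centers $c_T$ and $c_C$ of the transmitter and collector, so any boundary edge crossed by two of them is crossed by a fan, and two satellites in one sector triangle immediately yield a fan-crossing. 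Without replacing your independent-edge counting by an argument that manufactures \emph{shared-vertex} crossers (or by the $K_5$-uniqueness/traversal budget as the paper does for meridian and splitter paths), your reduction's completeness direction fails: the ``overloaded'' embeddings you intend to exclude can be perfectly legal fan-crossing free drawings.
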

\begin{proof}

Clearly, the problem is in NP.

For the NP-hardness we adapt   the construction given by Grigoriev
and Bodlaender \cite{GB-AGEFCE-07} for $1$-planar graphs. The proof
is by reduction from 3-PARTITION, which is a strongly NP-hard
problem \cite{gj-cigtn-79}: An instance $I$ of $3$-PARTITION
consists of a multiset $A$ of $3m$ positive integers with $B/4 < a <
B/2$ and $\sum_{a \in A} =  mB$ for some integer $B$ and   $a \in
A$. The $3$-PARTITION problem asks whether $A$ can be partitioned
into $m$ subsets $A_1, \ldots, A_m$, each of size three, such that
the sum of the numbers in each subset  equals $B$.

From $I$ we construct a graph $G_I$ in polynomial time, so that $I$
has a solution if and only if $G_I$ admits a fan-crossing free
embedding. The components of the reduction are  a transmitter, a
collector, $m$ splitter and ``fat'' edges. A fat edge $\edge{u}{v}$
is a $K_5$ with vertices $u$ and $v$ and three more vertices, where
$u$ and $v$ are  in the   outer face if there is a fan-crossing free
drawing. For an illustration see Fig.~\ref{fig:NPreduction}.

The transmitter is a double-wheel with a center  $c_T$    and two
circuits of length $3m$. Let $C = (u_1, \ldots, u_{3m})$ be the
outer and $C' = (u'_1, \ldots, u'_{3m})$ the inner circuit. Let
$\edge{c_T} {u'_})$ and $\edge{u'_i}{ u_i}$ be fat edges  for each
$i$. Then the center $c_T$ is a vertex of each of the $K_5$ graphs
of the fat edges that are incident to the center. Each sector with
boundary $u_i, u'_i, c_T, u'_{i+1}, u_{i+1}$ is partitioned into
three triangles by (normal) edges $\edge{u'_i}{ u'_{i+1}},
\edge{u_i} {u_{i+1}}$ and a diagonal $\edge{u_i}{u'_{i+1}}$. The
triangle $(u_i, u'_i, u'_{i+1})$ is called a \emph{sector-triangle},
and $(u_i, u_{i+1}, u'_{i+1})$ is called an \emph{outer triangle},
see the enlargement in Fig.~\ref{fig:NPreduction}(b). The collector
is defined accordingly, with a center $c_C$ and $Bm$ sectors with an
outer boundary $(v_1, \ldots, v_{Bm})$. The boundary of the
transmitter is partitioned into $m$ segments, each of width three,
and accordingly, the collector has $m$ segments, each of width $B$.
The ends of the segments are connected by a fat edge in circular
order. Hence, there is fat edge $(u_{3i}, v_{Bi})$ for $i=1,\ldots,
m$.

For each element of $A$ there is a splitter $P_a$  with center $c_a$
and $a+1$ satellites at distance two from the center, i.e., each
satellite has a connector on the path to $c_a$. In
\cite{GB-AGEFCE-07}, the satellites are at distance one from the
center. We connect one satellite of each splitter with the
transmitter center $c_T$ and the remaining satellites with the
collector center $c_C$.
Hence, there is a path of length six   from $c_T$ via each $c_a$ to
$c_C$.

Since $3$-PARTITION is strongly NP-hard, all numbers $a \in A$ can
be given in unary encoding, such that   $G_I$ has polynomial size
and can be constructed in polynomial time.

For the  correctness of the reduction we follow  the arguments given
in \cite{GB-AGEFCE-07}. However, we use  $K_5$ instead of $K_6$ as a
fat  edge, and sector triangles  and satellites at distance two from
the center of each splitter.

If the instance of $3$-PARTITION has a solution, then the
transmitter, collector and regions are drawn as illustrated in
Fig.~\ref{fig:NPreduction}. The splitters of a $3$-set   with
$a_1+a_2+a_3=B$ appear in one region and between the outer
boundaries of the transmitter and the collector. Since each sector
of the collector has $B$ sectors, we can put a single satellite in
each sector triangle and obtain a 1-planar drawing.

Conversely,   suppose there is a fan-crossing free drawing of $G_I$,
where the drawing is on the sphere. As in  \cite{GB-AGEFCE-07}, we
place $c_T$ at the North Pole and $c_C$ at the South Pole.
A \emph{meridian path} $M_i$ is a path of  five fat edges between
the centers of the transmitter and the collector through the
vertices $u_{3i}$ and $v_{Bi}$ on the boundaries, for $i=1,\ldots,
m$. Each meridian path also contains a unique path of length five
from $c_T$ to $c_C$. A \emph{splitter path} $SP_a = (c_T, a_T,
a'_T,c_a, a'_C, a_C, c_C)$ is a path of length six between the
centers of the transmitter and the collector via  the center of a
splitter $P_a$, such that   $a_T$ and $a_C$ are satellites of $P_a$
and $a'_T$ and $a'_C$ are vertices between $a_T$ and $c_a$ and $c_a$
and $c_C$, respectively. Note that there is a splitter path from
$c_C$ through $a_C$ and $c_a$ for $a$ vertices and there is a single
path from $c_T$ through $a_T$ to $c_a$.

From now on assume that we are given a fan-crossing free embedding
of $G_I$.
 First, observe, that
two meridian paths $P$ and $P'$ do not cross, i,e., there is no
crossing of two  edges $e$ and $e'$, where $e$ is in the subgraph
induced by $P$ and $e'$ is in the subgraph induced by $P'$.
 Edges $e$ and $e'$ belong to two distinct $K_5$, and a
fan-crossing is unavoidable by the uniqueness of a fan-crossing free
embedding of $K_5$ from Lemma~\ref{lem:uniqueK5K6} if $P$ and $P'$
cross.

Second, a meridian path $P$ and a splitter path $SP_a$ do not cross.
Towards a contradiction, suppose that edge $e$ of $SP_a$ crosses an
edge $f$ of some fat edge of $P$. Each splitter path must cross the
boundaries of the transmitter and  the collector, which needs a path
of length at least four. If, in addition, an edge of a $K_5$ is
crossed, then at least four more edge  is needed. However, splitter
paths are tight and have length six.

In consequence, we can follow the arguments given by Grigoriev and
Bodlaender. Two successive meridian paths $M_i$ and $M_{i+1}$ define
a face, so that there is a cyclic ordering of faces according to the
circuits of the boundaries of the transmitter and the collector. In
consequence, there is a unique way to draw the transmitter around
the North Pole, and similarly, there is a unique way to draw the
collector around the South Pole.

We now consider the drawing of splitters. Each splitter path has
length six. It takes a fan-crossing free path of length at least
three from the center of the transmitter to cross the the inner and
outer circuit including the diagonal, and similarly for the
collector. Hence, the center $c_a$ of a splitter must be placed
between the outer circuits from $c_T$ and $c_C$.

 In consequence, each satellite
is placed in a sector triangle and each connector in  an outer
triangle. Then there is at most one satellite in each sector
triangle; otherwise there is a fan crossing, since the edges to the
satellites are incident to the centers $c_T$ and $c_C$ of the
transmitter and collector, respectively.

Hence, for each region, we have exactly three splitters, which each
have one satellite in a sector triangle, and there is at most one
satellite in each of the $B$ sector triangles of the region (face)
of the collector. Since we have    $Bm$  such paths, each face must
contain exactly three splitters with exactly $B$ paths between the
splitter centers and the South Pole, which implies that the instance
$I$ has a solution.
%
\end{proof}

\begin{figure}[t]
   \centering
   \subfigure[the general schema ]{
      \rotatebox{270}{%
     \includegraphics[scale=0.45]{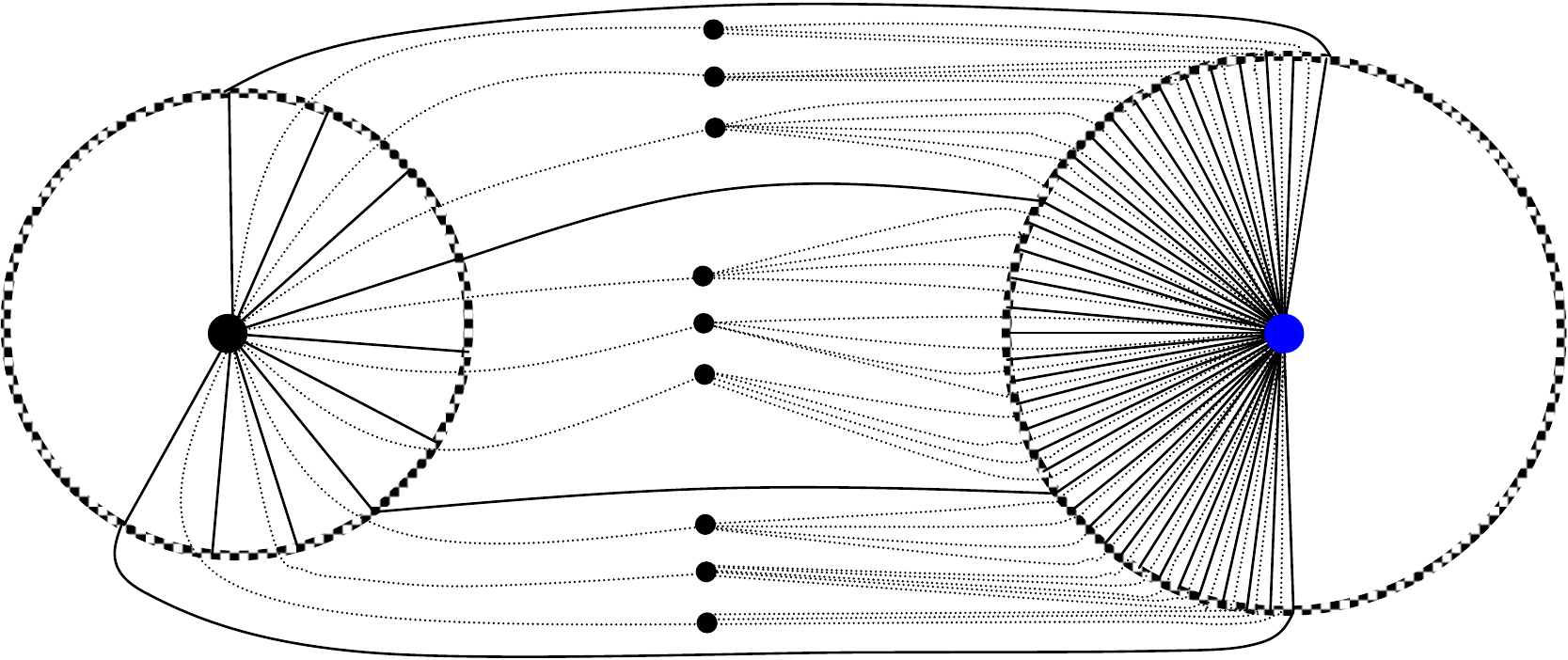}
   }
   }
   \quad
      \subfigure[a detailed view]{
     \includegraphics[scale=0.55]{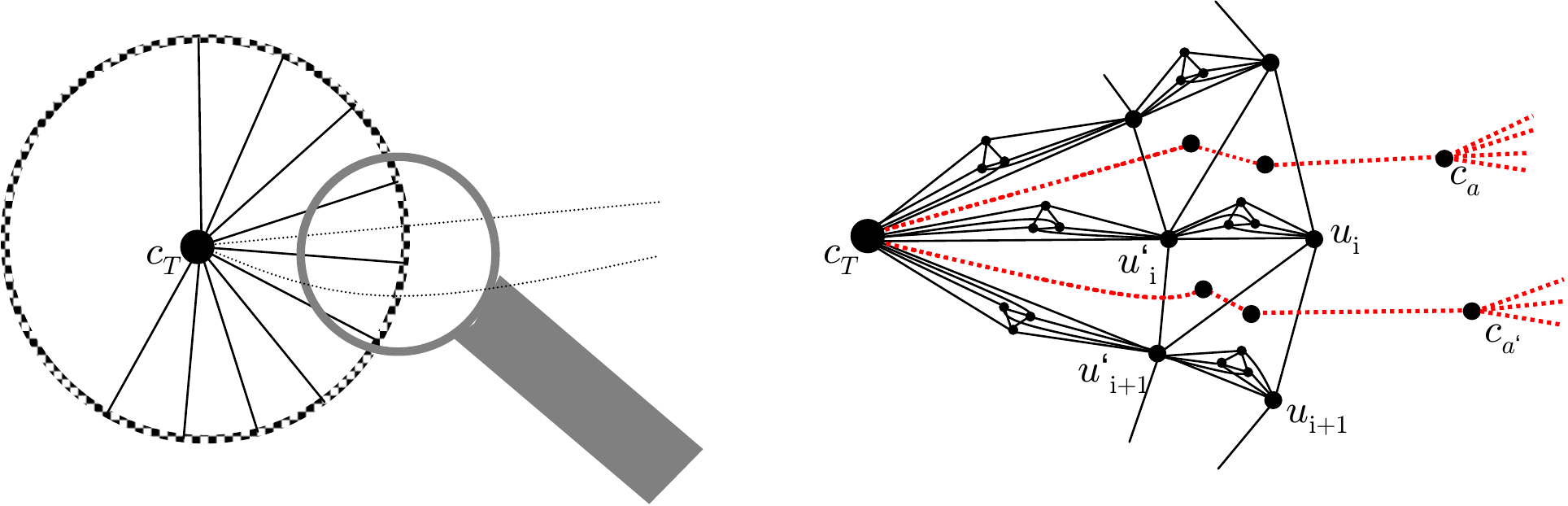}
   }

   \caption{Illustration of graph $G_I$ for the NP-reduction}
   \label{fig:NPreduction}
\end{figure}

The recognition problem for fan-crossing free graphs with a fixed
rotation system is also NP-complete. Here, a graph is given with a
rotation system describing the cyclic ordering of the edges incident
to each vertex. Therefore, we can modify the proof of
\cite{abgr-1prs-15} for the NP-completeness of   recognizing
1-planar graphs with a rotation system in a similar way as in the
proof of Theorem~\ref{thm:NP-fcf}.

\section{Conclusion}

In this work, we have shown that the  fan-crossing free graphs are
incomparable with the fan-crossing, 2-planar, and 1-gap-planar
graphs, respectively. There were not listed in the table of
relationships  in
\cite{dlm-survey-beyond-19}.  It remains to show that there are
fan-crossing free graphs that are not quasi-planar graphs. Moreover,
we have added fan-crossing free graphs to the list of beyond-planar
graph classes with an NP-complete recognition problem,  which  was
open so far \cite{dlm-survey-beyond-19}.

We have shown a closure resp. non-closure of graph classes under
   subdivision, node-to-circle expansion, and  path-addition, respectively, and
   have introduced the notion of universality, which is used for a
   specification of beyond-planarity.

 Many new graph classes can be defined by the intersection of two
 (or more) graph classes or by combining the respective properties
 in a drawing. Such "double" classes are largely unexplored, as remarked in \cite{dlm-survey-beyond-19}.
 Some facts are known about 1-planar and RAC graphs, where IC-planar
 combines both properties \cite{bdeklm-IC-16}. However, there are
 graphs that are 1-planar and RAC and are not IC-planar, for example,  $n$-vertex
tiles, which have too many edges for IC-planarity. There are
NIC-planar that are 1-planar and not RAC \cite{bbhnr-NIC-17}. On the
other hand, $n$-vertex RAC graphs with $4n-10$ edges
\cite{el-racg1p-13} and $n$-vertex fan-crossing free graphs with
$4n-8$ edges \cite{cpkk-fan-15} are 1-planar. There are non 1-planar graphs that are
both fan-crossing and fan-crossing free \cite{b-fan+fcf-18}. As last, the 2-planar
fan-crossing  graphs are characterized as 5-map graph if the
graphs are clique augmented \cite{b-5maps-19}.

 How shall we draw fan-crossing free graphs? What are their
 stack  and queue numbers? How are fan-crossing free graphs related to classes of
 graphs that are defined by visibility representations \cite{DEGLST-bkvg-07, ekllmw-b1vg-14}?
 There are partial answers to these questions for 1-planar graphs \cite{abk-sld3c-13,
 abk-bt1pg-15, bbkr-bt1planar-17, b-vr1pg-14, b-Tshape-18}.

\section{Acknowledgements}

I wish to thank Therese Biedl for point out the simple
box-visibility representation, and to Michael Bekos for drawing my
attention to the crossing number of cube-connected cycles.


\bibliographystyle{abbrv}
\bibliography{brandybibV8a}

\end{document}